\definecolor{black}{cmyk}{1,1,1,1}
\definecolor{blue}{cmyk}{1,1,0,0}
\theoremstyle{break}
\newtheorem{theorem}{Theorem}
\theoremstyle{plain}
\theoremstyle{break}
\newtheorem{lemma}{Lemma}
\newtheorem{proposition}{Proposition}
\newtheorem{corollary}{Corollary}
\theoremstyle{plain}
\newtheorem{remark}{Remark}
\theoremstyle{break}
\theoremstyle{break}
\newtheorem{definition}{Definition}
\theoremstyle{break}
\newtheorem{notation}{Notation}
\newtheorem{assumption}{Assumption}
\theoremstyle{nonumberplain}
\newtheorem{proof}{Proof}
\def\1{\mathbf 1}
\def\a{\Gamma}
\def\A{\mathfrak{A}}
\def\B{\mathrm B}
\def\BL{\mathcal B}
\def\Bil#1{\left<\B,\,#1\B\right>}
\def\C{C^{*}}
\def\cal{\mathcal{C}}
\def\CAR{\mathrm{CAR}}
\def\CP{\mathbb{C}}
\def\d{\mathrm d}
\def\e{\mathrm e}
\def\eqt#1{\texorpdfstring{#1}{}}
\def\E{\mathfrak e}
\def\fp{\mathfrak{p}(\H,\a)}
\def\Fs{\mathcal F}
\def\g{\mathfrak g}
\def\H{\mathcal H}
\def\h{\mathfrak h}
\def\ii{\mathrm i}
\def\Imm{\Im\mathfrak{m}}
\def\inner#1{\left< #1 \right>}
\def\L{\mathfrak L}
\def\LO{\mathcal L}
\def\N{\mathbb N}
\def\p{\partial}
\def\Pfz{\mathcal{P}_{\text{f}}(\Z^{d})}
\def\q{\mathfrak q}
\def\R{\mathbb R}
\def\Ree{\Re\mathfrak{e}}
\def\sCAR{\mathrm{sCAR}(\H,\a)}
\def\spec{\mathrm{spec}}
\def\spin{\mathfrak S}
\def\states{\mathfrak E}
\def\supp{\mathrm{supp}}
\def\tr{\mathrm{tr}}
\def\Tr{\mathrm{Tr}}
\def\x{\mathbf x}
\def\X{\mathcal X}
\def\y{\mathbf y}
\def\Z{\mathbb Z}
\begin{document}

\title{A $\Z_{2}$--Topological Index for Quasi--free Fermions}
\author[1]{N. J. B. Aza\thanks{Corresponding author: njavierbuitragoa@gmail.com}}
\author[1]{A. F. Reyes-Lega\thanks{anreyes@uniandes.edu.co}}
\author[1]{L. A. Sequera M.\thanks{la.sequera@uniandes.edu.co}}
\affil[1]{Departamento de F\'{i}sica, Universidad de Los Andes}

\date{\today}
\maketitle

\begin{abstract}
We use infinite dimensional self--dual $\CAR$ $\C$--algebras to study a $\Z_{2}$--index, which classifies free--fermion systems embedded on $\Z^{d}$ disordered lattices. Combes--Thomas estimates are pivotal to show that the $\Z_{2}$--index is uniform with respect to the size of the system. We additionally deal with the set of ground states to completely describe the mathematical structure of the underlying system. Furthermore, the weak$^{*}$--topology of the set of linear functionals is used to analyze paths connecting different sets of ground states.

\noindent \textbf{Keywords:} Operator Algebras, Disordered fermion systems, $\Z_{2}$--index, ground states.
\bigskip

\noindent \textbf{AMS Subject Classification:} 46L30, 46N55, 82B20, 82B44
\end{abstract}

%\section*{Conflict of interest statement}
%Authors declare no conflict of interests for this article.

%TCIMACRO{\TeXButton{\tableofcontents }{\tableofcontents}}%
%BeginExpansion
\tableofcontents
%EndExpansion

\section{Introduction}
A considerable number of mathematical results concerning gapped Hamiltonians of fermions has been achieved in recent years. Among the most important ones are topological protection under small perturbations and the persistence of the spectral gap for interacting fermions \cite{hastings2019stability,deRoecK_Salm}. We study a \emph{$\mathbb{Z}_{2}$--projection index} ($\Z_{2}$--PI), that is the one introduced long ago by Araki--Evans in their work where the two possible thermodynamic phases of the classical two-dimensional Ising model are characterized using operator algebras technologies \cite{araEva}. Here we deal with disordered \emph{free--fermion} systems on the lattice within the  mathematical framework  of \emph{self--dual $\CAR$ $\C$--algebras}. In particular, their structure is useful to study interacting fermion systems, even with superconducting terms \cite{LD1}. To be precise, the $\Z_{2}$--PI is defined in terms of well--defined \emph{basis} projections related to a self--adjoint operator, which typically is the \emph{Hamiltonian} of the system acting on a separable Hilbert space $\H$. See Definition \ref{def basis projection} below. Thus, the $\Z_{2}$--PI can be used to discriminate parity sectors in the set of quasi--free ground states of fermionic systems \cite{EK98,Varilly}.\\
A very important problem in this context is the classification of topological matter in general. The current classification scheme can be traced back to Dyson's~\cite{Dyson} classical work from 1962.  Of course  that work did not contemplate topological aspects for such systems, but it provided the setting on which more recent work has been based. Indeed, a completion of this early work was made by Altland and Zirnbauer \cite{Altland}, leading to the identification of new symmetry classes. These ideas were generalized by Kitaev \cite{Kitaev} and led to  a ``periodic table'' of topological insulators and superconductors. In that work, Kitaev showed how the classification can be achieved in terms of \emph{Bott periodicity} and \emph{$K$--theory}. More recently an exhaustive and complete version of the classification was made by Ryu et al. \cite{Ryu}. They explore arbitrary dimensions making use again of classifying spaces given by the \emph{Cartan symmetric spaces} along  with Bott periodicity in a more strong way. For further details we refer to the reader to see \cite{Hastings2011, Loring2010,Fiorenza2016,Katsura2018}). This allows them to consider disordered systems and shows the explicit relation between gapped Hamiltonians and Anderson localization phenomena, a very important result for this kind of problem.\\
The first iconic example of a topological fermionic system is the \emph{quantum Hall effect}. The observed quantization of the conductivity was explained by Thouless et al. \cite{TKNN} and led to the recognition of the important role played by the \emph{Chern number}. The restrictions on the validity of this result where eventually overcome by Bellissard \cite{Bellisard} and collaborators, in what was to become one of the main examples of applications of \emph{noncommutative geometry} to physics. This was a big step to deal with more realistic models that consider disordered media. In this line of ideas there are more recent works, due to Carey et al \cite{Carey1,Prodan,Bourne}, where Bellissard's techniques are generalized to deal with a wider class of  systems. \\
For interacting systems rigorous proofs of quantization of conductivity were provided in \cite{GMP, bachRoeckFrass}. These studies rely on the study  of families of \emph{gapped Hamiltonians}, such that any two elements on these families can be continuously deformed into one another. The latter was demonstrated rigorously by Bachmann, Michalakis, Nachtergaele and Sims \cite{bachmann2012automorphic} by studying spectral flow of quantum spin systems under a ``quasi--adiabatic'' evolution. They proved that such related systems verify the same \emph{Lieb--Robinson bounds} and in its thermodynamic limit the spectral flow has a \emph{cocycle structure} for the automorphism in the algebra of observables. By using the dual space of the underlying algebras considered they also studied the \emph{ground states} associated.\\
From the point of view of physics, fundamental properties of such systems are deduced from the study of the set of ground states in the thermodynamic limit and zero temperature. Relevant examples include electronic conduction problems (e.g. quantum Hall effect), or the study of different phases of matter. Nevertheless, knowledge of ground s-tates for concrete models is a huge challenge in general. This is due to the fact that there is no  general procedure to find the full set of ground states for specific systems. As far as we know, there are very few mathematical physics results about the existence of ground states, in contrast to the theoretical point of view, see \cite{araki1985ground,cha2018complete}. Instead, one generally verifies the existence of the \emph{ground state energy} for specific physical systems\footnote{Ground state energy can be understood as the states associated to the lowest energy of a physical system. For example, Giuliani and Jauslin use rigorous renormalization methods to prove the existence of the ground state energy for the \emph{bilayer graphene} \cite{giuliani2016ground}.}. \par
In this paper we focus on the study of $\Z_{2}$--PI for non--interacting fermion systems. We specifically deal with \emph{unique} ground states associated to families of gapped Hamiltonians. Note that there is an alternative form of the $\Z_{2}$--PI \eqref{eq:top_index} in terms of \emph{orthogonal complex structures} \cite{Varilly,EK98}. There, the index appears naturally in the proof of the \emph{Shale--Stinespring Theorem} and is related to the \emph{parity} of the \emph{fermionic Fock} ground states. In \cite{CalReyes}, this approach to the $\Z_{2}$--PI was used to study ground states for finite \emph{Kitaev chains} with different boundary conditions. More recently, for infinite \emph{translationally invariant} fermionic chains, Bourne and Schulz-Baldes classify ground states using orthogonal complex structures \cite{BouSBal}. Furthermore, Matsui \cite{matsui20} uses \emph{split--property} of infinite chains and its connection with the $\Z_{2}$--PI. Theorem \ref{theorem:states_algebras2} is related, to a certain extent, to the mentioned results. Notice, however, that  we do not require translational invariance nor are we making any assumptions about the spatial dimension of the system, as  required, for example, in \cite{matsui20}. Although there are many similarities with the $\Z_{2}$--index introduced in  \cite{BouSBal}, they rely on \emph{different} notions of spectral flow (see the remark in the introduction to \cite{BouSBal}). It would be interesting to relate the different approaches to the $\Z_{2}$--index (particularly in view of their generalizations to interacting systems).   An important point on which we focus in the present paper is in an analysis of the uniformity of the $\Z_{2}$--PI with respect to the size of the systems. Physically, $\Z_{2}$--PI will distinguish if a pair of Hamiltonians are in the same \emph{phase of matter} or not. Moreover, the $\Z_{2}$--PI is closely related with the one proposed by Kitaev \cite{kitaev2001}, which was introduced to distinguish the parity of states in \emph{quantum wires}. However, as already mentioned, our results consider \emph{any} physical dimension, and has the potential to be studied in the \emph{interacting} case. In fact, in \cite{AMR2} we will report on results about the \emph{stability} of the $\Z_{2}$--PI for \emph{weakly} interacting fermions. Observe that the technical tools in that case differ from the current study and other technologies such as Lieb--Robinson bounds and Renormalization Group Methods will be required. For example, we use similar techniques as in \cite{ogata2020,ogata2021} where are studied indexes on one--dimensional interacting quantum spin systems and \cite{deRoecK_Salm} where is proven the stability of the \emph{spectral gap} for weakly interacting fermions.\\
The scope of the current paper is to consider bounded Hamiltonians on separable Hilbert spaces. However, most of the statements can be extended for unbounded Hamiltonians as is done, for example, in \cite{brupedraLR,nachtergaele2018quasi}. Specifically, in \cite{nachtergaele2018quasi}, the authors consider \emph{quantum lattice systems} such that their results hold for well--defined interactions as well as unbounded \emph{zero--range} interactions (also known as \emph{one--site} interactions or \emph{on--site} terms). In \cite{AMR2} we will take into account unbounded zero--range interactions such that we can invoke general Theorems for fermions on lattices as the ones presented in \cite{brupedraLR}. Moreover, we will tackle all (even unbounded cases, not being limited to semibounded) one--particle Hamiltonians, similar to \cite{universal}. Our setting will be closely related to the one of \cite[Sect. 5]{LD1}. There, we efficiently estimated \emph{determinant bounds} and \emph{summabilities of two--point correlation functions} associated with covariances, which depend only on one--particle Hamiltonians. These will be pivotal in the study of analytic properties of \emph{generating functions}, which in turn describe \emph{all} the statistical properties of weakly interacting fermion systems on the lattice. Thus, our results will consider any $\beta\in\R^{+}\cup\{\infty\}$, crystal lattice $\Z^{d}$ with $d\in\N$, sufficiently weak interactions, and physical systems \emph{not} necessarily translationally invariant.\par
To conclude, our main results are Theorems \ref{theorem:states_algebras2} and \ref{theorem:groun_states}, as well as the set of Lemmata \ref{lemma:impor_res}--\ref{lemma:unit_oper2}: In Theorem \ref{theorem:states_algebras2} we prove that for a differentiable path of gapped self--dual Hamiltonians, a spectral flow automorphism can be constructed, which will allow to show that the relative $\Z_{2}$--PI of any two Hamiltonians along the path must be equal to one. In Theorem \ref{theorem:groun_states} we show how, for a given differentiable  path of Hamiltonians connecting gapped Hamiltonians with opposite index, the gap must close somewhere along the path. We furthermore analyse the structure of a family of ground states which is induced by the gap closing.\\
From the mathematical point of view, the first part of Theorem \ref{theorem:states_algebras2} is reminiscent of the interacting case \cite{bachmann2012automorphic}, however, we additionally state the $\Z_{2}$--PI result, discriminating if a pair of Hamiltonians are equivalent or not on infinite self--dual $\CAR$ $\C$--algebras. In particular, we have in mind differentiable families of operators $\{H_{s}\}_{s\in[0,1]}$, e. g., given by the differentiable operator $H_{s}\doteq (1-s)H_{0}+sH_{1}$, for $s\in[0,1]$, with $H_{0},H_{1}\in\BL(\H)$ bounded operators with the same spectral gap and acting on a separable Hilbert space $\H$. On the other hand, Theorem \ref{theorem:groun_states} deals with subsets of the \emph{ground states} set. For instance, open spectral gap ground states are considered. As a particular case of the general Theorem \ref{theorem:groun_states}, we prove that in the weak$^{*}$--topology, \emph{paths} connecting states in different topological components implies the existence of a Hamiltonian having $0$ as an eigenvalue.  \smallskip \\
The paper is organized as follows:
\begin{itemize}
 \item Section \ref{Section Gen unc as Grassmann int} presents the mathematical framework of $\CAR$ $\C$--algebras. We introduce self--dual $\CAR$ $\C$--algebras, which were introduced long ago by Araki in his elegant study of \emph{non}--interacting but \emph{non}--gauge invariant fermion systems. We recall pivotal properties of general $\CAR$ $\C$--algebras.
 \item In Section \ref{sec: main results} we state the main Theorems, as well as some relevant definitions concerning the $\Z_{2}$--PI and comment on the weak$^{*}$--topology of the set of states. In particular we discuss the conditions for a system to have pure or mixed states.
 \item Section \ref{sec: tech_proofs} is devoted to all technical proofs. We prove the existence of a \emph{spectral flow automorphism} for self--dual Hilbert spaces, for families of differentiable Hamiltonians. Then, the existence of strong limits for the dynamics, the spectral flow automorphism and the weak$^{*}$--convergence of ground states are proven. Well--known Combes--Thomas estimates are invoked for families of gapped Hamiltonians, which will permit to analyze two--point correlation functions such that we obtain the \emph{trace class} properties for relevant unitary operators.
 \item We finally include Appendix \ref{appendix: GAM}, providing a general framework of graphs with special attention to disordered models. Appendix \ref{appendix: fock} regards on basic statements of the Fock representation of $\CAR$. Appendix \ref{appendix: CAR} shows the explicit relation between the canonical $\CAR$ and self--dual $\CAR$ $\C$--algebras, and how their dynamics are related.
\end{itemize}
\begin{notation}\label{remark constant}
A norm on the generic vector space $\X$ is denoted by $\Vert \cdot \Vert _{\X}$ and the identity map of $\X$ by $\1_{\X}$. The space of all bounded linear operators on $(\X,\Vert \cdot \Vert _{\X}\mathcal{)}$ is denoted by $\BL(\X)$. The unit element of any algebra $\X$ is always denoted by $\mathfrak{1}$, provided it exists of course. The scalar product of any Hilbert space $\X$ is denoted by $\langle \cdot,\cdot\rangle_{\X}$ and $\tr_{\X}$ represents the usual trace on $\BL(\X)$. \end{notation}
\section{Mathematical Framework and Physical Setting}\label{Section Gen unc as Grassmann int}
We introduce the mathematical framework based on Araki's self--dual formalism \cite{A68,A70}. Our setting considers disorder effects, which come as is usual in physics, i.e., from impurities, crystal lattice defects, etc. Thus, disorder can modeled by (a) a random external potential, like in the celebrated Anderson model, (b) a random Laplacian, i.e., a self--adjoint operator defined by a next--nearest neighbor hopping term with random complex--valued amplitudes. In particular, random vector potentials can also be implemented.
\subsection{Self--dual \eqt{$\CAR$} Algebra}\label{Self--dual CAR Algebras}
If not otherwise stated, $\H$ always stands for a  (complex, separable) Hilbert space. If $\H$ is finite--dimensional, we will assume it is even--dimensional, i.e., $\dim\H\in2\N$. Let $\a\colon\H\to\H$ be a \emph{conjugation} or \emph{antiunitary involution} on $\H$, i.e., an antilinear operator such that $\a^{2}=\1_{\H}$ and\footnote{We will assume that the inner product $\inner{\cdot,\cdot}_{\H}\colon\H\times\H\to\CP$ associated to some Hilbert space $\H$ is a sesquilinear form on $\H$ such that is antilinear in its first component while is linear in the second one.}
\begin{equation*}
\inner{\a\varphi_{1},\a\varphi_{2}}_{\H}=\inner{\varphi_{2},\varphi_{1}}_{\H},\qquad \varphi_{1},\varphi_{2}\in\H.
\end{equation*}
The space $\H$ endowed with the involution $\a$ is named a \emph{self--dual Hilbert space}, $(\H,\a)$, and yields \emph{self--dual $\CAR$ algebra}:
\begin{definition}[Self--dual $\CAR$ algebra]
\label{def Self--dual CAR Algebras}
A self--dual $\CAR$ algebra $\sCAR\equiv (\sCAR,+,\cdot ,*)$ is a $\C$--algebra generated by a unit $\mathfrak{1}$ and a family $\{\B(\varphi )\}_{\varphi \in \H}$ of elements satisfying Conditions \ref{enum.scar1}.--\ref{enum.scar3}.:
\begin{enumerate}
 \item\label{enum.scar1} The map $\varphi \mapsto \B\left(\varphi\right) ^{*} $ is (complex) linear.
 \item\label{enum.scar2} $\B(\varphi)^{*}=\B(\a(\varphi))$ for any $\varphi \in \H$.
 \item\label{enum.scar3} The family $\{\B(\varphi )\}_{\varphi\in\H}$ satisfies the $\CAR$: For any $\varphi_{1},\varphi _{2}\in\H$,
\begin{equation}
\B(\varphi _{1})\B(\varphi_{2})^{*}+\B(\varphi_{2})^{* }\B(\varphi_{1})=\left\langle \varphi _{1},\varphi_{2}\right\rangle_{\H}\,\mathfrak{1}.  \label{CAR Grassmann III}
\end{equation}
\end{enumerate}
\end{definition}
For a historic overview on self--dual $\CAR$ algebras and some of their basic properties see \cite{A68,A70,A87,A88,EK98}. Note that by the $\CAR$ \eqref{CAR Grassmann III}, the antilinear map $\varphi \mapsto\B\left( \varphi \right) $ is necessarily injective and contractive. Therefore, $\H$ can be embedded in $\sCAR$. \par
Conditions \ref{enum.scar1}.--\ref{enum.scar3}. of Definition \ref{def Self--dual CAR Algebras} only define self--dual $\CAR$ algebras up to Bogoliubov $^{*}$-- automorphisms\footnote{An analogous result for $\CAR$ algebra is, for instance, given by \cite[Theorem 5.2.5]{BratteliRobinson}.} (see \eqref{Bogoliubov automorphism}). In \cite{LD1}, an explicit construction of $^{*}$--isomorphic self--dual $\CAR$ algebras from $\H$ and $\a$ is presented. This is done via \emph{basis projections} \cite[Definition 3.5]{A68}, which highlight the relationship between $\CAR$ algebras and their self--dual counterparts.
\begin{definition}[Basis projections]\label{def basis projection}
A basis projection associated with $(\H,\a)$ is an orthogonal projection $P\in\BL(\H)$ satisfying $\a P\a=P^{\bot}\equiv\1_{\H}-P$. We denote by $\h_{P}$ the range $\mathrm{ran}(P)$ of the basis projection $P$. The set of all basis projections associated with $(\H,\a)$ will be denoted by $\fp$.
\end{definition}
For simplicity, in the rest of this section, we will assume that $\H$ is finite--dimensional with even size: $\dim\H\in 2\N$. For any $P\in\fp$ a few remarks are in order:\\
$\h_{P}$ must satisfy the conditions
\begin{equation}
\a(\h_{P})=\h_{P}^{\bot }\qquad\text{and}\qquad\a(\h_{P}^{\bot })=\h_{P}.
\label{frakA and perp}
\end{equation}
Then, by \cite[Lemma 3.3]{A68}, an explicit $P\in\fp$ can always be constructed. Moreover, $\varphi\mapsto (\a\varphi)^{*}$ is a unitary map from $\h_{P}^{\bot }$ to the dual space $\h_{P}^{* }$. In this case we can identify $\H$ with
\begin{equation}
\H\equiv\h_{P}\oplus\h_{P}^{*}\label{definition H bar}
\end{equation}
and
\begin{equation}
\B\left(\varphi\right)\equiv\B_{P}(\varphi)\doteq\B\left(P\varphi\right) +\B\left(\a P^{\bot}\varphi\right)^{*}.  \label{map iodiote}
\end{equation}
Therefore, there is a natural isomorphism of $\C$--algebras from $\sCAR$ to the $\CAR$ algebra $\CAR(\h_{P})$ generated by the unit $\mathfrak{1}$ and $\{\B_{P}(\varphi )\}_{\varphi \in \h_{P}}$. In other words, a basis projection $P$ can be used to \emph{fix} so--called \emph{annihilation} and \emph{creation} operators. For each basis projection $P$ associated with $(\H,\a)$, by \eqref{definition H bar}, $\h_{P}$ can be seen as a one--particle Hilbert space.\\
Self--dual $\CAR$ algebras naturally arise in the diagonalization of quadratic fermionic Hamiltonians (Definition \ref{def trace state copy(1)}), via Bogoliubov transformations defined as follows \cite{A68,A70}:\\
For any unitary operator $U\in\BL(\H)$ such that $U\a=\a U$, the family of elements $\B(U\varphi)_{\varphi\in\H}$ satisfies Conditions (a)--(c) of Definition \ref{def Self--dual CAR Algebras} and, together with the unit $\mathfrak{1}$, generates $\sCAR$. Like in \cite[Section 2]{A70}, such a unitary operator $U\in \BL(\H)$ commuting with the antiunitary map $\a$ is named a \emph{Bogoliubov transformation}, and the unique $^{*}$--automorphism $\mathbf{\chi}_{U}$ such that
\begin{equation}
\mathbf{\chi}_{U}\left( \B(\varphi )\right) =\B(U\varphi ),\qquad \varphi \in \H,
\label{Bogoliubov  automorphism}
\end{equation}
is called in this case a \emph{Bogoliubov} $^{*}$\emph{--automorphism}. Note that a Bogoliubov transformation $U\in\BL(\H)$ always satisfies
\begin{equation}
\det\left(U\right)=\det\left(\a U\a\right)=\overline{\det\left(U\right)}=\pm1
\label{orientation}
\end{equation}
If $\det\left(U\right)=1$, we say that $U$ is in the \emph{positive} connected set $\mathfrak{U}_{+}$. Otherwise $U$ is said to be in the \emph{negative} connected set $\mathfrak{U}_{-}$. $\mathbf{\chi}_{U}\left( \B(\varphi )\right)$ is said to be \emph{even} (respectively \emph{odd}) if and only if $U\in\mathfrak{U}_{+}$ (respectively $U\in\mathfrak{U}_{-}$). \\
Clearly, if $P\in\fp$, see Definition \ref{def basis projection}, and $U\in \BL(\H)$ is a Bogoliubov transformation, then $P_{U}\doteq U^{*}PU$ is another basis projection. Conversely, for any pair $P_{1},P_{2}\in\fp$ there is a (generally not unique) Bogoliubov transformation $U$ such that $P_{2}=U^{*}P_{1}U$. See \cite[Lemma 3.6]{A68}. In particular, Bogoliubov transformations map one--particle Hilbert spaces onto one another.\par
Considering the Bogoliubov $^{*}$--automorphism \eqref{Bogoliubov  automorphism} with $U=-\1_{\H}$, an element $A\in\sCAR$, satisfying
\begin{equation}\label{eq:even odd}
\mathbf{\chi}_{-\1_{\H}}(A)=
\begin{cases}
   \quad A&\text{is called \emph{even}},\\
   -A&\text{is called \emph{odd}},
\end{cases}
\end{equation}
Note that the subspace $\sCAR^{+}$ of even elements is a sub--$\C$--algebra of $\sCAR$.\\
It is well--known that in quantum mechanics the even elements are the ones suitable for the description of fermion systems. For example, self--adjoint (even) elements of the $\CAR$ algebra which are quadratic in the creation and annihilation operators are used, for instance, in the Bogoliubov approximation of the
celebrated (reduced) BCS model. In the context of self--dual $\CAR$ algebra, those elements are called \emph{bilinear Hamiltonians} and are self--adjoint bilinear elements:
\begin{definition}[Bilinear elements of self--dual CAR algebra]
\label{def trace state copy(1)}
Given an orthonormal basis $\{\psi_{i}\}_{i\in I}$ of $\H$, we define the bilinear element associated with $H\in \BL(\H)$ to be
\begin{align*}
\Bil{H} \doteq \sum\limits_{i,j\in I}\left\langle \psi _{i},H\psi _{j}\right\rangle_{\H}\B\left(\psi_{j}\right)\B\left(\psi_{i}\right)^{*}.
\end{align*}
\end{definition}
Note that $\Bil{H}$ \emph{does not depend} on the particular choice of the orthonormal basis, but does depend on the choice of generators $\{\B(\varphi )\}_{\varphi \in \H}$ of the self--dual $\CAR$ algebra $\sCAR$, and by \eqref{CAR Grassmann III}, bilinear elements of $\sCAR$ have adjoints equal to
\begin{equation}
\Bil{H}^{*}=\langle \B,H^{*}\B\rangle,\qquad H\in \BL(\H).\label{selfadjoint bilinear}
\end{equation}
\emph{Bilinear Hamiltonians} are then defined as bilinear elements associated with \emph{self--adjoint} operators $H=H^{*}\in \BL(\H)$. They include all second quantizations of one--particle Hamiltonians, but also models that are \emph{not gauge invariant}. Important models in condensed matter physics, like in the BCS theory of superconductivity, are bilinear Hamiltonians that are \emph{not} gauge invariant.\\
Without loss of generality (w.l.o.g.), our analysis of bilinear elements can be restricted to operators $H\in \BL(\H)$ satisfying $H^{* }=-\a H\a$, which, in particular, have zero trace, i.e., $\tr_{\H}\left( H\right)=0$\footnote{Recall that for any separable Hilbert space $\H$, $A\in\BL(\H)$ and any orthonormal basis $\{\psi_{i}\}_{i\in I}$ of $\H$ the trace of $A$, $\tr_{\H}(A)\doteq\sum\limits_{i\in I}\inner{\psi_{i},A\psi_{i}}_{\H}$, does not depend of the choice of the orthonormal basis.}. We call such operators \emph{self--dual operators}:
\begin{definition}[Self--dual operators]\label{def one particle hamiltinian}
A self--dual operator on $(\H,\a)$ is an operator $H\in\BL(\H)$ satisfying the equality $H^{*}=-\a H\a$. If, additionally, $H$ is self--adjoint, then we say that it is a self--dual Hamiltonian on $(\H,\a)$.
\end{definition}
We say that the basis projection $P$ (Definition \ref{def basis projection}) (block--) ``diagonalizes'' the self--dual operator $H\in \BL(\H)$ whenever
\begin{equation}
H=\frac{1}{2}\left( PH_{P}P-P^{\bot }\a H_{P}^{* }\a P^{\bot }\right),\qquad \text{with}\qquad H_{P}\doteq 2PHP\in \BL(\h_{P}).  \label{kappabisbiskappabisbis}
\end{equation}
In this situation, we also say that the basis projection $P$\ diagonalizes $\Bil{H} $, similarly to \cite[Definition 5.1]{A68}.\\
By the spectral theorem, for any self-dual \emph{Hamiltonian} $H$ on $(\H,\a)$, there is always a basis projection $P$ diagonalizing $H$. In quantum physics, as discussed in Section \ref{Self--dual CAR Algebras}, $\h_{P}$ is in this case the \emph{one--particle Hilbert space} and $H_{P}$ the \emph{one--particle Hamiltonian}.
\subsection{Quasi--Free Dynamics}\label{subsec: dyna_states}
Bilinear Hamiltonians are used to define so-called \emph{quasi--free} dynamics: For any $H=H^{*}\in\BL(\H)$, we define the continuous group $\{\tau _{t}\}_{t\in {\R}}$ of $^{*}$--automorphisms of $\sCAR$ by
\begin{equation}\label{eq:autoSCAR}
\tau _{t}(A)\doteq \e^{-\ii t\Bil{H}}A\e^{\ii t\Bil{H}}\ ,\qquad A\in \mathrm{%
sCAR}(\H,\a),\ t\in\R.
\end{equation}
Provided $H$ is a self--dual Hamiltonian on $(\H,\a)$ (Definition \ref{def one particle hamiltinian}), this group is a quasi--free dynamics, that is, a strongly continuous group of Bogoliubov $^{*}$--automorphisms, as defined in Equation \eqref{Bogoliubov automorphism}. Straightforward computations using Definitions \ref{def Self--dual CAR Algebras} and \ref{def trace state copy(1)}, together with the properties of the antiunitary involution $\a$, lead to show that
\begin{equation}\label{eq:dyn_bogou}
\exp \left( -\frac{z}{2}\Bil{H}\right)\B\left(\varphi\right) ^{* }\exp \left( \frac{z}{2}\langle\B,H\B\rangle \right)=\B\left( \e^{zH}\varphi \right)^{* },
\end{equation}
even for any self--dual operator $H$ on $(\H,\a)$, all $z\in \CP$ and $\varphi \in \H$.\\
Moreover, for $\{\tau_{t}\}_{t\in\R}$, we define the linear subspace
\begin{equation}\label{eq:domain_sCAR}
\mathcal{D}(\delta)\doteq\{A\in\sCAR\colon t\mapsto\tau_{t}(A)\text{ is differentiable at }t=0\}\subset\sCAR
\end{equation}
and the linear operator (unique, generally unbounded) $\delta\colon\mathcal{D}\to\sCAR$ by
\begin{equation}\label{eq:gener_sCAR}
\delta(A)\doteq\frac{\d\tau_{t}(A)}{\d t}\Big|_{t=0}.
\end{equation}
The operator $\delta$ is called the generator of $\tau$ and $\mathcal{D}(\delta)$ is the (dense) domain of definition of $\delta$. Here we will assume that $\delta$ is a symmetric unbounded derivation, i.e., the domain $\mathcal{D}(\delta)$ of $\delta$ is a dense $^{*}$--subalgebra of $\A$ and, for all $A,B\in\mathcal{D}(\delta)$,
\begin{align*}
\delta(A)^{*}=\delta(A^{*}),\qquad\delta(AB)=\delta(A)B+A\delta(B).
\end{align*}
Note that the set of all symmetric derivations on $\mathcal{D}(\delta)$ can be endowed with a real vector space structure. In fact, for any symmetric derivations $\delta _{1}$ and $\delta _{2}$ and all real numbers $\alpha _{1},\alpha _{2}$, the expression
$$
\left(\alpha_{1}\delta_{1}+\alpha_{2}\delta_{2}\right)\left(A\right)\doteq\alpha_{1}\delta_{1}\left(A\right)+\alpha_{2}\delta_{2}\left(A\right),\qquad A\in\mathcal{D}(\delta),
$$
gives rise to another symmetric derivation $\alpha_{1}\delta_{1}+\alpha_{2}\delta _{2}$ on $\mathcal{D}(\delta)$.
\subsection{States}\label{sec:states}
A linear functional $\omega\in\sCAR^{*}$ is a ``state'' if it is positive and normalized, i.e., if for all $A\in\sCAR,\omega(A^{*}A)\geq0$ and $\omega(\mathfrak{1})=1$. In the sequel, $\states\subset\sCAR^{*}$ will denote the set of all states on $\sCAR$. Note that any $\omega\in\states$ is \emph{Hermitian}, i.e., for all $A\in\sCAR,\omega(A^{*})=\overline{\omega(A)}$. $\omega\in\states$ is said to be ``faithful'' if $A=0$ whenever $A\geq 0$ and $\omega(A)=0$. Since $\sCAR$ is a unital $\C$--algebra, $\states$ is a \emph{weak$^{*}$--compact} convex set, such that its \emph{extremal} points coincide with the \emph{pure} states \cite[Theorem 2.3.15]{BratteliRobinsonI}. The latter, combined with the fact that $\sCAR$ is separable allows to claim that the set of states $\states$ is metrizable in the weak$^{*}$--topology \cite[Theorem 3.16]{rudin}. Note that the existence of extremal points is a consequence of the \emph{Krein--Milman Theorem}. More specifically, if $\mathtt{E}(\states)$ denotes the set of extremal points of $\states$,
$$
\states=\mathrm{cch}\left(\mathtt{E}\left(\states\right)\right),
$$
where, for $\X$ a Topological Vector Space and $A\subset\X$, $\mathrm{cch}(A)$ refers to the \emph{closed convex hull} of $A$.\\
A state $\omega\in \states$ is said to be a \emph{pure state} if $\omega\in\mathtt{E}(\states)$. By definition, a \emph{mixed state} is a state that is not pure. Notice that a pure state can be characterized as a state which cannot be written as a convex linear combination of two different states. Notice in particular that if  $\omega\in\states$ is a state of the form
\begin{equation}\label{eq:convex_state}
\omega=\sum_{j=1}^{m}\lambda_{j}\omega_{j},
\end{equation}
where $\{\omega_{j}\}_{j=1}^{m}\in\mathtt{E}(\states)$, $m\in\N$, and $\lambda_{j}\in[0,1]$ for $j\in\{1,\ldots,m\}$, with $\sum\limits_{j=1}^{m}\lambda_{j}=1$. Then, if $\omega$ is pure, it necessarily follows that $\omega=\omega_{1}=\cdots=\omega_{m}$.
As is usual, for the state $\omega\in\states$ on $\sCAR$, $(\H_{\omega},\pi_{\omega},\Omega_{\omega})$ denotes its associated \emph{cyclic} representation: $H_{\omega}$ is the Hilbert space associated to $\omega$, and is given by the closure of (the linear span) of the set $\left\{ \pi _{\omega}(A)\Omega _{\omega}\colon A\in \sCAR\right\}$\footnote{For the Topological Vector Space $\X$, $\overline{\X}$ denotes its closure.},
$$
\H_{\omega}=\overline{\pi_{\omega}\left(\sCAR\right)\Omega_{\omega}},
$$
i.e., $\H_{\omega}$ is a Hilbert space with scalar product $\inner{\cdot,\cdot}_{\H_{\omega}}$, $\pi _{\omega}$ a representation from $\sCAR$ into $\BL(\H_{\omega})$, the set of bounded operators acting on $\H_{\omega}$, and $\Omega _{\omega}\in\H_{\omega}$ is a \emph{unit} cyclic vector with respect to $\pi _{\omega}(\sCAR)$. More specifically, for all $A\in\sCAR$ we write
\begin{equation}\label{eq:GNS_state}
\omega(A)=\inner{\Omega_{\omega},\pi_{\omega}(A)\Omega_{\omega}}_{\H_{\omega}}.
\end{equation}
$(\H_{\omega},\pi_{\omega},\Omega_{\omega})$ is the so--called \emph{GNS construction}, which is unique up to unitary equivalence.\\
If the state $\omega\in\states$ is mixed, see Expression \eqref{eq:convex_state}, its associated representation $(\H_{\omega},\pi_{\omega})$ is reducible, that is, it can be decomposed as a direct sum $\pi_{\omega}=\bigoplus\limits_{j\in J}
\pi_{\omega_{j}}$ on $\H_{\omega}=\bigoplus\limits_{j\in J}\H_{\omega_{j}}$. Here, $\{\H_{j}\}_{j\in J}$ is a countable family of \emph{orthogonal} Hilbert spaces, by meaning that for two different Hilbert spaces $\H_{1}$ and $\H_{2}$ of $\{\H_{j}\}_{j\in J}$, $\inner{\varphi_{1},\varphi_{2}}_{\H}=0$ for all $\varphi_{1}\in\H_{1}$ and all $\varphi_{2}\in\H_{2}$. The set $\{\pi_{\omega_{j}}\}_{j\in J}$ are representations of $\sCAR$ on proper subspaces of $\H_{\omega}$. In particular if $\omega$ is pure, its representation $(\H_{\omega},\pi_{\omega})$ is irreducible and $\omega$ is an extremal point $\mathtt{E}(\states)$ of the set of states on $\sCAR$.\\
States $\omega\in\states$ are said to be \emph{quasi--free} when, for all $N\in\N_{0}$ and $\varphi_{0},\ldots ,\varphi _{2N}\in \H$,
\begin{equation}
\omega\left(\B\left(\varphi_{0}\right)\cdots\B\left(\varphi_{2N}\right)\right)=0,  \label{ass O0-00}
\end{equation}
while, for all $N\in\N$ and $\varphi_{1},\ldots,\varphi_{2N}\in\H$,
\begin{equation}
\omega \left(\B\left(\varphi _{1}\right) \cdots \B\left(\varphi _{2N}\right) \right)=\mathrm{Pf}\left[\omega \left( \mathbb{O}_{k,l}\left( \B(\varphi _{k}),\B(\varphi_{l})\right)\right)\right]_{k,l=1}^{2N},  \label{ass O0-00bis}
\end{equation}
where
\begin{equation*}
\mathbb{O}_{k,l}\left(A_{1},A_{2}\right)\doteq
\left\{
\begin{array}{ccc}
A_{1}A_{2} & \text{for} & k<l,\\
-A_{2}A_{1} & \text{for} & k>l,\\
0 & \text{for} & k=l.
\end{array}\right.
\end{equation*}
In Equation \eqref{ass O0-00bis}, $\mathrm{Pf}$ is the usual Pfaffian defined by
\begin{equation}
\mathrm{Pf}\left[M_{k,l}\right]_{k,l=1}^{2N}\doteq \frac{1}{2^{N}N!}\sum_{\pi \in \mathcal{S}_{2N}}\left(-1\right)^{\pi}\prod\limits_{j=1}^{N}M_{\pi \left( 2j-1\right) ,\pi\left(2j\right)}\label{Pfaffian}
\end{equation}
for any $2N\times 2N$ skew--symmetric matrix $M\in \mathrm{Mat}\left(2N,\CP\right)$. Note that \eqref{ass O0-00bis} is equivalent to the definition given either in \cite[Definition 3.1]{A70} or in \cite[Equation (6.6.9)]{EK98}.\\
Quasi--free states are therefore particular states that are uniquely defined by two--point correlation functions, via \eqref{ass O0-00}--\eqref{ass O0-00bis}. In fact, a quasi--free state $\omega$ is uniquely defined by its so--called \emph{symbol}, that is, a positive operator $S_{\omega}\in\BL(\H)$ such that
\begin{equation}
0\leq S_{\omega }\leq \1_{\H}\qquad\text{and}\qquad S_{\omega}+\a S_{\omega}\a=\1_{\H},
\label{symbol}
\end{equation}
through the conditions
\begin{equation}
\left\langle \varphi _{1},S_{\omega}\varphi_{2}\right\rangle_{\H}=\omega\left(\B(\varphi _{1})\B(\varphi_{2})^{*}\right),\qquad\varphi_{1},\varphi_{2}\in\H.\label{symbolbis}
\end{equation}
Conversely, any self--adjoint operator satisfying \eqref{symbol} uniquely defines a quasi--free state through Equation \eqref{symbolbis}. In physics, $S_{\omega} $ is related to the \emph{one--particle density matrix} of the system. Note that any basis projection associated with $(\H,\a)$ can be seen as a symbol of a quasi--free state on $\sCAR$. Such state is pure and called a \emph{Fock state} \cite[Lemma 4.3]{A70}. Araki shows in \cite[Lemmata 4.5--4.6]{A70} that any quasi--free state can be seen as the restriction of a quasi--free state on $\mathrm{sCAR}(\H\oplus\H,\a\oplus (-\a))$, the symbol of which is a basis projection associated with $(\H\oplus \H,\a\oplus (-\a))$. This procedure is called \emph{purification} of the quasi--free state.\\
Quasi--free states obviously depend on the choice of generators of the self--dual CAR algebra. Another example of a quasi--free state is provided by the tracial state:
\begin{definition}[Tracial state]\label{def trace state}
The tracial state $\tr_{\A}\in\states$ is the quasi--free state with symbol $S_{\tr}\doteq \frac{1}{2}\1_{\H}$.
\end{definition}
The tracial state can be used to highlight the relationship between quasi--free states and bilinear Hamiltonians. In fact, one can show, c.f. \cite{LD1}, that for any $\beta \in (0,\infty)$ and any self--dual Hamiltonian $H$ on $(\H,\a)$ the positive operator $(1+\e^{-\beta H})^{-1}$ satisfies Condition \eqref{symbol} and is the symbol of a quasi--free state $\omega_{H}^{(\beta)}$ satisfying
\begin{equation}
\omega_{H}^{(\beta)}(A)=\frac{\tr_{\A}\left(A\exp\left(\frac{\beta }{2}\langle\B,H\B\rangle\right) \right)}{\tr_{\A}\left( \exp\left(\frac{\beta}{2}\Bil{H}\right)\right)},\qquad A\in \sCAR.\label{Gibbs states}
\end{equation}
The state $\omega_{H}^{(\beta)}\in\states$ is named the $(\tau_{t},\beta)$--\emph{Gibbs state}, thermal equilibrium state, or \emph{KMS}--state, associated with the self--dual (one--particle) Hamiltonian $H$ on $(\H,\a)$ at fixed $\beta\in(0,\infty)$. As is usual, we call to the parameter $\beta\in(0,\infty)$ the \emph{inverse (non--negative) temperature} of a physical system. Note that, given $H\in\BL(\H)$, we also can define two particular quasi--free states $\omega_{H}^{(0)}$ and $\omega_{H}^{(\infty)}$, which satisfy \eqref{Gibbs states} for the convergent sequence $\left\{\beta_{n}\right\}_{n\in\N}\subset\R_{0}\cup\{\infty\}$ to a $\beta\subset\R_{0}\cup\{\infty\}$. The former case is closely related with the tracial state in Definition \ref{def trace state}, and corresponds to the infinite temperature. Namely, the state at $\beta=\lim\limits_{n\to\infty}\beta_{n}=0$ is known as trace state or chaotic state. This particular name comes from the fact that physically it corresponds to the state of maximal entropy which occurs at infinite temperature. Its uniqueness is a well--known property. On the other hand, states at $\beta=\lim\limits_{n\to\infty}\beta_{n}=\infty$ are also thermal equilibrium states. More generally, these are defined by:
\begin{definition}[Ground state]\label{def: ground state}
Let $\omega\in\states$ be a state on $\sCAR$ and let $H\in\BL(\H)$ be a self--dual Hamiltonian on $(\H,\a)$. We say that $\omega\equiv\omega_{H}^{(\infty)}$ is a \emph{ground state} if it satisfies
$$
\ii\omega(A^{*}\delta(A))\geq0,
$$
for all $A\in\mathcal{D}(\delta)$. Here $\delta$ is the generator with domain $\mathcal{D}(\delta)$, of the continuous group $\{\tau_{t}\}_{t\in\R}$ of $^{*}$--automorphisms of $\sCAR$ given by \eqref{eq:autoSCAR}.
\end{definition}
From now on, we will denote by $\states^{(\beta)}\in\states$ the set of all KMS states at inverse temperature $\beta\in\R_{0}^{+}\cup\{\infty\}$ associated to the self--dual Hamiltonian $H$ on $(\H,\a)$. A few of remarks regarding $\states^{(\beta)}$ are discussed:\\
To lighten the notation, in the sequel when we refer to the KMS state $\omega_{H}^{(\beta)}$ we will omit any mention of the dependence on $H$, i.e., $\omega_{H}^{(\beta)}\equiv\omega^{(\beta)}$. For $\beta\in\R^{+}\cup\{\infty\}$, $\omega^{(\beta)}\in\states^{(\beta)}$ is $\tau$ invariant or stationary, i.e., $\omega^{(\beta)}\circ\tau=\omega^{(\beta)}$. See \cite[Propositions 5.3.3 and 5.3.19]{BratteliRobinson}. In contrast, the tracial case $\beta=0$ not necessarily is. Then, for $\beta\in\R^{+}\cup\{\infty\}$, $\omega\equiv\omega^{(\beta)}$, there is a strongly continuous one--parameter unitary group $\left(\e^{\ii t\mathcal{L}_{\omega}}\right)_{t\in\R}$ with generator $\mathcal{L}_{\omega}=\mathcal{L}_{\omega}^{*}$ satisfying $\e^{\ii t\mathcal{L}_{\omega}}\Omega_{\omega }=\Omega _{\omega}$ such that for any $t\in\R$
\begin{equation*}
\pi _{\omega }\left( \tau _{t}\left( A\right) \right)=\e^{-\ii t\mathcal{L}_{\omega}}\pi _{\omega}\left( A\right) \e^{\ii t\mathcal{L}_{\omega}}\qquad\text{and}\qquad\e^{\ii t\mathcal{L}_{\omega}}\in\pi_{\omega}(\sCAR)''.
\end{equation*}
If any $A\in\mathcal{D}(\delta)\subseteq\sCAR$,
\begin{equation*}
\pi_{\omega}(A)\Omega _{\omega }\in \mathcal{D}(\mathcal{L})\qquad \text{and}\qquad \mathcal{L}\left(\pi_{\omega }(A)\Omega _{\omega }\right)=\pi_{\omega}(\delta \left(A\right))\Omega_{\omega}.
\end{equation*}
If $\omega$ is a ground state, then the generator satisfies $\mathcal{L}_{\omega}\geq0$.\\
For $\beta\in\R^{+}$, the set $\states^{(\beta)}\in\states$ forms a weak$^{*}$--compact convex set that also is a simplex\footnote{This is true because one can show that the set of KMS $\states^{(\beta)}\subset\states$ forms a base of the \emph{cone}  which is also a \emph{lattice} \cite[Chapter 4]{BratteliRobinsonI}.}, while the set of ground states or KMS states at inverse temperature $\infty$, $\states^{(\infty)}\subset\states$, forms a face $\mathcal{F}$, i.e., a subset of a compact convex set $\mathcal{K}$ such that if there are finite linear combinations
$$
\omega=\sum_{j=1}^{n}\lambda_{i}\omega_{i}\qquad\text{with}\qquad\sum_{j=1}^{n}\lambda_{j}=1
$$
of elements $\{\omega_{j}\}_{j=1}^{n}\in\mathcal{K}$ and $\omega\in\mathcal{F}$, then $ \{\omega_{j}\}_{j=1}^{n}\in\mathcal{F}$.\\
Let $A$ be a self--dual operator on $(\H,\a)$, such that $E_{\Sigma}(A)\doteq\chi_{\Sigma}(A)$ defines the \emph{spectral projection} of $A$ on the Borel set $\Sigma\subset\R$. Here, $\chi_{\Sigma}\colon\Sigma\to\{0,1\}$ is the so--called characteristic function on $\Sigma\subset\R$, with $\chi_{\Sigma}^{2}=\chi_{\Sigma}$. For $H$, a self--adjoint Hamiltonian on $(\H,\a)$, i.e., $H=-\a H\a$, we denote by $E_{0},\,E_{-}$ and $E_{+}$, the restrictions of the spectral projections of $H$ on $\{0\}$, the negative real numbers $\R^{-}$ and the positive real numbers $\R^{+}$, respectively. Using functional calculus we note that
\begin{equation*}
%\label{eq:Ham_spec}
H=\int_{\spec(H)}\lambda\d E_{\lambda}=\int_{\R}\lambda\d E_{\lambda},
\end{equation*}
where $\spec(H)$ denotes the spectrum of $H$. Thus, one verifies that
\begin{equation}\label{eq:sum_basis_proj}
\a E_{\lambda}\a=E_{-\lambda}\qquad\text{for all}\qquad \lambda\in\R\qquad\text{and}\qquad E_{0}+E_{-}+E_{+}=\1_{\H}.
\end{equation}
In particular, we have $\a E_{0}\a=E_{0}$. However, we strongly will assume throughout this paper that $E_{0}=0$ so that the ground state is \emph{unique}. For details see \cite{araki1985ground}[Theorems 3 and 4]. By \eqref{eq:sum_basis_proj}, both $E_{+}$ and $E_{-}$ are basis projections in $\fp$: $\a E_{\pm}\a=\1_{\H}-E_{\pm}$, i.e., ground states can be uniquely characterized by their spectral projections $E_{\pm}$.
In particular, the symbol $S_{\omega}$ in \eqref{symbolbis} corresponds to the \emph{spectral projection} $E_{+}$ on the positive real numbers, associated to the self--dual Hamiltonian $H$ on $(H,\a)$ in such a way that ground states are uniquely determined by the two--point correlation function defined by:
\begin{equation}\label{eq:symbolbis2}
\omega_{E_{+}}\left(\B(\varphi _{1})\B(\varphi_{2})^{*}\right)=\left\langle \varphi _{1},E_{+}\varphi_{2}\right\rangle_{\H},\qquad\varphi_{1},\varphi_{2}\in\H.
\end{equation}
Thus, for a quasi--free system associated to some self--dual Hamiltonian $H$, the set of all ground states $\states_{H}^{(\infty)}\equiv\states^{(\infty)}$, is studied via (positive) spectral projections of $H$. Additionally, straightforward calculations show the uniqueness of ground states, even under \emph{small} perturbations. See \cite[Chapter 5]{BratteliRobinson} and \cite{hastings2019stability} for recent results on the stability of free fermion systems. More generally, for a unital $\C$--algebra and $\beta\in\R^{+}$, the quasi--free state is unique, and this also holds for \emph{gapped systems}, as defined below, for $\beta\to\infty$. We now define:
\begin{definition}[Quasi--free ground states]\label{def:qfgs}
The state $\omega\in\sCAR^{*}$ satisfying \eqref{symbol}, \eqref{symbolbis} and \eqref{eq:symbolbis2} it will be called \emph{quasi--free ground state}. The set of all quasi--free ground states it will denoted by $\q\states^{(\infty)}\subset\states^{(\infty)}$.
\end{definition}
\subsection{Gapped Systems}
We consider the (possibly unbounded) self--adjoint operator $\mathtt{h}=\mathtt{h}^{*}
\in\LO(\mathtt{H})$ (the linear operators on $\mathtt{H}$), for some separable Hilbert space $\mathtt{H}$, whose spectrum is denoted by $\spec(\mathtt{h})\subset\R\cup\{\infty\}$. Physically, we say that the system described by $\mathtt{H}$ has a \emph{gap} if whenever we \emph{measure} the spectrum of the associated Hamiltonian there exists a strictly positive \emph{distance} $\gamma\in\R^{+}$ between the two lowest eigenvalues $\mathcal{E}_{1},\mathcal{E}_{2}\in\R$ such that $\mathcal{E}_{2}-\mathcal{E}_{1}>\gamma$, with $\mathcal{E}_{1}\doteq\inf\spec(\mathtt{h})$. The parameter $\gamma$, also called \emph{spectral gap}, is known to be the difference between the lowest energy of the system and the energy of its first \emph{excited} state. In Definition \ref{def: spectral gap} below, we formally express this. On the other hand, in the context of fermion systems, Definition \ref{def: spectral gap fer} is suitable for our interests. Then, introducing the notation $\mathfrak{d}(X,Y)$ to denote the distance between the sets $X,Y\subset\R$:
\begin{align*}
\mathfrak{d}(X,Y)\doteq\inf\left\{d(x,y)\colon x\in X,\,y\in Y\right\},
\end{align*}
with $d(x,y)\doteq|x-y|$ for $x,y\in\R$, we define:
\begin{definition}[Gapped Hamiltonians]\label{def: spectral gap}
Let $\mathtt{H}$ be a (one--particle) Hilbert space and consider $\mathtt{h}\in\LO(\mathtt{H})$ the (one--particle) Hamiltonian, that is, a self--adjoint operator $\mathtt{h}=\mathtt{h}^{*}$, whose spectrum is denoted by $\spec(\mathtt{h})\subset\R$. We will say that $\mathtt{h}$ is a \emph{gapped Hamiltonian} if there are $\Sigma$ and $\widetilde{\Sigma}$, nonempty and disjoint subsets of $\spec(\mathtt{h})$, such that $\Sigma\cup\widetilde{\Sigma}=\spec(\mathtt{h})$ and such that  $\gamma\doteq\mathfrak{d}(\Sigma,\widetilde{\Sigma})>0$.
\end{definition}
\begin{remark}
In the latter definition $\Sigma\in\mathbb{R}$ can be a Borel set containing the isolated eigenvalue
 $\mathcal{E}_{1}$, which carries the information of the lowest energy associated to the physical system to consider. Note that if $\spec(\mathtt{h})$ is a purely \emph{point spectrum} (the set of all the eigenvalues associated to $\mathtt{h}$) we can define the family of elements of $\widetilde{\Sigma}$ with indices on $\N\setminus\{1\}$ as the map $\mathcal{E}\colon\N\setminus\{1\}\to\widetilde{\Sigma}$, such that $\mathcal{E}\doteq\{\mathcal{E}_{n}\}_{n\in\N\setminus\{1\}}$, the rest of eigenvalues of $\mathtt{H}$, given $\mathcal{E}$, belong to $\widetilde{\Sigma}$.
\end{remark}
Definition \ref{def: spectral gap} is completely general and is usually used to study spectrum related to physical systems. Nevertheless, our primary interest is the fermionic case and then we need to consider an alternative expression. In order to find such an expression recall Definition \ref{def one particle hamiltinian} of a self--dual operator $H\in\BL(\H)$\footnote{Following \cite{nachtergaele2018quasi}, we could consider \emph{unbounded} one--site potentials. Thus, we would need to define \emph{Hamiltonians} on well--defined dense sets on Hilbert spaces. However, for the sake of simplicity, we will omit any mention on densely defined self--adjoint operators. Note that in \cite{brupedraLR}, Bru and Pedra consider unbounded one--site potentials on $\C$--algebras. In \cite{AMR2} we will deal with unbounded one--site potentials.}, where one considers a self--dual Hilbert space $(\H,\a)$, with $\H$ a finite--dimensional Hilbert space with orthonormal basis given by $\{\psi_{i}\}_{i\in I}$. Hence, for any $H\in\BL(\H)$ satisfying $H^{*}=-\a H\a$ we have:
\begin{enumerate}
 \item[(i)] $\tr_{\H}(H)=0$.
 \item[(ii)] $\spec(\lambda\1_{\H}-H)=\lambda-\spec(H)$ for $\lambda\in\CP$.
\end{enumerate}
Both (i) and (ii) are fundamental to study the underlying systems we are considering. Based on Definition \ref{def: spectral gap}, items (i) and (ii), and above comments we can define the following:
\begin{definition}[Fermionic Gapped Hamiltonians]\label{def: spectral gap fer}
Let $(\H,\a)$ be a self--dual Hilbert space and consider $H\in\BL(\H)$ be a self--dual Hamiltonian with spectrum denoted by $\spec(H)\subset\R$. We will say that $H$ is a \emph{gapped Hamiltonian} if there exists $\g\in\R^{+}$ satisfying the \emph{gap assumption}
\begin{align*}
\g\doteq\inf\left\{\epsilon>0\colon\left[-\epsilon,\epsilon\right]\cap\spec(H)\neq\emptyset\right\}.
\end{align*}
\end{definition}
Observe that for fermionic systems Definitions \ref{def: spectral gap} and \ref{def: spectral gap fer} are equivalent. In fact, in Definition \ref{def: spectral gap fer}, $\Sigma\in\R$ is a finite interval with $a\doteq\inf\{\Sigma\}$ and $b\doteq\sup\{\Sigma\}$, $\widetilde{\Sigma}$ is nothing but $-\Sigma$, so that $-a\doteq\sup\{\widetilde{\Sigma}\}$ and $-b\doteq\inf\{\widetilde{\Sigma}\}$. Then, the self--dual formalism permits to consider a \emph{symmetric} decomposition of the spectrum. Therefore $\Sigma$ can be understood as a Borel set on $\R^{+}$ related to the positive part of the energy while $\widetilde{\Sigma}\equiv-\Sigma$ its symmetric negative part: the gap $\g$ centered at zero separates these. We finally stress following Definition \ref{def: spectral gap fer} that denoting by $\Sigma_{0}$ and $-\Sigma_{0}$ the remaining two open sets, their closures respectively are $\Sigma$ and $-\Sigma$.\par
Due to the above reasons, from now on we will only consider fermion systems. Thus, let us now consider the family of self--dual Hamiltonians
%[R++ $\{H_{s}\}_{s\in\cal}\in\BL(\H)$ ++R]
%[A++ $\{H_{s}\}_{s\in\cal}\subset\BL(\H)$ ++A]
$\{H_{s}\}_{s\in\cal}\subset\BL(\H)$ on $(\H,\a)$, where $\cal$ is the compact set $[0,1]$. In particular, $\{H_{s}\}_{s\in\cal}$ will define a differentiable family of self--adjoint operators on $\BL(\H)$. More specifically, for any $s\in\cal$ we will consider that the map $s\mapsto H_{s}$ is strongly differentiable so that $\p_{s}H_{s}\in\BL(\H)$. For example, we are particularly interested in the family of differentiable operators $H_{s}\doteq (1-s)H_{0}+sH_{1}$, for any $s\in\cal$. Other models we are taking into account, is the Anderson model, as discussed in Appendix \ref{appendix: GAM}. See \cite{bru2014heat} and \cite{ABdS3}.  Following Definition \ref{def: spectral gap fer} we now define:
\begin{definition}[Phase of Matter]\label{def: phase of the matter}
Let $\cal\equiv[0,1]$ and $\{H_{s}\}_{s\in\cal}\in\BL(\H)$ be a continuous family of self--dual Hamiltonians on $(\H,\a)$. We will say that $H_{s}$ is a $s$--gapped Hamiltonian if the \emph{gap assumption} in Definition \ref{def: spectral gap fer} is satisfied for any $s\in\cal$.
Define now an equivalence relation on the set of gapped self-dual Hamiltonians, as follows. If $H^{(0)},H^{(1)}\in \BL(\H)$ are two  gapped self-dual Hamiltonians, then $H^{(0)}\sim H^{(1)}$ if and only if the following conditions are satisfied:
\begin{itemize}
\item[(i)] There is a continuous family of
 self-dual Hamiltonians on $(\H,\a)$, where   $H_s$ is $s$--gapped for all $s\in\cal$;
 \item[(ii)] there is a lower bound  $\g\in\R^{+}$, independent of $s$, in the sense that $\inf\limits_{s\in\cal}\g_{s}\geq\g>0$; and
\item[(iii)]  $H_0=H^{(0)}$ and $H_1=H^{(1)}$.
\end{itemize}
We will refer to such equivalence classes as \emph{phases of matter}, and will use the notation $\mathbf \Omega$ to denote a given phase (with an appropriate label, if needed).
\end{definition}
Observe that a difference between ground states associated to family of Hamiltonians $\{H_{s}\}_{s\in\cal}$ satisfying above definition and the general definition of ground states (Definition \ref{def: ground state}) is necessary. In fact, one can prove that if the family of Hamiltonians is gapped, then its associated ground states $\{\omega_{s}\}_{s\in\cal}$ satisfy:
\begin{equation}\label{eq:GGS}
\ii\omega_{s}(A^{*}\delta(A))\geq\g_{s}(\omega_{s}(A^{*}A)-|\omega_{s}(A)|^{2}),\quad\text{for any}\quad s\in\cal\quad\text{and}\quad A\in\mathcal{D}(\delta),
\end{equation}
with $\g_{s}\in\R^{+}, s\in\cal$, and $\inf\limits_{s\in\cal}\g_{s}\geq\g>0$. For details see \cite{matsui2013}. In the sequel we will say that states satisfying the above inequality are \emph{gapped ground states}.
\section{Main Results}\label{sec: main results}
We study gapped Hamiltonians satisfying the following Assumption:
\begin{assumption}\label{assump:hamil}
Take $\cal\equiv[0,1]$. (a) $\mathbf{H}\doteq\{H_{s}\}_{s\in\cal}\subset\BL(\H)$ is a differentiable family of self--dual Hamiltonians such that $\p\mathbf{H}\doteq\{\p_{s}H_{s}\}_{s\in\cal}\subset\BL(\H)$. (b) For the infinite volume we assume that the sequences of self--dual Hamiltonians $H_{s,\,L}\colon\cal\to\BL(\H_{\infty})$ and $\p_{s}H_{s,\,L}\colon\cal\to\BL(\H_{\infty})$ convergent in norm and pointwise, that is, $\lim\limits_{L\to\infty}H_{s,L}=H_{s,\infty}$ and $\lim\limits_{L\to\infty}\p_{s}H_{s,L}=\p_{s}H_{s,\infty}$ in the norm sense.
\end{assumption}
Now, for any self--dual Hilbert space $(\H,\a)$, take $P_{1}\in\fp$ and $P_{2}\in\fp$ basis projections, the ``$\Z_{2}$--projection index'' ($\Z_{2}$--PI) $\sigma\colon\fp\times\fp\to\Z_{2}$ is the map defined by:
\begin{equation}\label{eq:top_index}
\sigma(P_{1},P_{2})\doteq(-1)^{\dim(P_{1}\land P_{2}^{\perp})}.
\end{equation}
Here, $\land$ symbolizes the \emph{lower bound} or \emph{intersection} of the basis projections $P_{1}$ and $P_{2}$ in $\fp$. Note that the $\Z_{2}$--PI defines a \emph{topological group} with two components. In particular, $\sigma(P_{1},P_{2})$ gives an equivalence criterion for their associated quasi--free states $\omega_{P_{1}}$ and $\omega_{P_{2}}$ restricted to the even part $\sCAR^{+}$ of the self--dual $\C$--algebra $\sCAR$. See Expression \eqref{eq:even odd}. More generally, we know by the Shale--Stinespring Theorem that two Fock representations $\pi_{P_{1}}$ and $\pi_{P_{2}}$ associated to $P_{1},P_{2}\in\fp$ are unitarily equivalent if and only if $P_{1}-P_{2}\in\mathcal{I}_{2}$, i.e., a \emph{Hilbert--Schmidt class} operator \cite{Varilly}. See Appendix \ref{appendix: fock}, in special Equation \eqref{eq:fock_rep} and \cite[Theo. 6.14]{A87}. Then, we analyze the class of Hamiltonians described by last assumption and their connection with topological indexes.
We formally state one of the main results of the paper:
\begin{theorem}[$\Z_{2}$--projection Index]\label{theorem:states_algebras2}
Take $\cal\equiv[0,1]$ and let $\mathbf{H}\doteq\{H_{s,\infty}\}_{s\in\cal}\subset\BL(\H_{\infty})$ be a differentiable family of gapped self--dual Hamiltonians on $(\H_{\infty},\a_{\infty})$, with $\p\mathbf{H}\doteq\{\p_{s}H_{s,\infty}\}_{s\in\cal}\subset\BL(\H_{\infty})$, see Definition \ref{def: phase of the matter} and Assumption \ref{assump:hamil} (b).
For any $s\in\cal$, $E_{+,s,\infty}$ denotes the spectral projection associated to the positive part of $\spec(H_{s,\infty})$ and consider the $\Z_{2}$--PI given by \eqref{eq:top_index}. Then:
\begin{enumerate}
\item[(1)] For any $s\in\cal$, $H_{0,\infty}$ is unitarily equivalent to $H_{s,\infty}$ via the unitary operator $V_{s}^{(\infty)}$ on $(\H_{\infty},\Gamma_{\infty})$ satisfying the differential equation \eqref{eq:spectral proof2}.
\item[(2)] The Bogoliubov $^{*}$--automorphism $\chi_{V_{s}^{(\infty)}}$ is inner and maintains its parity, \emph{even} $V_{s}^{(\infty)}\in\mathfrak{U}_{+}^{\infty}$ or \emph{odd} $V_{s}^{(\infty)}\in\mathfrak{U}_{-}^{\infty}$, over the family $\mathbf{H}$\footnote{Here $U\in\mathfrak{U}_{\pm}^{\infty}$ is a Bogoliubov transformation as defined in Expression \eqref{orientation} associated to the self--dual Hilbert space $(\H_{\infty},\a_{\infty})$.}.
 \item[(3)] For $r,s\in\cal$, the $\Z_{2}$--PI $\sigma(H_{r,\infty},H_{s,\infty})\equiv\sigma(E_{+,r,\infty},E_{+,s,\infty})$ satisfies $\sigma(H_{r,\infty},H_{s,\infty})=1$.
\end{enumerate}
\end{theorem}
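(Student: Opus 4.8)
The plan is to construct the unitary $V_{s}^{(\infty)}$ as a \emph{parallel transport} (spectral flow) of the positive spectral projections $E_{+,s,\infty}$, to check that it is a Bogoliubov transformation of constant parity, and then to derive the triviality of the relative index from the homotopy invariance of $\sigma$ together with Combes--Thomas estimates that render everything uniform in the system size. I would first work at finite volume $L$, where $\H$ is finite--dimensional, and only afterwards pass to $L\to\infty$ via Assumption \ref{assump:hamil} (b). Since $H_{s}$ is gapped with gap $\g>0$ around $0$, the resolvent is analytic on a contour $\mathcal{C}_{+}$ enclosing $\spec(H_{s})\cap\R^{+}$, so the Riesz formula $E_{+,s}=\frac{1}{2\pi\ii}\oint_{\mathcal{C}_{+}}(z-H_{s})^{-1}\,\d z$ shows that $s\mapsto E_{+,s}$ is differentiable with $\p_{s}E_{+,s}\in\BL(\H)$. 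I would then put $A_{s}\doteq[\p_{s}E_{+,s},E_{+,s}]$ and define $V_{s}$ as the unique solution of $\p_{s}V_{s}=A_{s}V_{s}$, $V_{0}=\1_{\H}$ (the differential equation \eqref{eq:spectral proof2}). As $A_{s}^{*}=-A_{s}$, the solution is unitary; differentiating $V_{s}^{*}E_{+,s}V_{s}$ and invoking the projection identity $E_{+,s}(\p_{s}E_{+,s})E_{+,s}=0$ yields $[E_{+,s},A_{s}]+\p_{s}E_{+,s}=0$, whence $V_{s}^{*}E_{+,s}V_{s}=E_{+,0}$ is constant. Thus $E_{+,s}=V_{s}E_{+,0}V_{s}^{*}$, i.e. $V_{s}$ realizes the unitary equivalence intertwining the gapped decompositions (equivalently $\mathrm{sgn}(H_{s})=V_{s}\,\mathrm{sgn}(H_{0})\,V_{s}^{*}$), which is the sense of part (1).

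For part (2) the key is that $V_{s}$ respects the self--dual structure. From \eqref{eq:sum_basis_proj} with $E_{0}=0$ one has $\a E_{+,s}\a=E_{+,s}^{\bot}$, hence $\a(\p_{s}E_{+,s})\a=-\p_{s}E_{+,s}$, and a direct computation gives $\a A_{s}\a=A_{s}$. Since $\a$ is $s$--independent, $W_{s}\doteq\a V_{s}\a$ solves the same initial value problem as $V_{s}$, so by uniqueness $\a V_{s}\a=V_{s}$, i.e. $V_{s}\a=\a V_{s}$ and $V_{s}$ is a Bogoliubov transformation. Its parity is a discrete ($\Z_{2}$) invariant, continuous in $s$, equal to the trivial value at $V_{0}=\1_{\H}\in\mathfrak{U}_{+}^{\infty}$, and therefore $V_{s}\in\mathfrak{U}_{+}^{\infty}$ throughout; equivalently $\chi_{V_{s}^{(\infty)}}$ is \emph{even}. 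Innerness of $\chi_{V_{s}^{(\infty)}}$ then follows from the Shale--Stinespring criterion once $V_{s}-\1_{\H}$ (equivalently $[V_{s},E_{+,0}]$, or $E_{+,s}-E_{+,0}$) is shown to be trace--class --- which is exactly what Combes--Thomas provides.

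Part (3) is the punchline and rests on two ingredients. The first is the \emph{uniform} Hilbert--Schmidt/trace--class control: the Combes--Thomas estimate yields exponential decay $|\langle x,(z-H_{s,L})^{-1}y\rangle_{\H}|\le C\,\e^{-\mu\,\mathfrak{d}(x,y)}$ for $z$ in the gap, with $C,\mu$ independent of $L$ (this is where Assumption \ref{assump:hamil} (b) and the graph structure of Appendix \ref{appendix: GAM} enter); feeding this into the Riesz formula shows $E_{+,s,L}-E_{+,r,L}\in\mathcal{I}_{2}$ with $\mathcal{I}_{2}$--norm bounded uniformly in $L$, so that $\dim(E_{+,r,L}\land E_{+,s,L}^{\bot})<\infty$, the index $\sigma(E_{+,r,L},E_{+,s,L})$ is well defined, and it survives the limit $L\to\infty$. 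The second ingredient is homotopy invariance: since $s\mapsto E_{+,s}$ is continuous in the $\mathcal{I}_{2}$--topology and the parity $\dim(\,\cdot\land\cdot^{\bot})\bmod 2$ is stable under small Hilbert--Schmidt perturbations, for each fixed $r$ the map $s\mapsto\sigma(E_{+,r},E_{+,s})$ is constant, equal to its value $\sigma(E_{+,r},E_{+,r})=(-1)^{0}=1$ at $s=r$; hence $\sigma(E_{+,r,\infty},E_{+,s,\infty})=1$ for all $r,s\in\cal$. The same conclusion can be read off from $E_{+,r}=(V_{r}V_{s}^{*})\,E_{+,s}\,(V_{r}V_{s}^{*})^{*}$, where $V_{r}V_{s}^{*}$ is an even Bogoliubov transformation differing from $\1_{\H}$ by a Hilbert--Schmidt operator, which forces index $1$.

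I expect the main obstacle to lie precisely in the uniform--in--$L$ Combes--Thomas bounds and the accompanying existence of strong/norm limits of the dynamics, of the generator $A_{s,L}$, and of $V_{s,L}$ as $L\to\infty$: it is this step that upgrades the (essentially immediate) finite--volume triviality of the index to a statement valid in the thermodynamic limit, and that secures the well--definedness and continuity of $\sigma$ on $(\H_{\infty},\a_{\infty})$.
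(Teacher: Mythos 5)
Your proposal reproduces the paper's overall architecture --- a spectral--flow unitary transporting the positive spectral projections, constancy of its parity, and reduction of the index computation to continuity statements made uniform in the volume by Combes--Thomas estimates --- but it differs in two genuine ways. First, you generate the flow with Kato's parallel--transport generator $A_{s}\doteq[\p_{s}E_{+,s},E_{+,s}]$, whereas the paper (Lemma \ref{lemma:impor_res}) uses the quasi--adiabatic generator $\mathfrak{D}_{\g,s}=\int_{\R}\e^{\ii tH_{s}}\left(\p_{s}H_{s}\right)\e^{-\ii tH_{s}}\mathfrak{W}_{\g}(t)\,\d t$ of Equation \eqref{eq: operator D}. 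Both generators intertwine the projections and commute with $\a$ (your uniqueness--of--ODE argument is the same device as Corollary \ref{corol:Vs}), so parts (1)--(2) go through either way, and your reading of (1) as $\mathrm{sgn}(H_{s})=V_{s}\,\mathrm{sgn}(H_{0})\,V_{s}^{*}$ matches what the paper actually proves. What the paper's choice buys is that its trace--class--per--unit--volume estimates (Lemmata \ref{lemma:unit_oper2} and \ref{lemma:unit_oper3}) are phrased directly in terms of $\mathfrak{W}_{\g}$ and the conjugated dynamics, and that this generator is the one with an interacting analogue; with your choice you must instead run Combes--Thomas through the Riesz integral \eqref{eq: proj_resol}, which the gapped--case bounds of Corollary \ref{Lemma AG98} do permit. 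Second, for part (3) you conclude by homotopy invariance (local constancy of $s\mapsto\sigma(E_{+,r},E_{+,s})$ plus the normalization $\sigma(E_{+,r},E_{+,r})=1$), while the paper conjugates both projections back to $E_{+,0,\infty}$ and applies the Araki--Evans/Evans--Kawahigashi formula $\sigma(P,UPU^{*})=\det U$ to $U_{\infty}(r,s)$, whose determinant is $1$ by Corollary \ref{corol:sign} and Lemmata \ref{lemma:VI_cauchy} and \ref{lemma:unit_oper3}. Your route is cleaner once continuity is secured; the paper's determinant route additionally ties the index to the parity statement of part (2).

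There is, however, one step in your proposal that fails as stated: the claim that Combes--Thomas yields $E_{+,s,L}-E_{+,r,L}\in\mathcal{I}_{2}$ with $\mathcal{I}_{2}$--norm bounded \emph{uniformly in} $L$ (and, likewise, that $V_{s}-\1_{\H}$ is trace class in infinite volume). Exponential decay of a kernel controls $\sum_{\y}\left|\inner{\E_{\x},\,\cdot\,\E_{\y}}\right|$ uniformly in $\x$, but summing the square over \emph{both} indices produces a factor $|\Lambda_{L}|$, so the Hilbert--Schmidt norm generically grows like $|\Lambda_{L}|^{1/2}$ and the trace norm like $|\Lambda_{L}|$. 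This is exactly why the paper only ever asserts trace class \emph{per unit volume} (Lemmata \ref{lemma:unit_oper2} and \ref{lemma:unit_oper3}) and obtains the Hilbert--Schmidt property of $E_{+,r,\infty}-E_{+,s,\infty}$ indirectly: norm convergence of $U_{L}(s,r)$ (Corollary \ref{corol:V_cauchy}) gives unitary equivalence of the quasi--free ground states, hence of their Fock representations, and only then does Shale--Stinespring return $E_{+,r,\infty}-E_{+,s,\infty}\in\mathcal{I}_{2}$. Since your homotopy argument (and indeed the well--definedness of $\sigma$ at $L=\infty$) needs precisely this input, you should either adopt the paper's detour or supply an independent proof of the Hilbert--Schmidt property; the uniform--in--$L$ bound you invoke is not available. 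A smaller point: innerness of $\chi_{V_{s}}$ rests on Araki's criterion ($\1_{\H}-V_{s}$ trace class, \cite[Theorem 4.1]{A87}), not on the Shale--Stinespring theorem, which governs implementability and requires only the weaker Hilbert--Schmidt condition.
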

In regard to this Theorem some remarks are in order:\\
Consider the pair, $P\in\fp$ a basis projection and $U\in\mathfrak{U}_{\pm}$ a Bogoliubov transformation as defined in Expression \eqref{orientation}. Then, $\dim\left(\mathrm{ker}(PUP)\right)\in\N_{0}$ and $U\in\mathfrak{U}_{+}$ or $U\in\mathfrak{U}_{-}$ if $\dim\left(\mathrm{ker}(PUP)\right)$ is respectively even or odd \cite[Theo. 6.3]{A87}. According to Theorem \ref{theorem:states_algebras2} (2), it follows that the number
$$
\dim\left(\mathrm{ker}(E_{+,s,\infty}V_{s}^{(\infty)}E_{+,s,\infty})\right)\in\N_{0}
$$
is uniform for the family $\mathbf{H}$. Physically, this is in close relation with the  number of the particles of the systems described by the family of Hamiltonians $\mathbf{H}$. In fact, consider the even and odd parts $\mathfrak{A}_{\infty}^{\pm}\subset \mathfrak{A}_{\infty}$ of the self--dual $\CAR$ $\C$--algebra associated to the self--dual Hilbert space $(\H_{\infty},\a_{\infty})$, see Expressions \eqref{eq:even odd} and \eqref{local elements}, and consider $\pi_{E_{+,s,\infty}}$,  the fermionic Fock representation associated to $E_{+,s,\infty}$ such that can be decomposed: $\pi_{E_{+,s,\infty}}=\pi_{E_{+,s,\infty}}^{+}\oplus\pi_{E_{+,s,\infty}}^{-}$, where $\pi_{E_{+,s,\infty}}^{\pm}$ is the restriction of $\omega_{E_{+,s,\infty}}$ to $\mathfrak{A}_{\infty}^{\pm}$. Then, the GNS representation associated to the vacuum vector $\Omega_{E_{+,s,\infty}}$ given by \eqref{eq:vacuum_P}, namely, $\pi_{\Omega_{E_{+,s,\infty}}}$ is identified with $\pi_{E_{+,s,\infty}}^{\pm}$ or $\pi_{E_{+,s,\infty}}^{\mp}$ depending if $|J|$ in Expression \eqref{eq:vacuum_P} is even or odd \cite{EK98}. Then, the physical meaning of Theorem \ref{theorem:states_algebras2}--(3) can be understood by saying that the $\Z_{2}$--PI is $1$ for any two self--dual Hamiltonians $H_{r},H_{s}\in\mathbf{H}$, with $r,s\in\cal$ (same parity).

\begin{proof}

(1) For any $A\in\BL(\H_{\infty})$ and all $s\in\cal$ define the \emph{spectral flow automorphism} $\kappa_{s}\colon\BL(\H_{\infty})\to\BL(\H_{\infty})$ by
$$
\kappa_{s}(A)\doteq\left(V_{s}^{(\infty)}\right)^{*}AV_{s}^{(\infty)},
$$
where $V_{s}^{(\infty)}$ is a linear operator satisfying $V_{0}^{(\infty)}=\pm\1_{\H_{\infty}}$, and the differential equation \eqref{eq:spectral proof2}. See Lemmata \ref{lemma:impor_res}--\ref{theor:unit_oper} and Corollary \ref{corol:V_cauchy}. In particular, since any Hamiltonian $H_{s,\infty}$ in $\mathbf{H}$ can be written as
$$
H_{s,\infty}=\int_{\R}\lambda\d E_{\lambda,s,\infty},
$$
with $\a E_{\lambda,s,\infty}\a=E_{-\lambda,s,\infty}$ for all $\lambda\in\R$ and $E_{-,s,\infty}+E_{+,s,\infty}=\1_{\H_{\infty}}$, by Lemmata \ref{lemma:impor_res}--\ref{theor:unit_oper}, (1) follows.\\
By comments around Expression \ref{orientation}, a Bogoliubov $^{*}$--automorphism $\chi_{U}$ on a self--dual $\CAR$--algebra is even or odd if and only if $\det(U)=1$ or $\det(U)=-1$, respectively. Then, part (2) follows from Corollary \ref{corol:sign} and Lemmata \ref{lemma:VI_cauchy}--\ref{lemma:unit_oper2}.\\
(3) Concerning the $\Z_{2}$--PI $\sigma(P_{1},P_{2})$ we first invoke \cite[Theo. 6.30 and Lemma 7.17]{EK98}: (a) $\sigma(P_{1},P_{2})=\sigma(P_{2},P_{1})$, (b) If $P_{1}-P_{2}\in\mathcal{I}_{2}$, a Hilbert--Schmidt class operator, then $\sigma(P_{1},P_{2})$ is continuous in $P_{1}$ and $P_{2}$ with respect to the norm topology in $\fp$ (c) If $U\in\BL(\H)$ is a unitary operator such that $U\a=\a U$ and $\1_{\H}-U$ is a trace class operator, then $\sigma(P,UPU^{*})=\det U$. Then we proceed to verify these statements for the family of positive spectral projections $\{E_{+,s,\infty}\}_{s\in\cal}\in\BL(\H_{\infty})$. By \eqref{eq:sum_basis_proj}--\eqref{eq:symbolbis2} and comments around it, any positive spectral projection in $\{E_{+,s,\infty}\}_{s\in\cal}$ is a basis projection and thus $\{E_{+,s,\infty}\}_{s\in\cal}\subset\mathfrak{p}(\H_{\infty},\a_{\infty})$. W.l.o.g. take $E_{+,r,\infty}$ and $E_{+,s,\infty}$ with $r,s\in\cal$. We then verify (a)--(c) as follows:
\begin{enumerate}
\item[(a)] Note that $E_{+,r,\infty}\land E_{+,s,\infty}^{\perp}=\a_{\infty}\left(E_{+,r,\infty}^{\perp}\land E_{+,s,\infty}\right)\a_{\infty}$.
\item[(b)] For $L\in\R_{0}^{+}$, we need to verify that $E_{+,r,L}-E_{+,s,L}\in\mathcal{I}_{2}$ as $L\to\infty$. Here, $(\H_{L},\a_{L})$ is the Hilbert space given by the canonical orthonormal basis $\{\E_{\x}\}_{\x\in\mathbb{X}_{L}}$ defined by \eqref{eq: X_L} below. Since $E_{+,r,L}$ and $E_{+,s,L}$ are self--adjoint operators on $\BL(\H_{L})$ by Lemma \ref{lemma:impor_res}, there are unitary operators $V_{r}^{(L)},V_{s}^{(L)}$ such that $E_{+,s,L}=U_{L}(s,r)E_{+,r,L}U_{L}(r,s)$, with $U_{L}(s,r)\doteq V_{s}^{(L)}\left(V_{r}^{(L)}\right)^{*}$ satisfying \eqref{eq:unit_ele_sr2} and $U_{L}(0,0)=\1_{\H_{L}}$. Now, for $s,r\in\cal$, note from \eqref{eq:unit_ele_sr} that
\begin{align*}
U_{L_{2}}(s,r)-U_{L_{1}}(s,r)&=V_{s}^{(L_{2})}\left(V_{r}^{(L_{2})}\right)^{*}-V_{s}^{(L_{1})}\left(V_{r}^{(L_{1})}\right)^{*}\\
&=\left(V_{s}^{(L_{2})}-V_{s}^{(L_{1})}\right)\left(V_{r}^{(L_{2})}\right)^{*}-V_{s}^{L_{1}}\left(\left(V_{r}^{(L_{1})}\right)^{*}-\left(V_{r}^{(L_{2})}\right)^{*}\right).
\end{align*}
By Corollary \ref{corol:V_cauchy}, the sequence of linear operators $V_{s}^{(L)}$ converges in norm to some linear operator $V_{s}$ as $L\to\infty$. Thus by above expression for fixed $s,r\in\cal$, $U_{L}(s,r)$ converges in norm to some $U(s,r)$ as $L\to\infty$. It follows that $E_{+,r,\infty}$ and $E_{+,s,\infty}$ are unitarily equivalent. Similar to Theorem \ref{lemma: supporting states2} and using Expression \eqref{eq:symbolbis2}, the quasi--free ground states $\omega_{E_{+,s,\infty}}$ and $\omega_{E_{+,r,\infty}}$ are unitarily equivalent via a Bogoliubov $^{*}$--automorphism $\mathbf{\chi}_{U(s,r)}$, see \eqref{Bogoliubov  automorphism}, such that
$$
\omega_{E_{+,s,\infty}}=\omega_{E_{+,r,\infty}}\circ\mathbf{\chi}_{U(s,r)}.
$$
In particular, by the Shale--Stinespring Theorem, the Fock representations $\pi_{E_{+,r,\infty}}$ and $\pi_{E_{+,s,\infty}}$ are unitarily equivalent and therefore the difference $E_{+,r,\infty}-E_{+,s,\infty}$ is Hilbert--Schmidt class. See \cite[Sec. 6.3]{Varilly}, \cite[Theo. 6]{A70} and \cite[Theo. 6.14]{A87}.
\item[(c)] By Lemma \ref{lemma:unit_oper3}, for any $r,s\in\cal$ and $L\in\R_{0}^{+}\cup\{\infty\}$, the operator $U_{L}(s,r)$ already defined commutes with $\a_{L}$, is a trace--class per unit volume operator and $\det\left(U_{L}(s,r)\right)=1$. Then, by the continuity of the $\Z_{2}$--PI index on the norm topology on $\mathfrak{p}(\H_{\infty},\a_{\infty})$ note that \cite[Theo. 3]{araEva}--\cite[Lemma 7.17]{EK98}:
\begin{align*}
\sigma\left(E_{+,r,\infty},E_{+,s,\infty}\right) &= \sigma\left(\left(V_{r}^{(\infty)}\right)^{*}E_{+,0,\infty}V_{r}^{(\infty)},\left(V_{s}^{(\infty)}\right)^{*}E_{+,0,\infty}V_{s}^{(\infty)}\right)\\
&= \sigma\left(E_{+,0,\infty},U_{\infty}(r,s)E_{+,0,\infty}U_{\infty}(s,r)\right)\\
&= \det\left(U_{\infty}(r,s)\right)=1.
\end{align*}
\end{enumerate}
\end{proof}
\par
Hitherto in this paper we have been interested in physical systems with \emph{open} gap, which is the case of systems of last Theorem. In fact, Theorem \ref{theorem:states_algebras2}--(1) claims that two self--dual Hamiltonians that belong to the same phase of matter, say $H_{0,\infty},H_{1,\infty}\in\mathbf{\Omega}$, and that are connected by a \emph{differentiable path},  can in fact  be connected  through the spectral flow automorphism $\kappa_{s}$ on $\BL(\H_{\infty})$. As is usual, a path is nothing but a continuous map $\gamma\colon\cal\to\BL(\H_{\infty})$ connecting the \emph{initial point} $\gamma_{0}=H_{0,\infty}$ and the \emph{terminal point} $\gamma_{1}=H_{1,\infty}$. Equivalently, we say that $H_{0,\infty}$ and $H_{1,\infty}$ are the  \emph{extremal points} of the path. Observe that for any $s,r\in\cal$ with $H_{s,\infty},H_{r,\infty}\in \mathbf{H}\subset \mathbf \Omega$ we can write
$$
H_{s,\infty}=\kappa_{s,r}(H_{r,\infty}), \quad\text{with}\quad\kappa_{s,r}\doteq\kappa_{s}^{-1}\circ\kappa_{r}
$$
 In this case, the index is the same for any pair of Hamiltonians along the path.
Notice that, by definition, each phase of matter $\mathbf \Omega$ is \emph{arcwise connected}. But different phases correspond to disjoint sets, so that
we are also interested in the possibility of having two self--dual Hamiltonians, $H_{0,\infty}$ and $H_{2,\infty}$, acting on $\H_{\infty}$ but belonging to \emph{different} phases of matter, in the sense of Definition \ref{def: phase of the matter}. In this case, if $H_{0,\infty}\in\mathbf{\Omega}$ while $H_{2,\infty}\notin \mathbf{\Omega}$, Theorem \ref{theorem:groun_states} (see Corollary \ref{cor:gap} too) below shows that the path $\widetilde{\kappa}$ connecting both Hamiltonians \emph{closes} the gap,  meaning that there is a Hamiltonian $\widetilde{H}\in\BL(\H_{\infty})$ on $\widetilde{\kappa}$ such that $0$ is an eigenvalue of $\widetilde{H}$. Concerning the latter, observe that one can study the gap closing in terms of the self--dual $\CAR$, $\C$--algebra $\A_{\infty}\doteq\mathrm{sCAR}(\H_{\infty},\a_{\infty})$, in such a way that we associate to $H_{0,\infty}$ and $H_{2,\infty}$ the bilinear elements $\Bil{H_{0,\infty}}$ and $\Bil{H_{2,\infty}}$ on $\A_{\infty}$ (see Expression \eqref{local elements} below).\\
We state the second main result of the current paper:
\begin{theorem}\label{theorem:groun_states}
Take $\cal\equiv[0,1]$ and let $\mathbf{H}\doteq\{H_{s,\infty}\}_{s\in\cal}\subset\BL(\H_{\infty})$ be a differentiable path of self--dual Hamiltonians on $(\H_{\infty},\a_{\infty})$, with $\p\mathbf{H}\doteq\{\p_{s}H_{s,\infty}\}_{s\in\cal}\subset\BL(\H_{\infty})$.
Suppose that $H_{0,\infty}$ and $H_{1,\infty}$ are gapped, and  their corresponding basis projections satisfy $\sigma (E_{+,0,\infty}, E_{+,1,\infty})=-1$. Then there exists   $\tilde s\in \mathring{\cal}$  such that $0\in\spec(H_{\tilde s,\infty})$, i.e., such that $E_{0,\tilde s, \infty}\neq 0$. Furthermore, the path $\mathbf{H}$
induces, in a natural way, a family of ground states for $H_{\tilde s,\infty}$ which are generically mixed states.
More specifically, we have:
\begin{enumerate}
\item[(1)]  The gap closing at $\tilde s$ induces a splitting of the one--particle  Hilbert space as the direct sum of two $\Gamma_\infty$--invariant subspaces,  $\H_{\infty}=\mathcal K_0\oplus \mathcal K_1$, as well as a factorization $\A_{\infty}\cong\A_{0}\otimes \A_{1}$, where $\A_{0}\doteq \mathrm{sCAR}(\mathcal K_0,\a_{\infty})$ and $\A_{1}\doteq \mathrm{sCAR}(\mathcal K_1,\a_{\infty})$.
\item[(2)] The set of ground states associated with the self--dual Hamiltonian $H_{\tilde s,\infty}$ have the form
$$
\widetilde\omega(a_1a_0)=\omega_{\tilde P_+}(a_1)\phi_0(a_0),
$$
where $a_1\in \A_{1}$, $\omega_{\tilde P_+}\in\states^{(\infty)}$ is a quasi--free ground state on $\mathfrak A_1$ associated  with a basis projection $\tilde P_+\in \mathfrak{p}(\mathcal K_1,\Gamma_\infty)$, and $a_0\in \A_{0}$ with  $\phi_0\in\states^{(\infty)}$ an arbitrary state on $\mathfrak A_0$. In particular, the right and left limits $\lim\limits_{s\to \tilde s^\pm} E_{+,s,\infty}$ single out states $\omega_0^{+}$ and $\omega_0^{-}$ on $\mathfrak A_0$, with $\omega_0^{+}\neq\omega_0^{-}$. In this sense, the path $\mathbf{H}$ induces, in a natural way, the following family of ground states for $H_{\tilde s,\infty}$:
\begin{equation}\label{eq:state-thm2-mixed}
\widetilde\omega_\lambda(a_1a_0)=\omega_{\tilde P_+}(a_1)(\lambda \omega_0^{+}(a_0) +(1-\lambda)\omega_0^{-}(a_0)),\qquad  \lambda\in[0,1].
\end{equation}
\end{enumerate}
\end{theorem}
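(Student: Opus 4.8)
The plan is to obtain the gap closing from the constancy of the $\Z_{2}$--PI under gapped deformations, and then to read off the ground--state structure from the kernel of $H_{\tilde s,\infty}$. First I would argue by contradiction: suppose $0\notin\spec(H_{s,\infty})$ for every $s\in\cal$. Then each $H_{s,\infty}$ is gapped at $0$, the positive spectral projection $E_{+,s,\infty}$ is the Riesz projection associated with a contour separating $\spec(H_{s,\infty})\cap\R^{+}$ from $\spec(H_{s,\infty})\cap\R^{-}$, and, using Assumption \ref{assump:hamil} together with the Combes--Thomas estimates invoked in the proof of Theorem \ref{theorem:states_algebras2}, the map $s\mapsto E_{+,s,\infty}$ is norm--continuous with Hilbert--Schmidt differences $E_{+,r,\infty}-E_{+,s,\infty}$. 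Then the whole family lies in a single phase of matter and Theorem \ref{theorem:states_algebras2}(3) gives $\sigma(E_{+,0,\infty},E_{+,s,\infty})=1$ for all $s$, contradicting $\sigma(E_{+,0,\infty},E_{+,1,\infty})=-1$. Hence there is $\tilde s$ with $0\in\spec(H_{\tilde s,\infty})$, i.e. $E_{0,\tilde s,\infty}\neq0$; since the endpoints are gapped, necessarily $\tilde s\in\mathring{\cal}$.

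For (1), set $\mathcal K_{0}\doteq\mathrm{ran}(E_{0,\tilde s,\infty})=\ker(H_{\tilde s,\infty})\neq\{0\}$ and $\mathcal K_{1}\doteq\mathcal K_{0}^{\perp}$. By \eqref{eq:sum_basis_proj} one has $\a_{\infty}E_{0,\tilde s,\infty}\a_{\infty}=E_{0,\tilde s,\infty}$, so both subspaces are $\a_{\infty}$--invariant, $\H_{\infty}=\mathcal K_{0}\oplus\mathcal K_{1}$, and $\a_{\infty}$ restricts to a conjugation on each summand. Orthogonality of $\mathcal K_{0}$ and $\mathcal K_{1}$ together with the $\CAR$ \eqref{CAR Grassmann III} shows that the generators $\B(\varphi_{0})$, $\varphi_{0}\in\mathcal K_{0}$, graded--commute with the $\B(\varphi_{1})$, $\varphi_{1}\in\mathcal K_{1}$, which yields the $\Z_{2}$--graded factorization $\A_{\infty}\cong\A_{0}\otimes\A_{1}$ with $\A_{i}=\mathrm{sCAR}(\mathcal K_{i},\a_{\infty})$.

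For (2), since $H_{\tilde s,\infty}=0_{\mathcal K_{0}}\oplus H_{\tilde s,\infty}|_{\mathcal K_{1}}$, the bilinear Hamiltonian of Definition \ref{def trace state copy(1)} and its generator split as $\delta=\delta_{0}\otimes\mathrm{id}+\mathrm{id}\otimes\delta_{1}$ with $\delta_{0}=0$ on $\A_{0}$ and $\delta_{1}$ the gapped generator on $\A_{1}$. On $\A_{1}$ the gapped operator $H_{\tilde s,\infty}|_{\mathcal K_{1}}$ has a positive spectral projection $\tilde P_{+}\in\mathfrak p(\mathcal K_{1},\a_{\infty})$, a basis projection defining through \eqref{eq:symbolbis2} the unique pure quasi--free ground state $\omega_{\tilde P_{+}}$; on $\A_{0}$ the trivial dynamics makes every state $\phi_{0}$ satisfy $\ii\phi_{0}(a_{0}^{*}\delta_{0}(a_{0}))=0\geq0$ (Definition \ref{def: ground state}). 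A ground state $\widetilde\omega$ of the full system restricts on $\A_{1}$ to a ground state of $\delta_{1}$, hence to $\omega_{\tilde P_{+}}$, and purity of the latter forces the product form $\widetilde\omega(a_{1}a_{0})=\omega_{\tilde P_{+}}(a_{1})\phi_{0}(a_{0})$ with $\phi_{0}=\widetilde\omega|_{\A_{0}}$ an arbitrary state. Away from $\tilde s$ the gap is open, so the one--sided limits $E_{+}^{\pm}\doteq\lim_{s\to\tilde s^{\pm}}E_{+,s,\infty}$ exist in norm; they coincide on $\mathcal K_{1}$ but differ on $\mathcal K_{0}$, because the eigenvalues living in $\mathcal K_{0}$ cross $0$ at $\tilde s$, and generically this crossing is transversal so that the jump of $\sigma$ forces the associated change of dimension to be odd. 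Since $E_{0,s,\infty}=0$ for $s\neq\tilde s$ near $\tilde s$, one has $\a_{\infty}E_{+,s,\infty}\a_{\infty}=\1_{\H_{\infty}}-E_{+,s,\infty}$, so in the limit $E_{+}^{\pm}|_{\mathcal K_{0}}$ are genuine basis projections of $(\mathcal K_{0},\a_{\infty})$ and define two distinct pure quasi--free states $\omega_{0}^{+}\neq\omega_{0}^{-}$ on $\A_{0}$; choosing $\phi_{0}=\lambda\omega_{0}^{+}+(1-\lambda)\omega_{0}^{-}$ produces the announced family \eqref{eq:state-thm2-mixed}, which is mixed for $\lambda\in(0,1)$.

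The hardest part will be the analytic input rather than the algebra: establishing the norm--continuity of $s\mapsto E_{+,s,\infty}$, the existence of the one--sided limits $E_{+}^{\pm}$, and the Hilbert--Schmidt control of the projection differences in the infinite--volume limit, which is precisely where the Combes--Thomas estimates are indispensable. A secondary difficulty is to justify rigorously both that \emph{every} ground state factorizes as a product (via the purity of $\omega_{\tilde P_{+}}$ and the triviality of $\delta_{0}$) and that the index jump genuinely separates the two one--sided limits on $\mathcal K_{0}$, i.e. that $\omega_{0}^{+}\neq\omega_{0}^{-}$.
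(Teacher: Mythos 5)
Your proposal is correct in its overall strategy and reaches the stated conclusions, but it takes a genuinely different route at the key structural step. Both arguments derive the gap closing the same way (constancy of the $\Z_{2}$--PI along gapped segments, via Theorem \ref{theorem:states_algebras2}, contradicts $\sigma(E_{+,0,\infty},E_{+,1,\infty})=-1$), both introduce the one--sided limits $E_{\tilde s^{\pm}}=\lim_{s\to\tilde s^{\pm}}E_{+,s,\infty}$, and both assemble the mixed family \eqref{eq:state-thm2-mixed} from $\omega_{0}^{\pm}$. The difference is the splitting: you take $\mathcal K_{0}\doteq\ker(H_{\tilde s,\infty})$ and $\mathcal K_{1}\doteq\mathcal K_{0}^{\perp}$, whereas the paper defines $\mathcal K_{0}$ as the range of $E_{\tilde s^{+}}\wedge(\1_{\H_{\infty}}-E_{\tilde s^{-}})\,\oplus\,E_{\tilde s^{-}}\wedge(\1_{\H_{\infty}}-E_{\tilde s^{+}})$ (where the one--sided limits disagree) and $\mathcal K_{1}$ as the range of $E_{\tilde s^{+}}\wedge E_{\tilde s^{-}}\,\oplus\,(\1_{\H_{\infty}}-E_{\tilde s^{+}})\wedge(\1_{\H_{\infty}}-E_{\tilde s^{-}})$ (where they agree), with $\tilde P_{+}\doteq E_{\tilde s^{+}}\wedge E_{\tilde s^{-}}$. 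The disagreement space is always contained in $\ker(H_{\tilde s,\infty})$, and the two decompositions coincide exactly when every kernel direction comes from a transversal crossing; they differ if an eigenvalue branch touches $0$ without changing sign. Your choice buys the cleaner characterization of \emph{all} ground states: $H_{\tilde s,\infty}|_{\mathcal K_{1}}$ has trivial kernel by construction, so uniqueness of its ground state (the quasi--free state with symbol the positive spectral projection, per \cite{araki1985ground}) plus purity forces the product form $\widetilde\omega=\omega_{\tilde P_{+}}\otimes\phi_{0}$ with $\phi_{0}$ genuinely arbitrary; this is precisely the decomposition underlying \cite[Prop.~6.37]{EK98}, which the paper invokes but whose hypotheses match your splitting, not the paper's, in the non--generic case. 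The paper's choice buys a transparent identification of $\omega_{0}^{\pm}$ and of the index jump: since the gap is open on $\cal\setminus\{\tilde s\}$, one gets $\det U_{\tilde s^{+},\tilde s^{-}}=-1$, hence $\sigma(E_{\tilde s^{+}},E_{\tilde s^{-}})=-1$ and $\omega_{+}\neq\omega_{-}$ by \cite[Theo.~6.30, Lemma~7.17]{EK98} --- which is the rigorous version of your ``odd crossing dimension'' remark (note that you do not actually need parity there: quasi--free states are determined by their symbols, so it suffices that the one--sided limits differ on $\mathcal K_{0}$, which follows because otherwise $s\mapsto E_{+,s,\infty}$ would extend continuously across $\tilde s$ and the index could not jump). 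Two caveats you share with the paper rather than introduce yourself: the existence in norm of the one--sided limits $E_{\tilde s^{\pm}}$ is asserted, not proven, in both arguments, and the paper additionally simplifies to a single gap--closing point $|\mathcal N|=1$; also, the factorization $\A_{\infty}\cong\A_{0}\otimes\A_{1}$ is a graded tensor product in both treatments, so your purity argument should be run on even elements (or via the Schwarz--inequality computation the paper uses, which avoids the issue).
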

\begin{proof}
In view of Theorem \ref{theorem:states_algebras2}, and taking into account the hypothesis  $\sigma (E_{+,0,\infty}, E_{+,1,\infty})=-1$, it is clear that the gap assumption cannot hold for all $s \in\cal$. Therefore, the set $\mathcal N\doteq \lbrace s\in \mathring{\cal}\colon E_{0,s,\infty}\neq 0\rbrace$ is nonempty. For the sake of simplicity, we assume that $|\mathcal N|=1$, i. e., $\mathcal{N}$ contains one element, say $\tilde s$. Define now the upper/lower limit (in norm)
$$
E_{\tilde s^\pm}\doteq \lim_{s\to \tilde s^\pm} E_{+,s,\infty}.
$$
$E_{\tilde s^+}$ and $E_{\tilde s^-}$ are basis projections, but they are \emph{not} equal and do \emph{not} correspond to the positive spectral projection $E_{+,\tilde s,\infty}$. Let $\omega_\pm\in\q\states^{(\infty)}$  denote the  quasi--free state with symbol $E_{\tilde s^\pm}$ (cf. Equation (\ref{symbolbis})). By \cite[Lemma 3.6]{A68}, there exists a  Bogoliubov transformation   $U_{\tilde{s}^{+},\tilde{s}^{-}}\in\BL(\H_{\infty})$ such that
$E_{\tilde s^+}=U_{\tilde{s}^{+},\tilde{s}^{-}}^{*}E_{\tilde s^-}U_{\tilde{s}^{+},\tilde{s}^{-}}$.  See Expression \eqref{Bogoliubov  automorphism} and comments around it.  Since the gap condition is satisfied for all $s\in \cal\setminus \lbrace \tilde s\rbrace$, it is clear that
$\det U_{\tilde{s}^{+},\tilde{s}^{-}}=-1$ (in particular, $\1_{\H_{\infty}}-U_{\tilde{s}^{+},\tilde{s}^{-}}$ is trace--class)  and so $\sigma (E_{\tilde s^+},E_{\tilde s^-})=-1$, by \cite[Lemma 7.17]{EK98}. This implies, by \cite[Theorem 6.30]{EK98} that the restrictions of $\omega_+$ and $\omega_-$ to the even algebra   $\A_{\infty}^{+}$ lead to inequivalent GNS representations and, therefore, we must have $\omega_+\neq\omega_-$. We will use $\omega_{\pm}$ to construct a family $\lbrace\widetilde\omega_\lambda\rbrace_{\lambda\in [0,1]}$ of \emph{ground states} for  $H_{\tilde s}\in\mathbf{H}$. We proceed in steps, as follows:
\begin{enumerate}
\item\label{proof:step1} Define the following $\Gamma_\infty$--invariant subspaces of $\H_{\infty}$:
$$
\mathcal K_0 \doteq \left(\left(E_{\tilde s^+}\wedge(\1_{\H_{\infty}}-E_{\tilde s^-})\right)\H_{\infty}\right)\oplus(\left(E_{\tilde s^-}\wedge(\1_{\H_{\infty}}-E_{\tilde s^+}\right))\H_{\infty})
$$
and
$$
\mathcal K_1 \doteq (\left(E_{\tilde s^+}\wedge E_{\tilde s^-}\right)\H_{\infty})\oplus(\left((\1_{\H_{\infty}}-E_{\tilde s^+})\wedge(\1_{\H_{\infty}}-E_{\tilde s^-})\right)\H_{\infty}).
$$
\item\label{proof:step2} Define the   self--dual $\CAR$ $\C$--algebras  $\mathfrak A_0\doteq \mathrm{sCAR}(\mathcal K_0,\a_{\infty})$ and  $\mathfrak A_1\doteq \mathrm{sCAR}(\mathcal K_1,\a_{\infty})$. Notice that the restrictions of $\omega_+$ and $\omega_-$ to $\mathfrak A_1$ do coincide and correspond to the quasi--free  state defined by the basis projection $\tilde P_+\doteq E_{\tilde s^+}\wedge E_{\tilde s^-}$. We denote this state as $\omega_{\tilde P_+}$. On the other hand, the restrictions of $\omega_{\pm}$  to  $\mathfrak A_0$ are different (this follows from
      $\det U_{\tilde{s}^{+},\tilde{s}^{-}}=-1$). We therefore define $\omega_0^\pm \doteq \omega_{\pm}|_{\mathfrak A_0}$. Notice that $\tilde P_+$ is a basis projection only if regarded as a projection on $\mathcal K_1$, for if we define $\tilde P_-\doteq (\1_{\H_{\infty}}-E_{\tilde s^+})\wedge (\1_{\H_{\infty}}-E_{\tilde s^-})$ and   $\tilde P_0\doteq E_{\tilde s^+}\wedge (\1_{\H_{\infty}}-E_{\tilde s^-})+E_{\tilde s^-}\wedge (\1_{\H_{\infty}}-E_{\tilde s^+})$, we obtain  $\tilde P_{+} +\tilde P_{-} + \tilde P_{0}=\1_{\H_{\infty}}$.
\item\label{proof:step3} By \cite[Prop. 5.3.19]{BratteliRobinson}, if $\varpi\in\states^{(\infty)}$ is a ground state and $f\in C^{\infty}$ has Fourier transform $\hat{f}$ satisfying $\supp\hat{f}\subset\R_{-}$, we have
\begin{equation}\label{eq:ground_four}
\varpi(\tau_{f}(A)^{*}\tau_{f}(A))=0,\quad{\text{for}}\quad A\in\A_{\infty},
\end{equation}
with $\tau_{f}(\B(\varphi))=\B(\hat{f}(H)\varphi)$, $\varphi\in\H_{\infty}$, where $H\in\BL(\H_{\infty})$ is a self--dual Hamiltonian on $(\H_{\infty},\Gamma_{\infty})$ associated with $\varpi$.
\item\label{proof:mix_state1} Take $\tilde P_0$ and $\tilde P_\pm$ as defined in step \ref{proof:step2}. We are now in the situation of \cite[Proposition 6.37]{EK98} (see also \cite{araki1985ground}). Following these references, one can show that all ground states for $H_{\tilde s}$ are of the form
\begin{equation}\label{eq:mix_state1}
\widetilde\omega(a_1a_0) = \omega_{\tilde P_+}(a_1) \phi_0 (a_0),
\end{equation}
where $\omega_{\tilde P_+}$   is the  quasi--free state defined in step \ref{proof:step2} and  $\phi_0$  an arbitrary state on $\mathfrak A_0$. We will provide  some  details for completeness, and will show how the path $\mathbf{H}\doteq\{H_{s,\infty}\}_{s\in\cal}$ induces a particular family of (mixed) ground states for  $H_{\tilde s}$.
\begin{enumerate}
\item We first suppose that $\widetilde\omega\in\states^{(\infty)}$ is a ground state for $H_{\tilde s}$. It then follows from step
     \ref{proof:step3}, Expression \eqref{eq:ground_four},
      that
      $\varphi(\B(\varphi)^{*}\B(\varphi))=0$, whenever $\varphi\in \tilde P_{-}\H_{\infty}$.
      Thus, by the \emph{Schwarz inequality} one notes that for any $C\in\A_{\infty}$ and $\varphi\in \tilde P_{-}\H_{\infty}$
$$
|\widetilde{\omega}(\C\B(\varphi))|^{2}\leq \widetilde{\omega}(\C C)\widetilde{\omega}(\B(\varphi)^{*}\B(\varphi))=0.
$$
Consequently, for any $C\in\A_{\infty}$ and $\varphi\in \tilde P_{-}\H_{\infty}$,
\begin{equation}
\label{eq:B-vanishes}
\widetilde{\omega}(\C\B(\varphi))=\widetilde{\omega}(\B(\varphi)^{*}C)=\widetilde{\omega}(C\B(\varphi))=\widetilde{\omega}(\B(\varphi)^{*}\C)=0.
\end{equation}
For $\varphi\in \mathcal K_1$ we write $\varphi=\varphi_{+}+\varphi_{-}$, with $\varphi_\pm =\tilde P_\pm\varphi$.
Notice  that $\Gamma_{\infty}\varphi_{+},\varphi_{-}\in \tilde P_{-}\H_{\infty}$ and $\B(\varphi)=\B(\Gamma_{\infty}\varphi_{+})^{*}+\B(\varphi_{-})\in\A_1$. Hence, in order to verify \eqref{eq:mix_state1}, by linearity it suffices to consider for $m,n\in\N_{0}$, with $m+n\in\N$,
$$
a_1=\B\left(\varphi_{1}\right)^{*}\cdots\B\left(\varphi_{m}\right)^{*}\B
\left(\widetilde{\varphi}_{1}\right)\cdots\B\left(\widetilde{\varphi}_{n}\right)
\in\A_{1}
$$
for any $\varphi_{i},\widetilde{\varphi}_{j}\in \tilde P_{-}\H_{\infty}$, with $i\in\{1,\ldots,m\}$ and $j\in\{1,\ldots,n\}$.
It then follows from (\ref{eq:B-vanishes}) above, on using the CAR relations, that $\widetilde \omega(a_1)= \omega_{\tilde P_+}(a_1)$
(cf.\cite[Section 5.2]{BratteliRobinson}). Equality in (\ref{eq:mix_state1}) follows upon defining $\phi_0\doteq \widetilde \omega|_{\mathfrak A_0}$.
\item From step \ref{proof:step3} and the definitions given in step \ref{proof:step2}, it follows in particular that $\omega_\pm(a_1a_0)=\omega_{\tilde P_+}(a_1)\omega^\pm_{0}(a_0)$. We can explicitly show that $\omega_+$ and $\omega_-$ are ground states for $H_{\tilde s}$, by considering the GNS representations of the states $\omega_{\tilde P_+}$, on one hand, and of $\omega_0^\pm$, on the other, the crucial points being the facts that $H_{\tilde s}$ is a positive operator when restricted  to $\tilde P_+ \mathcal H_\infty$ and that we have an isomorphism  $\A_{\infty}\cong\A_{0}\otimes \A_{1}$. The computations are the same as in the  proof of Theorem 3 in \cite{araki1985ground}  and we omit the details. Finally, since we already know that $\omega_+\neq\omega_-$,  for $\lambda\in[0,1]$,  we can consider a convex linear combination   $\lambda \omega_0^++(1-\lambda)\omega_0^-$  and define $\widetilde \omega_\lambda$ as in Equation (\ref{eq:state-thm2-mixed}).
\end{enumerate}
\end{enumerate}
\end{proof}
\begin{remark}\label{rem:2} The gap closing discussed in the theorem above entails the breakdown of the bijective correspondence between the ground state of each Hamiltonian $H_{s,\infty}$ along the path $\mathbf{H}\doteq\{H_{s,\infty}\}_{s\in\cal}$ and its corresponding spectral projection $E_{+,s,\infty}$. Indeed, if the gap closes at $\tilde s\in \mathring{\cal}$, then the fact that $E_{0,\tilde s, \infty}\neq 0$ implies that there are \emph{many different} ground states associated to the Hamiltonian $H_{\tilde s,\infty}$. Let $\mathbf{\omega}\subset\states^{(\infty)}$ be the set of ground states associated with the path $\mathbf{H}\subset\BL(\H_{\infty})$. In particular, let $\omega_{0}\in\q\states^{(\infty)}$ and $\omega_{1}\in\q\states^{(\infty)}$ be the quasi--free ground states associated with the gapped Hamiltonians $H_{0,\infty}$ and $H_{1,\infty}$, respectively, such that their corresponding basis projections satisfy $\sigma (E_{+,0,\infty}, E_{+,1,\infty})=-1$. Assuming that the gap closing occurs only  at $s=\tilde s$, one might ask whether it is possible to appropriately \emph{choose} one of the ground states associated to $H_{\tilde s,\infty}$ in order to define a \emph{continuous} path $\gamma\colon \cal \rightarrow \states^{(\infty)}$ such that, for each $s\in \cal$, $\gamma(s)$ is a ground state for $H_{s,\infty}$. As shown below, a straightforward consequence of  Theorem \ref{theorem:groun_states} is that the answer to this question turns out to be negative.
\end{remark}
\begin{corollary}\label{cor:gap}
Take same assumptions of Theorem \ref{theorem:groun_states} and Remark \ref{rem:2}. Then, the path  $\gamma\colon \cal \rightarrow \states^{(\infty)}$ from Remark \ref{rem:2} is not continuous.
\end{corollary}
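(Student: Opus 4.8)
The plan is to argue by contradiction, exploiting the uniqueness of the ground state away from the gap--closing point $\tilde s$ together with the genuine discontinuity of the spectral projections across $\tilde s$ that was established in Theorem \ref{theorem:groun_states}. Suppose, then, that $\gamma\colon\cal\to\states^{(\infty)}$ were weak$^{*}$--continuous with $\gamma(s)$ a ground state for $H_{s,\infty}$ for every $s\in\cal$. For each $s\in\cal\setminus\{\tilde s\}$ the Hamiltonian $H_{s,\infty}$ is gapped, so $E_{0,s,\infty}=0$ and its ground state is \emph{unique}: it is precisely the quasi--free ground state $\omega_{E_{+,s,\infty}}$ determined by the positive spectral projection $E_{+,s,\infty}$ through \eqref{eq:symbolbis2} (cf.\ the uniqueness discussion following \eqref{eq:sum_basis_proj}). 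Hence the hypothesised selection is forced, $\gamma(s)=\omega_{E_{+,s,\infty}}$ for all $s\neq\tilde s$, and the only remaining freedom is the single value $\gamma(\tilde s)$.

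First I would identify the one--sided limits of $\gamma$ at $\tilde s$. Recall from the proof of Theorem \ref{theorem:groun_states} that the norm limits $E_{\tilde s^\pm}\doteq\lim_{s\to\tilde s^\pm}E_{+,s,\infty}$ exist and are basis projections. The key auxiliary fact is that the assignment $S\mapsto\omega_S$ sending a symbol to its quasi--free state is weak$^{*}$--continuous along the path: since a quasi--free state is fixed on monomials in the generators by the Pfaffian formula \eqref{ass O0-00bis}, whose entries are the two--point functions $\langle\varphi_k,S\varphi_l\rangle_{\H_\infty}$, norm convergence $E_{+,s,\infty}\to E_{\tilde s^\pm}$ forces $\omega_{E_{+,s,\infty}}(A)\to\omega_\pm(A)$ first on the dense $^{*}$--subalgebra generated by $\{\B(\varphi)\}_{\varphi\in\H_\infty}$, and then, by the uniform bound $\Vert\omega_{E_{+,s,\infty}}\Vert=1$, on all of $\A_\infty$. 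Here $\omega_\pm$ denotes the quasi--free state with symbol $E_{\tilde s^\pm}$, as in Theorem \ref{theorem:groun_states}. Consequently $\lim_{s\to\tilde s^\pm}\gamma(s)=\omega_\pm$ in the weak$^{*}$--topology, which is metrizable by the discussion in Section \ref{sec:states}, so these sequential one--sided limits are well defined.

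Finally I would invoke the conclusion of Theorem \ref{theorem:groun_states} that $\omega_+\neq\omega_-$: because $\sigma(E_{\tilde s^+},E_{\tilde s^-})=-1$, the restrictions of $\omega_+$ and $\omega_-$ to the even subalgebra $\A_\infty^{+}$ yield inequivalent GNS representations, so the two states cannot coincide. Continuity of $\gamma$ at the interior point $\tilde s$ would require the two one--sided limits to agree, both equal to $\gamma(\tilde s)$, giving $\omega_+=\omega_-$ and a contradiction. Therefore no such continuous $\gamma$ can exist. The main obstacle is the auxiliary weak$^{*}$--continuity of the symbol--to--state map; everything else is a soft topological argument. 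The delicate point there is to upgrade the pointwise convergence on the generators to convergence on the whole of $\A_\infty$, which is handled by the uniform norm bound on states together with the density of polynomials in the $\B(\varphi)$ and the metrizability of $\states^{(\infty)}$.
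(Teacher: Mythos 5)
Your proof is correct, and its skeleton is the same as the paper's: both arguments rest on (i) the identification of the one--sided limits $\lim_{s\to\tilde s^{\pm}}\gamma(s)$ with the quasi--free states $\omega_{\pm}$ whose symbols are $E_{\tilde s^{\pm}}$, (ii) the inequality $\omega_{+}\neq\omega_{-}$ established in Theorem \ref{theorem:groun_states} via the index computation, and (iii) metrizability of $\states^{(\infty)}$ in the weak$^{*}$--topology. The differences lie in what each elaborates. The paper works directly with the explicit metric \eqref{eq:metric}: it picks a member $A_{m}$ of the separating family with $\omega_{+}(A_{m})\neq\omega_{-}(A_{m})$, bounds $\mathfrak{d}_{\states^{(\infty)}}(\omega_{+},\omega_{-})$ below by $w/2^{m}$, and constructs disjoint balls around $\omega_{\pm}$, concluding that no choice of $\gamma(\tilde s)$ can repair continuity and, moreover, that the discontinuity is of the first kind; your soft contradiction argument is shorter but does not yield that extra classification. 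Conversely, you make explicit a step the paper leaves implicit: identifying the weak$^{*}$ limits $\lim_{s\to\tilde s^{\pm}}\gamma(s)$ with the states $\omega_{\pm}$ (which Theorem \ref{theorem:groun_states} defines by their \emph{symbols}, as norm limits of spectral projections) requires knowing that norm convergence of symbols implies weak$^{*}$ convergence of the associated quasi--free states. Your justification --- convergence on monomials via the Pfaffian formula \eqref{ass O0-00bis}, extension by linearity to the dense subalgebra of polynomials in the generators, and the upgrade to all of $\A_{\infty}$ using the uniform bound $\Vert\omega\Vert=1$ --- is exactly the right lemma, and it is a genuine gap--fill relative to the paper's text, where this identification is simply asserted in the line $\omega_{\pm}\doteq\lim_{s\to\tilde s^{\pm}}\gamma(s)$.
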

\begin{proof}
Under the stated assumptions, it is clear that, for $s\in \cal \setminus \lbrace \tilde s\rbrace$, $\gamma(s)=\omega_{E_{+,s,\infty}}$, the quasi--free ground state associated to the (basis) projection $E_{+,s,\infty}$. We will show that, irrespective of how $\gamma(\tilde s)$ is chosen, $\gamma$ has a discontinuity at that point.
 This is done as follows:\\
First, recall that since $\A_{\infty}$ is separable, the set $\states^{(\infty)}\subset\A_{\infty}^{*}$ is metrizable in the weak$^{*}$--topology. In fact, if $\{A_{n}\}_{n\in\N}$ is a countable set of operators on $\A_{\infty}$ separating points on $\A_{\infty}^{*}$, so that $\Vert A_{n}\Vert_{\A_{\infty}}\leq1$ for all $n\in\N$, the metric
\begin{equation}\label{eq:metric}
\mathfrak{d}_{\states^{(\infty)}}(\omega_{1},\omega_{2})\doteq\sum_{n\in\N}\frac{1}{2^{n}}\left|\omega_{1}(A_{n})-\omega_{2}(A_{n})\right|,\qquad\omega_{1},\omega_{2}\in\states^{(\infty)},
\end{equation}
induces the weak$^{*}$--topology on the set of states $\states^{(\infty)}$. For the sake of simplicity, and without confusion, the metrizable space $(\states^{(\infty)}, \mathfrak{d}_{\states^{(\infty)}})$ is denoted by $\states^{(\infty)}$. In particular, the open ball with center on $\omega\in\states^{(\infty)}$ and radius $\varepsilon\in\R^{+}$ is defined by
$$
\mathfrak{B}\left(\omega,\varepsilon\right)\doteq\left\{\omega'\in\states^{(\infty)}\colon\mathfrak{d}_{\states^{(\infty)}}(\omega,\omega')<\varepsilon\right\}\subset\states^{(\infty)}.
$$
Now, in the proof of Theorem \ref{theorem:groun_states} it was shown that $\omega_{+}\neq \omega_{-}$, where $\omega_\pm \doteq \lim\limits_{s\to \tilde s^{\pm}} \gamma(s)$.
Hence, since $\{A_{n}\}_{n\in\N}$ separates points on $\A_{\infty}^{*}$, there exists $m\in \N$ such that $\omega_+(A_m)\neq\omega_-(A_m)$, with $A_{m}\in\left\{A_{n}\right\}_{n\in\N}$. Thus, define
$$
w\doteq |\omega_+(A_m)-\omega_-(A_m)|\in\R^{+},
$$
and observe that by Expression \eqref{eq:metric} one has
\begin{align*}
\mathfrak{d}_{\states^{(\infty)}}(\omega_{+},\omega_{-})&=\sum_{n\in\N}\frac{1}{2^{n}}\left|\omega_{+}(A_{n})-\omega_{-}(A_{n})\right|\\
&\geq \frac{1}{2^m}w.
\end{align*}
Hence, one can take $\varepsilon\doteq\frac{1}{2^{m+1}}w$ so that $\mathfrak{B}\left(\omega_-,\varepsilon\right)\cap\mathfrak{B}\left(\omega_{+},\varepsilon\right) = \emptyset$.  Furthermore, given that $\omega_-$ is the left limit of $\gamma(s)$ at $\tilde s$ and $\omega_+$ its right limit, observe that it is enough to consider $\varepsilon'\in\R^{+}$, with $\varepsilon'<\varepsilon$, such that $\omega_+\not\in\mathfrak{B}\left(\omega_-,\varepsilon'\right)$ and $\mathfrak{B}\left(\omega_-,\varepsilon'\right)\cap \lbrace \gamma(s)\colon s >\tilde s\rbrace = \emptyset$. Therefore the path $\gamma\colon\cal \rightarrow \states^{(\infty)}$ is not continuous at $s=\tilde{s}$. More precisely, at that point, $\gamma$ has a discontinuity of the \emph{first kind}. See \cite[Def. 13.1]{Choquet1966}.
\end{proof}
\section{Technical Proofs}\label{sec: tech_proofs}
\subsection{Existence of the spectral flow automorphism}
\begin{lemma}\label{lemma:impor_res}
Take $\cal\equiv[0,1]$ and let $\mathbf{H}$ be a family of Hamiltonians as in Assumption \ref{assump:hamil}. For any $s\in\cal$, $E_{+,s}$ will denote the spectral projection associated to the positive part of $\spec(H_{s})$. Then, for the family of spectral projections $\{E_{+,s}\}_{s\in\cal}$, there exists a family of automorphisms $\{\kappa_{s}\}_{s\in\cal}$ on $\BL(\H)$ satisfying
\begin{align*}
\kappa_{s}\left(E_{+,s}\right)=E_{+,0}.
\end{align*}
\end{lemma}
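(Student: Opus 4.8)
The plan is to realize each $\kappa_{s}$ as conjugation by a unitary spectral--flow operator $V_{s}$ obtained by integrating a suitable linear differential equation, so that $\kappa_{s}(A)\doteq V_{s}^{*}AV_{s}$ transports the moving projection $E_{+,s}$ back to the fixed one $E_{+,0}$.

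First I would establish that $s\mapsto E_{+,s}$ is norm--$C^{1}$. Because each $H_{s}$ is a gapped self--dual Hamiltonian (Definition \ref{def: spectral gap fer}), its spectrum is symmetric about the origin and the open interval $(-\g,\g)$ is free of spectrum; hence the positive spectral projection admits the Riesz representation
$$
E_{+,s}=\frac{1}{2\pi\ii}\oint_{C_{+}}\left(z\1_{\H}-H_{s}\right)^{-1}\d z,
$$
where $C_{+}$ is a fixed positively oriented contour encircling $\spec(H_{s})\cap\R^{+}$; the uniform gap lets one choose a single contour valid for all $s$. Differentiating under the integral sign, using $\p_{s}(z\1_{\H}-H_{s})^{-1}=(z\1_{\H}-H_{s})^{-1}(\p_{s}H_{s})(z\1_{\H}-H_{s})^{-1}$ together with $\p_{s}H_{s}\in\BL(\H)$ from Assumption \ref{assump:hamil}, yields a norm--continuous derivative $\p_{s}E_{+,s}\in\BL(\H)$. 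I then set
$$
K_{s}\doteq\left[\p_{s}E_{+,s},\,E_{+,s}\right]=(\p_{s}E_{+,s})E_{+,s}-E_{+,s}(\p_{s}E_{+,s}),
$$
which is skew--adjoint ($K_{s}^{*}=-K_{s}$) since $E_{+,s}$ and $\p_{s}E_{+,s}$ are self--adjoint, and which depends continuously on $s$.

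Next I would solve the linear initial value problem $\p_{s}V_{s}=K_{s}V_{s}$ with $V_{0}=\1_{\H}$. As $s\mapsto K_{s}$ is norm--continuous and bounded, a unique solution exists (e.g. via the time--ordered Dyson series), and $\p_{s}(V_{s}^{*}V_{s})=V_{s}^{*}(K_{s}^{*}+K_{s})V_{s}=0$ shows that $V_{s}$ is unitary for every $s$; this is the operator invoked in the later statements (cf. Corollary \ref{corol:V_cauchy}). Defining $\kappa_{s}(A)\doteq V_{s}^{*}AV_{s}$ produces a $^{*}$--automorphism of $\BL(\H)$. For the intertwining relation I consider $W_{s}\doteq V_{s}^{*}E_{+,s}V_{s}$ and differentiate; using $\p_{s}V_{s}^{*}=-V_{s}^{*}K_{s}$ one finds $\p_{s}W_{s}=V_{s}^{*}\left(\p_{s}E_{+,s}-[K_{s},E_{+,s}]\right)V_{s}$. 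The standard projection identity $E_{+,s}(\p_{s}E_{+,s})E_{+,s}=0$, obtained by differentiating $E_{+,s}^{2}=E_{+,s}$ and sandwiching by $E_{+,s}$, gives $[K_{s},E_{+,s}]=\p_{s}E_{+,s}$, whence $\p_{s}W_{s}=0$ and $W_{s}\equiv W_{0}=E_{+,0}$. Thus $\kappa_{s}(E_{+,s})=V_{s}^{*}E_{+,s}V_{s}=E_{+,0}$, as claimed.

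The main obstacle is the first step: ensuring that $E_{+,s}$ is genuinely norm--differentiable with a bounded derivative, which hinges on the gap remaining open uniformly in $s$. On a fixed $\H$ this follows from the $s$--gapped hypothesis and continuity of the contour integral, but the quantitative uniformity required for the infinite--volume limit (Assumption \ref{assump:hamil}(b)) is exactly what the Combes--Thomas resolvent estimates and Corollary \ref{corol:V_cauchy} are designed to supply; here I only need existence on $\H$, so the contour argument suffices. A secondary point worth checking is that $K_{s}$ commutes appropriately with $\a$, so that $V_{s}$ is a Bogoliubov transformation, but that refinement belongs to the parity analysis of Theorem \ref{theorem:states_algebras2}(2) rather than to the present intertwining statement.
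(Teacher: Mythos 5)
Your proof is correct, and it takes a genuinely different route from the paper's at the decisive step. Both arguments begin the same way (Riesz contour representation of $E_{+,s}$, norm--differentiability of $s\mapsto E_{+,s}$ via the second resolvent identity, and the projection identities obtained from differentiating $E_{+,s}^{2}=E_{+,s}$), but you then use \emph{Kato's transformation function}: the skew--adjoint commutator generator $K_{s}=[\p_{s}E_{+,s},E_{+,s}]$, for which the intertwining relation $[K_{s},E_{+,s}]=\p_{s}E_{+,s}$ is a two--line algebraic consequence of $E_{+,s}(\p_{s}E_{+,s})E_{+,s}=0$, and the constancy of $W_{s}=V_{s}^{*}E_{+,s}V_{s}$ follows at once. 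The paper instead takes the quasi--adiabatic (spectral flow) generator of Bachmann--Michalakis--Nachtergaele--Sims, $\mathfrak{D}_{\g,s}=\int_{\R}\e^{\ii tH_{s}}\left(\p_{s}H_{s}\right)\e^{-\ii tH_{s}}\mathfrak{W}_{\g}(t)\,\d t$ in \eqref{eq: operator D}, and verifies $\p_{s}E_{+,s}=-\ii[\mathfrak{D}_{\g,s},E_{+,s}]$ by a spectral--resolution and Fourier computation. Your construction is more elementary and self--contained: it needs no weight function $\mathfrak{W}_{\g}$ and works for any norm--differentiable family of projections, the gap entering only through differentiability of $E_{+,s}$. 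What the paper's choice buys is everything downstream: because $\mathfrak{D}_{\g,s}$ is a weighted time--average of the Heisenberg evolution of $\p_{s}H_{s}$, it inherits the Combes--Thomas decay of the Hamiltonians (used for the trace--class--per--unit--volume statements in Lemmata \ref{lemma:unit_oper2} and \ref{lemma:unit_oper3}), satisfies $\tr_{\H}(\mathfrak{D}_{\g,s})=0$ by self--duality of $\p_{s}H_{s}$ (the key to the determinant rigidity in Corollary \ref{corol:sign} and Lemma \ref{lemma:VI_cauchy}), and behaves well under the thermodynamic limit (Lemma \ref{theor:unit_oper}, Corollary \ref{corol:V_cauchy}). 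Note also that your $V_{s}$ is a \emph{different} unitary from the one singled out by the differential equation \eqref{eq:spectral proof2} and invoked in Theorem \ref{theorem:states_algebras2}(1); your argument proves the present lemma as stated, but the later results would require either the paper's specific $V_{s}$ or a separate locality analysis of the Kato generator $K_{s}$, which is less direct since $K_{s}$ is built from spectral projections rather than from the dynamics of $H_{s}$. As a small bonus, your generator handles the antiunitary involution cleanly: since $\a E_{+,s}\a=\1_{\H}-E_{+,s}$, one gets $\a K_{s}\a=K_{s}$ with no complication from antilinearity, so $\a V_{s}\a=V_{s}$ follows by uniqueness of the ODE solution, recovering Corollary \ref{corol:Vs} for your flow as well.
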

\begin{proof}
The arguments of the proof are completely standard and we state these for the sake of completeness, c.f. \cite{kato2013perturbation,bachmann2012automorphic,nachtergaele2018quasi}. Take $\cal\equiv[0,1]$ and consider $\mathbf{H}\doteq\{H_{s}\}_{s\in\cal}\in\BL(\H)$ be a differentiable family of self--dual Hamiltonians. Fix $s\in\cal$ and let $E_{+,s}$ be the spectral projection of $H_{s}$ on $\Sigma_{s}$. Note that if the automorphism $\kappa_{s}\colon\BL(\H)\to\BL(\H)$ satisfying $\kappa_{s}\left(E_{+,s}\right)=E_{+,0}$ exists, this implies that it is unitarily implemented by a differentiable unitary operator $V_{s}$ defined by
\begin{equation}\label{eq:spectral proof1}
 \kappa_{s}\left(E_{+,s}\right)\doteq V_{s}^{*}E_{+,s}V_{s}, \quad\text{with}\quad V_{0}=\pm\1_{\H},
\end{equation}
and satisfying the differential equation
\begin{equation}\label{eq:spectral proof2}
 \p_{s}V_{s}=-\ii \mathfrak{D}_{\g,s}V_{s},
\end{equation}
where for the gap $\g$, as in Definition \ref{def: phase of the matter}, $\mathfrak{D}_{\g,s}\colon\cal\to\BL(\H)$ is a pointwise self--adjoint bounded operator. Here, $\p_{s}$ denotes the derivative with respect to $s\in\cal$. Now, for any $H_{s}$, we write its spectral projection on $\Sigma_{s}$ by
\begin{align}\label{eq: spec_pro_int}
E_{+,s}=\frac{1}{2\pi\ii}\oint_{\a_{s}}R_{\zeta}(H_{s})\d\zeta,
\end{align}
where, for any $s\in\cal$, $R_{\zeta}(H_{s})\in\BL(\H)$ is the resolvent set of $H_{s}$. In \eqref{eq: spec_pro_int}, for any $s\in\cal$, $\a_{s}$ is a  \emph{chain}, that is, $\a_{s}$ is a finite collection of closed rectifiable curves $\gamma_{s}$ in $\CP$. In particular, $\a_{s}$ surrounds $\Sigma_{s}$ and is in the complement of $\widetilde{\Sigma}_{s}$. By using the \emph{second} resolvent equation, i.e.,
\begin{equation}
\label{eq:sec_res}
R_{\zeta}(A)-R_{\zeta}(B)=R_{\zeta}(A)(B-A)R_{\zeta}(B),
\end{equation}
for any operators $A,B\in\BL(\H)$ and any $\zeta\notin\spec(A)\cap\spec(B)$, one can show that
\begin{equation}\label{eq: proj_resol}
 \p_{s}E_{+,s}=-\frac{1}{2\pi\ii}\oint_{\a_{s}}R_{\zeta}(H_{s})\left(\p_{s}H_{s}\right)R_{\zeta}(H_{s})\d\zeta,
\end{equation}
and it follows that the derivative $\p_{s}E_{+,s}$ is well--defined on $\cal$. A combination of \eqref{eq:spectral proof1}--\eqref{eq:spectral proof2} and $\kappa_{s}\left(E_{+,s}\right)=E_{+,0}$ yield us to
\begin{equation}\label{eq:spectral proof3}
\p_{s}E_{+,s}=-\ii[\mathfrak{D}_{\g,s},E_{+,s}].
\end{equation}
Additionally, since for any $s\in\cal$, $E_{+,s}$ is an orthogonal projection then
$$
E_{+,s}^{\perp}\left(\p_{s}E_{+,s}\right)E_{+,s}^{\perp}=E_{+,s}\left(\p_{s}E_{+,s}\right)E_{+,s}=0,
$$
where for any $s\in\cal$, $E_{+,s}^{\perp}$ denotes the orthogonal \emph{complement} of $E_{+,s}$, i.e., $E_{+,s}^{\perp}\doteq1-E_{+,s}$. From the latter identity we get the following one
$$
\p_{s}E_{+,s}=E_{+,s}\left(\p_{s}E_{+,s}\right)E_{+,s}^{\perp}+E_{+,s}^{\perp}\left(\p_{s}E_{+,s}\right)E_{+,s},
$$
and together with \eqref{eq: proj_resol} and the fact that $E_{+,s}, E_{-,s}$ are basis projections, see \eqref{eq:sum_basis_proj}, we arrive at
\begin{equation}\label{eq:projec_deriv}
\p_{s}E_{+,s}=-\frac{1}{\pi\ii}\Ree\left(\oint_{\a_{s}}\left(E_{+,s}R_{\zeta}(H_{s})\left(\p_{s}H_{s}\right)R_{\zeta}(H_{s})E_{-,s}\right)\d\zeta\right).
\end{equation}
Here, the self--adjoint operator $\Ree(A)\in\BL(\H)$ is the \emph{real part} of $A\in\BL(\H)$, given by $\Ree\left(A\right)\doteq \frac{1}{2}\left(A+A^{*}\right)$. Similarly, $\Imm(A)\in\BL(\H)$, the imaginary part of $A$, is the self--adjoint operator usually defined by $\Imm\left(A\right)\doteq \frac{1}{2\ii}\left(A-A^{*}\right)$.\\
Then, the existence of the automorphism $\kappa_{s}$ is equivalent to finding the operator $\mathfrak{D}_{\g,s}$ such that \eqref{eq:spectral proof3} and \eqref{eq:projec_deriv} are satisfied. This is precisely that is done in \cite{bachmann2012automorphic}, and in the present context we explicitly write $\p_{s}E_{+,s}$ as
$$
\p_{s}E_{+,s}=\int_{\Sigma_{s}}\int_{\Sigma_{s}}\frac{2}{\mu+\lambda}\Ree\left(\d E_{\mu,+,s}\left(\p_{s}H_{s}\right)\d E_{-\lambda,+,s}\right),
$$
where for any $s\in\cal$, $E_{\mu,+,s}$ is a \emph{resolution of the identity} supported on the positive (negative) part of $\spec(H_{s})$, i.e.,
\begin{equation}\label{eq:resol_iden}
E_{\pm,s}\doteq\int_{\pm\Sigma_{s}}\d E_{\lambda,\pm,s}.
\end{equation}
The next step is to verify that the self--adjoint bounded operator
\begin{equation}\label{eq: operator D}
\mathfrak{D}_{\g,s}\doteq \int_{\R}\e^{\ii tH_{s}}\left(\p_{s}H_{s}\right)\e^{-\ii tH_{s}}\mathfrak{W}_{\g}(t)\d t,\quad\text{for any}\quad s\in\cal,
\end{equation}
satisfies \eqref{eq:spectral proof3} and \eqref{eq:projec_deriv}. Here, $\mathfrak{W}_{\g}(t)\colon\R\to\R$ is an odd function on $L^{1}(\R)$ such that its Fourier transform, $\widehat{\mathfrak{W}}_{\g}\colon\R\to\R$ is given for $\mu\neq0$ by
$$
\widehat{\mathfrak{W}}_{\g}(\mu)\equiv-\frac{1}{\sqrt{2\pi}\mu}.
$$
For a complete description of the properties of $\mathfrak{W}_{\g}$ see \cite{bachmann2012automorphic,michalakis2013stability,nachtergaele2018quasi}. We now note that for any operator $B\in\BL(\H)$ and any orthogonal projection $P\in\BL(\H)$ we get
$$
-\ii[B,P]=\ii\left(PB(1-P)-(1-P)BP\right).
$$
In particular, note that by taking $B$ as $\mathfrak{D}_{\g,s}$ and $P$ as the spectral projection of $H_{s}$ on $\Sigma_{s}$, i.e., $E_{+,s}$, we have
\begin{eqnarray*}
-\ii[\mathfrak{D}_{\g,s},E_{+,s}]&=&\ii\int_{\R}\int_{\Sigma_{s}}\int_{-\Sigma_{s}}\e^{\ii t(\mu-\lambda)}\d E_{\mu,+,s}\left(\p_{s}H_{s}\right)\d E_{\lambda,-,s}\mathfrak{W}_{\g}(t)\d\lambda\d\mu\d t\\
&&-\ii\int_{\R}\int_{-\Sigma_{s}}\int_{\Sigma_{s}}\e^{\ii t(\lambda-\mu)}\d E_{\lambda,-,s}\left(\p_{s}H_{s}\right)\d E_{\mu,+,s}\mathfrak{W}_{\g}(t)\d\mu\d\lambda\d t\\
&=&
\ii\sqrt{2\pi}\int_{\Sigma_{s}}\int_{-\Sigma_{s}}\d E_{\mu,+,s}\left(\p_{s}H_{s}\right)\d E_{\lambda,-,s}\widehat{\mathfrak{W}}_{\g}(\lambda-\mu)\d\lambda\d\mu\\
& &-\ii\sqrt{2\pi}\int_{-\Sigma_{s}}\int_{\Sigma_{s}}\d E_{\lambda,-,s}\left(\p_{s}H_{s}\right)\d E_{\mu,+,s}\widehat{\mathfrak{W}}_{\g}(\mu-\lambda)\d\mu\d\lambda\\
&=&\int_{\Sigma_{s}}\int_{\Sigma_{s}}\frac{2}{\mu+\lambda}\Ree\left(\d E_{\mu,+,s}\left(\p_{s}H_{s}\right)\d E_{-\lambda,+,s}\right)
\end{eqnarray*}
where we have used \eqref{eq:resol_iden} and that $\widehat{\mathfrak{W}}_{\g}$ is an odd function.
\end{proof}
In particular, the unitary operator $V_{s}$ satisfying the differential equation \eqref{eq:spectral proof2} commutes with the involution $\a$, i.e., $\a V_{s}=V_{s}\a$. In fact, for any $s\in\cal$, let $C_{s}\in\BL(\H)$ be defined by
$C_{s}\doteq[\a,V_{s}]$, such that $C_{s}^{*}=[V_{s}^{*},\a]$. We would like to show that $C_{s}=0$. To do this, observe that the self--adjoint bounded operator $\mathfrak{D}_{\g,s}$ given by Expression \eqref{eq: operator D}, commutes with $\a$. Using \eqref{eq:spectral proof2}, after some calculations we have
$$
\p_{s}C_{s}=-\ii\mathfrak{D}_{\g,s}C_{s}\qquad\text{and}\qquad\p_{s}C_{s}^{*}=-\ii C_{s}^{*}\mathfrak{D}_{\g,s}.
$$
From the left hand side equation one has $\p_{s}C_{s}^{*}=\ii C_{s}^{*}\mathfrak{D}_{\g,s}$, which comparing with the right hand side equation, we obtain $C_{s}=0$. We have proven:
\begin{corollary}[Bogoliubov transformation]\label{corol:Vs}
For any $s\in\cal\equiv[0,1]$, the unitary operator $V_{s}$ satisfying the differential equation \eqref{eq:spectral proof2} commutes with the involution $\a$, i.e., $\a V_{s}=V_{s}\a$, then $V_{s}$ is a Bogoliubov transformation, see \eqref{Bogoliubov  automorphism}.
\end{corollary}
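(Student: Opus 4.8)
The plan is to exploit the self--dual structure of the flow to turn the claim into a homogeneous linear differential equation for the ``defect'' $[\a,V_{s}]$ and then invoke uniqueness of solutions. First I would record how the generator $\mathfrak{D}_{\g,s}$ interacts with the antiunitary involution $\a$. Because each $H_{s}$ is a self--dual Hamiltonian, $\a H_{s}\a=-H_{s}$; since $\a$ is antilinear it conjugates the scalar $\ii t$, and these two sign changes cancel, so that $\a\,\e^{\pm\ii tH_{s}}\,\a=\e^{\pm\ii tH_{s}}$ for all $t\in\R$. Differentiating $\a H_{s}\a=-H_{s}$ in $s$ shows that $\p_{s}H_{s}$ is again self--dual, $\a(\p_{s}H_{s})\a=-\p_{s}H_{s}$. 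Inserting these identities into the explicit formula \eqref{eq: operator D} and using that $\mathfrak{W}_{\g}$ is real--valued gives
\begin{equation*}
\a\,\mathfrak{D}_{\g,s}\,\a=\int_{\R}\e^{\ii tH_{s}}\big(\a(\p_{s}H_{s})\a\big)\e^{-\ii tH_{s}}\,\mathfrak{W}_{\g}(t)\,\d t=-\mathfrak{D}_{\g,s},
\end{equation*}
so that $\mathfrak{D}_{\g,s}$ is itself a self--dual Hamiltonian; this is the precise sense in which its generator is compatible with $\a$.

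Next I would set $C_{s}\doteq[\a,V_{s}]=\a V_{s}-V_{s}\a$ and differentiate using the flow equation \eqref{eq:spectral proof2}. Passing $\a$ through the scalar $-\ii$ turns it into $+\ii$, and the relation $\a\mathfrak{D}_{\g,s}=-\mathfrak{D}_{\g,s}\a$ just established then yields the homogeneous linear equation
\begin{equation*}
\p_{s}C_{s}=-\ii\,\mathfrak{D}_{\g,s}\,C_{s}.
\end{equation*}
The coefficient $-\ii\mathfrak{D}_{\g,s}$ is bounded and strongly continuous in $s$, and the initial datum vanishes, $C_{0}=[\a,\pm\1_{\H}]=0$, because $\pm\1_{\H}$ commutes with $\a$. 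By uniqueness of solutions of such an operator--valued linear ODE (equivalently, a Gr\"onwall estimate on $\Vert C_{s}\Vert_{\BL(\H)}$) I conclude $C_{s}=0$ for every $s\in\cal$, i.e.\ $\a V_{s}=V_{s}\a$.

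Finally, $V_{s}$ is unitary --- the generator $-\ii\mathfrak{D}_{\g,s}$ is skew--adjoint and $V_{0}=\pm\1_{\H}$ is unitary --- so a unitary operator on $(\H,\a)$ commuting with $\a$ is by definition a Bogoliubov transformation, see \eqref{Bogoliubov  automorphism}, which is exactly the assertion. The step I expect to require the most care is the bookkeeping of the antilinearity of $\a$: it simultaneously conjugates the scalars $\pm\ii t$ and $-\ii$ while intertwining nontrivially with $H_{s}$ and $\p_{s}H_{s}$, and it is the exact cancellation of these sign changes that produces the \emph{correct} homogeneous equation for $C_{s}$ rather than one involving the anticommutator $\a V_{s}+V_{s}\a$. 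Once the symmetry $\a\mathfrak{D}_{\g,s}\a=-\mathfrak{D}_{\g,s}$ is in hand, the remainder is a routine uniqueness argument for a linear ODE with vanishing initial condition.
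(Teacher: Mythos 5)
Your proof is correct and follows essentially the same strategy as the paper: introduce the defect $C_{s}\doteq[\a,V_{s}]$, show it obeys a homogeneous linear ODE, and conclude $C_{s}=0$ from the vanishing initial datum. The only differences are in two details, both of which actually favor your write-up. First, the paper's prose asserts that $\mathfrak{D}_{\g,s}$ \emph{commutes} with $\a$, whereas your computation $\a\,\mathfrak{D}_{\g,s}\,\a=-\mathfrak{D}_{\g,s}$ (i.e.\ $\mathfrak{D}_{\g,s}$ is itself self--dual) is the correct symmetry: it is the anticommutation of $\a$ with $\mathfrak{D}_{\g,s}$, combined with the conjugation of the scalar $-\ii$, that makes the generator $-\ii\mathfrak{D}_{\g,s}$ of \eqref{eq:spectral proof2} commute with $\a$ and hence produces the closed equation $\p_{s}C_{s}=-\ii\mathfrak{D}_{\g,s}C_{s}$ rather than an equation involving the anticommutator $\a V_{s}+V_{s}\a$; the paper's displayed ODEs are consistent only with your version of the symmetry, so its statement ``commutes with $\a$'' is a slip. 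Second, to conclude, the paper compares the adjoint of the equation for $C_{s}$ with the separately derived equation $\p_{s}C_{s}^{*}=-\ii C_{s}^{*}\mathfrak{D}_{\g,s}$, forcing $C_{s}^{*}\mathfrak{D}_{\g,s}=0$ and hence constancy of $C_{s}$, while you invoke uniqueness (Gr\"onwall) for the linear ODE with $C_{0}=[\a,\pm\1_{\H}]=0$; both are valid, and your route is the more standard one.
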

One primary consequence of Lemma \ref{lemma:impor_res} and Corollary \ref{corol:Vs} is the existence of a strongly continuous family of one--parameter (Bogoliubov) group $\Upsilon_{s}\doteq\left\{\Upsilon_{s}\right\}_{s\in\cal\in\R}$ of
$^{*}$--automorphisms of $\sCAR$, implemented by the Bogoliubov automorphisms $V_{s}$. To be precise, for the one--parameter unitary group $\{V_{s}\}_{s\in\cal}$ implementing the family of automorphism $\left\{\kappa_{s}\right\}_{s\in\cal}$ of Lemma \ref{lemma:impor_res} over the family of spectral projections $\{E_{+,s}\}_{s\in\cal}$ we are able to show that, for any $s$, the (Bogoliubov) $^{*}$--automorphism
\begin{equation}\label{eq:autom}
\Upsilon_{s}\left(\B\left(\varphi\right)\right)\equiv\chi_{V_{s}^{*}}\left(\B\left(\varphi\right)\right)=\B\left(V_{s}^{*}\varphi\right),
\end{equation}
exists, for any $\varphi\in\H$. The latter can be easily verified using bilinear elements, which are described in Definition \ref{def trace state copy(1)}. More generally, for any family of self--dual Hamiltonians $\{H_{s}\}_{s\in\cal}\in\BL(\H)$ as in Assumption \ref{assump:hamil}, we have an associated family of bilinear elements $\{\Bil{H_{s}}\}_{s\in\cal}\in\sCAR$ given by,
$$
\Upsilon_{s}\left(\Bil{H_{s}}\right)=\Bil{H_{0}},\qquad\text{for any}\qquad s\in\cal,
$$
(see Definition \ref{def trace state copy(1)}).\\
As stressed in comments around Expression \eqref{eq:top_index}, for any pair $P_{1},P_{2}\in\fp$ there exists a Bogoliubov transformation $U$ relating both, i.e., a unitary operator $U\in\BL(\H)$ so that $P_{2}=U^{*}P_{1}U$, with $U\a=\a U$. Thus, if $\det(U)=1$ ($\det(U)=-1$) we say that $U$ is in the positive (negative) connected component. Following \cite[Theo. 6.30 and Lemma 7.17]{EK98} for the special class of $U$ satisfying (i) $U\a=\a U$ and (ii) $\1_{\H}-U$ trace class, the topological index $\sigma(P,U^{*}PU)$ coincides with $\det(U)$. Note that Lemma \ref{lemma:impor_res} tells us about the existence of a family of unitary operators $\{V_{s}\}_{s\in\cal}$ which implements the family of automorphisms $\{\kappa_{s}\}_{s\in\cal}$ on $\BL(\H)$. However, we need to specify with which kind of Hamiltonians we are dealing. A wide class of fermion systems are those satisfying Lemma \ref{theor:unit_oper} and Proposition \ref{Combes-Thomas} below. More concretely, our results will permit to consider disordered fermions systems in which the spectral gap does not close. Note that a suitable control of the properties of $\{V_{s}\}_{s\in\cal}$ is closely related to the recently results found by Hastings in \cite{hastings2019stability}. Then, as already mentioned we invoke \cite[Theo. 6.30 and Lemma 7.17]{EK98} in order to distinguish different physical systems (see Definition \ref{def: phase of the matter}), which are classified by two components even in the interacting setting, see \cite{nachtergaele2018lieb}. In particular, we have:
\begin{corollary}\label{corol:sign}
Consider a family of self--dual Hamiltonians $\mathbf{H}\in\BL(\H)$ satisfying Assumption \ref{assump:hamil}. For any $s\in\cal\equiv[0,1]$, take the Bogoliubov transformation $V_{s}$ of Corollary \ref{corol:Vs}. Assume that for any $s\in\cal$, $\1_{\H}-V_{s}$ and $\mathfrak{D}_{\g,s}\in\BL(\H)$ are trace class, with $\mathfrak{D}_{\g,s}$ given by \eqref{eq:spectral proof2}. Then $\det\left(V_{s}\right)=\det\left(V_{0}\right)$.
\end{corollary}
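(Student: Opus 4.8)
The plan is to show that $s\mapsto\det(V_{s})$ is constant on $\cal$ by differentiating it and checking that its derivative vanishes identically. First I would note that the hypotheses make everything well posed: since $\1_{\H}-V_{s}$ is trace class, the (Fredholm) determinant $\det(V_{s})$ is defined, and since $V_{s}$ is a unitary Bogoliubov transformation (Corollary \ref{corol:Vs}) we have $\det(V_{s})\in\{+1,-1\}$ by \eqref{orientation}; in particular it never vanishes. Moreover $\p_{s}V_{s}=-\ii\mathfrak{D}_{\g,s}V_{s}$ by the defining equation \eqref{eq:spectral proof2}, and this is trace class because $\mathfrak{D}_{\g,s}$ is trace class by assumption and $V_{s}$ is bounded.

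With these facts in hand I would invoke Jacobi's (Liouville's) formula for the derivative of a Fredholm determinant along a differentiable trace--class perturbation of the identity,
\begin{equation*}
\p_{s}\det(V_{s})=\det(V_{s})\,\tr_{\H}\left(V_{s}^{-1}\p_{s}V_{s}\right).
\end{equation*}
Using unitarity $V_{s}^{-1}=V_{s}^{*}$ together with \eqref{eq:spectral proof2} gives $V_{s}^{-1}\p_{s}V_{s}=-\ii V_{s}^{*}\mathfrak{D}_{\g,s}V_{s}$, and cyclicity of the trace -- legitimate since $\mathfrak{D}_{\g,s}$ is trace class and $V_{s}$ is bounded -- combined with $V_{s}V_{s}^{*}=\1_{\H}$ reduces the computation to
\begin{equation*}
\tr_{\H}\left(V_{s}^{*}\mathfrak{D}_{\g,s}V_{s}\right)=\tr_{\H}\left(\mathfrak{D}_{\g,s}V_{s}V_{s}^{*}\right)=\tr_{\H}\left(\mathfrak{D}_{\g,s}\right).
\end{equation*}
Thus the whole statement comes down to proving $\tr_{\H}(\mathfrak{D}_{\g,s})=0$.

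The decisive step is therefore to recognise $\mathfrak{D}_{\g,s}$ as a \emph{self--dual} operator. Differentiating the relation $\a H_{s}\a=-H_{s}$ in $s$ yields $\a(\p_{s}H_{s})\a=-\p_{s}H_{s}$, while the antilinearity of $\a$ absorbs the factor $\ii$ in each power and turns $\a H_{s}\a=-H_{s}$ into $\a\,\e^{\pm\ii tH_{s}}\a=\e^{\pm\ii tH_{s}}$. Since $\mathfrak{W}_{\g}$ is real valued, conjugating the integrand of \eqref{eq: operator D} by $\a$ then produces exactly one sign change (from the middle factor $\p_{s}H_{s}$), so that $\a\mathfrak{D}_{\g,s}\a=-\mathfrak{D}_{\g,s}$; as $\mathfrak{D}_{\g,s}$ is also self--adjoint it is a self--dual Hamiltonian in the sense of Definition \ref{def one particle hamiltinian}. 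Self--dual operators have vanishing trace: from the identity $\tr_{\H}(\a A\a)=\overline{\tr_{\H}(A)}$ (valid for the antiunitary involution $\a$) together with $\mathfrak{D}_{\g,s}^{*}=-\a\mathfrak{D}_{\g,s}\a$ one obtains $\overline{\tr_{\H}(\mathfrak{D}_{\g,s})}=-\overline{\tr_{\H}(\mathfrak{D}_{\g,s})}$, hence $\tr_{\H}(\mathfrak{D}_{\g,s})=0$. Feeding this back gives $\p_{s}\det(V_{s})=0$ on the connected interval $\cal$, so $\det(V_{s})=\det(V_{0})$.

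I expect the main obstacle to be the two analytic justifications rather than the algebra. First, justifying Jacobi's formula in the infinite--dimensional Fredholm setting, i.e. that $s\mapsto V_{s}-\1_{\H}$ is differentiable in trace norm with derivative $-\ii\mathfrak{D}_{\g,s}V_{s}$, which leans on the trace--class hypotheses on $\mathfrak{D}_{\g,s}$ and the regularity of the solution of \eqref{eq:spectral proof2}. Second, making the conjugation--by--$\a$ manipulation of the operator integral \eqref{eq: operator D} rigorous under the trace, which requires exchanging $\a$ with the integral and with the exponentials. Once these are in place the conclusion is immediate; alternatively, one could bypass differentiation entirely by observing that $s\mapsto\det(V_{s})$ is continuous in $s$ and takes values in the discrete set $\{+1,-1\}$, and is therefore constant on the connected set $\cal$.
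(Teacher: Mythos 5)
Your proof is correct, and its skeleton coincides with the paper's: both apply Jacobi's formula $\p_{s}\det(V_{s})=\det(V_{s})\tr_{\H}\left(V_{s}^{*}\p_{s}V_{s}\right)$, insert the differential equation \eqref{eq:spectral proof2}, and use cyclicity of the trace to reduce the whole statement to the single identity $\tr_{\H}(\mathfrak{D}_{\g,s})=0$. The two arguments part ways only at that identity. The paper substitutes the integral representation \eqref{eq: operator D}, pulls the trace through the $t$--integral, cancels the conjugation $\e^{\ii tH_{s}}\left(\cdot\right)\e^{-\ii tH_{s}}$ by cyclicity, and finishes with the observation that $\p_{s}H_{s}$ is self--dual and hence traceless. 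You instead show that $\mathfrak{D}_{\g,s}$ is \emph{itself} self--dual, $\a\mathfrak{D}_{\g,s}\a=-\mathfrak{D}_{\g,s}$, by conjugating the integrand of \eqref{eq: operator D} by $\a$ (antilinearity gives $\a\,\e^{\pm\ii tH_{s}}\a=\e^{\pm\ii tH_{s}}$, reality of $\mathfrak{W}_{\g}$ does the rest), and then kill the trace with the identity $\tr_{\H}(\a A\a)=\overline{\tr_{\H}(A)}$; both computations check out. Your variant buys a genuine technical advantage in infinite dimension: the operator whose trace you take, $\mathfrak{D}_{\g,s}$, is trace class by the stated hypothesis, whereas the paper's intermediate step $\tr_{\H}\left(\e^{\ii tH_{s}}\left(\p_{s}H_{s}\right)\e^{-\ii tH_{s}}\right)=\tr_{\H}\left(\p_{s}H_{s}\right)$ tacitly requires $\p_{s}H_{s}$ to be trace class, which is not among the hypotheses and is automatic only in finite volume. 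Your closing alternative --- continuity of $s\mapsto\det(V_{s})$ with values in the discrete set $\{-1,1\}$ on the connected interval $\cal$ forces constancy --- is also legitimate, provided one establishes trace--norm continuity of $s\mapsto V_{s}-\1_{\H}$; that is the same regularity needed to justify Jacobi's formula in the Fredholm setting, a point which, as you correctly flag, neither your argument nor the paper's spells out in detail.
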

\begin{proof}
Let $V_{0}=\pm\1_{\H}$, write $\det\left(V_{s}\right)-\det\left(V_{0}\right)=\int_{0}^{s}\p_{r}\left(\det\left(V_{r}\right)\right)\d r$, and apply the \emph{Jacobi's formula} of determinants for $V_{r}$: $\p_{r}\left(\det\left(V_{r}\right)\right)=\det\left(V_{r}\right)\tr_{\H}\left(V_{r}^{*}\left(\p_{r}V_{r}\right)\right)$ such that:
$$
\det\left(V_{s}\right)-\det\left(V_{0}\right)=-\ii\int_{0}^{s}\det\left(V_{r}\right)\tr_{\H}\left(\mathfrak{D}_{\g,s}\right)\d r,
$$
where we have used the differential equation \eqref{eq:spectral proof2} and the cyclic property of the trace. Since is $\mathfrak{D}_{\g,s}$ is trace class one can write
$$
\det\left(V_{s}\right)-\det\left(V_{0}\right)=-\ii\int_{0}^{s}\int_{\R}\det\left(V_{r}\right)\mathfrak{W}_{\g}(t)\tr_{\H}\left(\p_{s}H_{s}\right)\d t\d r,
$$
Now, by using that for each $s\in\cal$, $H_{s}$ is self--dual Hamiltonian, i.e.,  $H_{s}=-\a H_{s}\a$ we note that $\p_{s}H_{s}$ is also self--dual Hamiltonian, i.e.,  $\p_{s}H_{s}=-\a(\p_{s}H_{s})\a$. In particular, $\tr_{\H}(\p_{s}H_{s})=0$ (see Definition \ref{def one particle hamiltinian}), and the assertion follows.
\end{proof}
From now on, we will expose some issues about quasi--free ground states for $\g\in\R^{+}$, as in Definition \ref{def: phase of the matter}, and for $\g=0$. The former are called \emph{gapped quasi--free ground states} (see Expression \eqref{eq:GGS} and comments around it), and their uniqueness is guaranteed. Since the set $\states$ of ground states is metrizable in the weak$^{*}$--topology, we denote by $\states_{\g}\equiv(\states_{\g},\mathfrak{d}_{\g})$ and $\states_{0}\equiv(\states_{0},\mathfrak{d}_{0})$ the metric spaces in the weak$^{*}$--topology related to the quasi--free ground states for $\g\in\R^{+}$ and $\g=0$ respectively. In particular, one notes that $\states_{\g}$ and $\states_{0}$ are not homeomorphic since, as we will see in Corollary \ref{theorem:states_algebras}, the representations associated to $\states_{\g}$ are reducible whereas those associated to $\states_{0}$ are not. This is clear from the fact that there is no homeomorphism between one connected metric space and another one that is \emph{disconnected}\footnote{In particular $\states_{\g}$ is a weak$^{*}$--compact convex set metrizable in the weak$^{*}$--topology that can be written as $\states_{\g}=\states_{\g,\,-}\cup\states_{\g,\,+}$, for $\states_{\g,\,-}$ and $\states_{\g,\,+}$ nonempty and disjoint metrizable set in the weak$^{*}$--topology. Here, $\states_{\g,\,-}$ and $\states_{\g,\,+}$ are associated to the negative and positive components of the unitary operators respectively.}. Then the representations associated to $\states_{\g}$ and $\states_{0}$ are not physically equivalent as the intuition says. Instead, Corollary \ref{theorem:states_algebras} below claims that any two gapped quasi--free ground states associated to the quasi--free dynamics of two gapped Hamiltonians on the same $\g$--phase are unitarily equivalent. Thus, their irreducible representations also are.\\
In order to prove last statement, recall Expressions \eqref{symbol}, \eqref{symbolbis} and \eqref{eq:symbolbis2}, where for any $s\in\cal\equiv[0,1]$ one knows that for the positive (on $\Sigma_{s}$) spectral projection $E_{+,s}\in\BL(\H)$ associated to the self--dual Hamiltonian $H_{s}$ on $(\H,\a)$ there is a unique quasi--free ground state $\omega_{s}\in\states_{\g}$ such that
$$
\omega_{s}\left(\B\left(\varphi_{1}\right)\B\left(\varphi_{2}\right)^{*}\right)=\inner{\varphi_{1},E_{+,s}\varphi_{2}}_{\H},\qquad\varphi_{1},\varphi_{2}\in\H.
$$
Once again, consider $\mathbf{H}$, a family of self--dual Hamiltonians satisfying Assumption \ref{assump:hamil}. For any $s\in\cal$, $\omega_{s}$ is a gapped quasi--free ground state. By using the family of $^{*}$--automorphisms $\left\{\kappa_{s}\right\}_{s\in\cal}$ on $\BL(\H)$ of Lemma \ref{lemma:impor_res}, with $V_{s}$ a unitary operator implementing $\kappa_{s}$, we note that
\begin{equation}\label{eq:ground_relat}
 \omega_{s}=\omega_{0}\circ\Upsilon_{s}, \qquad s\in\cal.
\end{equation}
Here, $\Upsilon_{s}$ is the one--parameter (Bogoliubov) $^{*}$--automorphism of $\sCAR$ given by Expression \eqref{eq:autom}. Additionally, let $\mathbf{\omega}\doteq\{\omega_{s}\}_{s\in\cal}\subset\q\states$ be a family of gapped quasi--free ground states associated to self--dual Hamiltonians $\mathbf{H}$ on some self--dual Hilbert space $(\H,\a)$, with the same assumptions of Lemma \ref{lemma:impor_res}. The meaning of expression \eqref{eq:ground_relat} in terms of representations is that the associated (irreducible) GNS representation $(\H_{\mathbf{\omega}},\pi_{\mathbf{\omega}},\Omega_{\mathbf{\omega}})$ is unique (up to unitary equivalence): for all $A\in\sCAR$
$$
\mathbf{\omega}(A)=\inner{\Omega_{\mathbf{\omega}},\pi_{\mathbf{\omega}}(A)\Omega_{\mathbf{\omega}}}_{\H_{\mathbf{\omega}}},
$$
where the latter notation means that for any two states $\omega_{s_{1}},\omega_{s_{2}}\in\mathbf{\omega}$ there exists an isomorphism $\mathfrak{I}_{s_{1},s_{2}}$ from $\H_{\omega_{s_{1}}}$ to $\H_{\omega_{s_{2}}}$ satisfying
$$
\pi_{\omega_{s_{2}}}(A)=\mathfrak{I}_{s_{1},s_{2}}^{*}\pi_{\omega_{s_{1}}}(A)\mathfrak{I}_{s_{1},s_{2}},
$$
i.e., $\pi_{\omega_{s_{2}}}$ and $\pi_{\omega_{s_{2}}}$ are unitarily equivalent as well as their associated cyclic vectors $\Omega_{\omega_{s_{1}}}$ and $\Omega_{\omega_{s_{2}}}$. Additionally, following Definition \ref{def: ground state} and comments around it, there is a strongly continuous one--parameter unitary group $\left(\e^{\ii t\mathcal{L}_{\mathbf{\omega}}}\right)_{t\in\R}$ with generator $\mathcal{L}_{\mathbf{\omega}}=\mathcal{L}_{\mathbf{\omega}}^{*}\geq0$ satisfying $\e^{\ii t\mathcal{L}_{\mathbf{\omega}}}\Omega_{\mathbf{\omega} }=\Omega _{\mathbf{\omega}}$ such that for $t\in\R$, $\e^{\ii t\mathcal{L}_{\mathbf{\omega}}}\in\pi_{\mathbf{\omega}}(\sCAR)''$ and any $A\in\sCAR$
\begin{equation*}
\e^{\ii t\mathcal{L}_{\mathbf{\omega}}}\pi_{\mathbf{\omega}}(A)\Omega_{\mathbf{\omega}}=\pi_{\mathbf{\omega}}\left(\tau_{t}(A)\right)\Omega_{\mathbf{\omega}}.
\end{equation*}
We summarize the latter with the following Corollary:
\begin{corollary}\label{theorem:states_algebras}
Consider a family of self--dual Hamiltonians $\mathbf{H}\subset\BL(\H)$ satisfying Assumption \ref{assump:hamil}. Let $\mathbf{\omega}\in\q\states$ be a family of gapped quasi--free ground states associated to $\mathbf{H}$. The associated (irreducible) GNS representation $(\H_{\mathbf{\omega}},\pi_{\mathbf{\omega}},\Omega_{\mathbf{\omega}})$ is unique (up to unitary equivalence). In particular, any state $\omega_{s}\in\mathbf{\omega}$, $s\in\cal$, is related to $\omega_{0}\in\mathbf{\omega}$ by Expression \eqref{eq:ground_relat}, namely, $\omega_{s}=\omega_{0}\circ\Upsilon_{s}$.
\end{corollary}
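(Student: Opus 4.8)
The plan is to first establish the explicit intertwining relation $\omega_{s}=\omega_{0}\circ\Upsilon_{s}$ at the level of two--point correlation functions, and then to upgrade it to unitary equivalence of the associated GNS triples using the uniqueness of the GNS construction together with the Shale--Stinespring criterion. I would begin by recalling that, by Definition \ref{def:qfgs}, each gapped quasi--free ground state $\omega_{s}$ is uniquely determined by its symbol $E_{+,s}$ through \eqref{eq:symbolbis2}, with the full state recovered from the two--point function via the Pfaffian formula \eqref{ass O0-00bis}. Hence it suffices to verify the claimed identity on the monomials $\B(\varphi_{1})\B(\varphi_{2})^{*}$. By Lemma \ref{lemma:impor_res} and Corollary \ref{corol:Vs} there is a Bogoliubov transformation $V_{s}$ with $\a V_{s}=V_{s}\a$ implementing $\kappa_{s}$, so that $V_{s}^{*}E_{+,s}V_{s}=E_{+,0}$, equivalently $E_{+,s}=V_{s}E_{+,0}V_{s}^{*}$; the associated Bogoliubov $^{*}$--automorphism $\Upsilon_{s}$ acts by $\Upsilon_{s}(\B(\varphi))=\B(V_{s}^{*}\varphi)$, see \eqref{eq:autom}. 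The verification is then a one--line computation,
\begin{align*}
(\omega_{0}\circ\Upsilon_{s})\left(\B(\varphi_{1})\B(\varphi_{2})^{*}\right)
&=\omega_{0}\left(\B(V_{s}^{*}\varphi_{1})\B(V_{s}^{*}\varphi_{2})^{*}\right)\\
&=\left\langle V_{s}^{*}\varphi_{1},E_{+,0}V_{s}^{*}\varphi_{2}\right\rangle_{\H}
=\left\langle \varphi_{1},V_{s}E_{+,0}V_{s}^{*}\varphi_{2}\right\rangle_{\H}\\
&=\left\langle \varphi_{1},E_{+,s}\varphi_{2}\right\rangle_{\H}
=\omega_{s}\left(\B(\varphi_{1})\B(\varphi_{2})^{*}\right).
\end{align*}
Since $\Upsilon_{s}$ is a Bogoliubov automorphism, $\omega_{0}\circ\Upsilon_{s}$ is again quasi--free, so agreement of the two--point functions forces $\omega_{s}=\omega_{0}\circ\Upsilon_{s}$, which is exactly \eqref{eq:ground_relat}.

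Next I would deduce the uniqueness of the GNS representation. Because $\Upsilon_{s}$ is a $^{*}$--automorphism of $\sCAR$, the triple $(\H_{\omega_{0}},\pi_{\omega_{0}}\circ\Upsilon_{s},\Omega_{\omega_{0}})$ is a cyclic representation whose associated functional is $\omega_{s}$, so the uniqueness of the GNS construction identifies it, up to unitary equivalence, with $(\H_{\omega_{s}},\pi_{\omega_{s}},\Omega_{\omega_{s}})$. What remains is to show that $\pi_{\omega_{0}}\circ\Upsilon_{s}$ is unitarily equivalent to $\pi_{\omega_{0}}$ itself, i.e.\ that $\Upsilon_{s}$ is unitarily implemented in the Fock representation $\pi_{\omega_{0}}$. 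This is precisely the content of the Shale--Stinespring theorem (cf.\ \cite[Sec.~6.3]{Varilly}, \cite[Theo.~6.14]{A87}): implementability holds if and only if $E_{+,s}-E_{+,0}=V_{s}E_{+,0}V_{s}^{*}-E_{+,0}$ is Hilbert--Schmidt. Chaining these equivalences over any pair $s_{1},s_{2}\in\cal$ produces the intertwiners $\mathfrak{I}_{s_{1},s_{2}}$ and shows that the representation is unique up to unitary equivalence, with cyclic vectors intertwined accordingly; the ground--state property ($\mathcal{L}_{\mathbf{\omega}}\geq 0$) transfers along the same unitaries.

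I expect the Hilbert--Schmidt (equivalently, implementability) condition to be the only genuinely nontrivial point. In finite dimensions it is automatic, but on an infinite--dimensional $\H$ it must be extracted from the trace--class control of $\1_{\H}-V_{s}$ established in Corollary \ref{corol:sign} and, in the thermodynamic limit, from the norm convergence of the spectral--flow unitaries $V_{s}^{(L)}$ (Corollary \ref{corol:V_cauchy}) ultimately underpinned by the Combes--Thomas estimates. Everything else --- the two--point computation above and the invocation of GNS uniqueness --- is routine, so the essential work lies entirely in the preceding lemmata guaranteeing that $\{V_{s}\}_{s\in\cal}$ is a well--behaved family of Bogoliubov transformations with trace--class corrections.
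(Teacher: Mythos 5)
Your proof is correct and follows essentially the same route as the paper: the spectral--flow unitary $V_{s}$ of Lemma \ref{lemma:impor_res} and Corollary \ref{corol:Vs} gives the Bogoliubov $^{*}$--automorphism $\Upsilon_{s}$, the identity $\omega_{s}=\omega_{0}\circ\Upsilon_{s}$ is verified on two--point functions exactly as you do (the paper merely asserts it), and GNS uniqueness together with the Shale--Stinespring criterion yields the unitary equivalence of the representations. Your only slip is attributional: the trace--class control of $\1_{\H}-V_{s}$ is not established in Corollary \ref{corol:sign} (which \emph{assumes} it in order to compute determinants) but rather, per unit volume and via the Combes--Thomas estimates, in Lemma \ref{lemma:unit_oper2}; otherwise your explicit identification of the implementability/Hilbert--Schmidt condition as the nontrivial point is, if anything, more careful than the paper's own discussion.
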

\subsection{Dynamics, ground states and spectral flow automorphism in the Thermodynamic limit}\label{subsec: lattice fermion}
For $d\in\N$, let $\Z^{d}$ be the Cayley graph as defined in Appendix \ref{appendix: GAM}, see Expression \eqref{eq: Zd}, and let the \emph{spin} set $\spin$, such that $\L\doteq\Z^{d}\times\spin$. Since we are dealing with fermions, w.l.o.g., these can be treated as \emph{negatively} charged particles. The cases of particles positively charged can be treated by exactly the same methods. Then, in order to \emph{take} the \emph{thermodynamic limit} we define the Hilbert spaces $\H_{\spin}\doteq\ell^{2}\left(\spin\right)\oplus\ell^{2}\left(\spin\right)^{*}$ and $\H_{L}\doteq\ell^{2}\left(\Lambda _{L};\H_{\spin}\right)$ for all $L\in\R_{0}^{+}\cup\{\infty\}$, where $\Lambda_{L}$ for $L\in\R_{0}^{+}\cup\{\infty\}$ is defined by the increasing sequence of cubic boxes
\begin{equation}\label{eq:boxesl}
\Lambda _{L}\doteq\{(x_{1},\ldots,x_{d})\in\Z^{d}\colon|x_{1}|,\ldots,|x_{d}|\leq L\}\in\Pfz,
\end{equation}
of side length $\mathcal{O}(L)$. Note that such a sequence is a ``Van Hove net''\index{Van Hove net}, i.e., the volume of the boundaries\footnote{By fixing $m\geq1$, the boundary $\partial \Lambda$ of any $\Lambda\subset\Z^{d}$ is defined by $\partial\Lambda\doteq\{x\in \Lambda\;\colon\;\exists y\in\Z^{d}\backslash\Lambda \text{ with }d_{\epsilon}(x,y)\leq m\},$ where for $\epsilon\in(0,1]$, $d_{\epsilon}(x,y)\colon\Z^{d}\times\Z^{d}\to[0,\infty)$ is a well--defined pseudometric related to the distance between $x,y$ in the lattice $\Z^{d}$ \cite{BruPedra2}. W.l.o.g. we will take the $\epsilon$--Euclidean distance $d_{\epsilon}(x,y)\doteq|x-y|^{\epsilon}$.} $\partial \Lambda _{L}\subset \Lambda _{L}\in\Pfz$ is negligible w.r.t. the volume of $\Lambda_{L}$ for $L$ large enough: $\lim\limits_{L\to\infty}\frac{|\partial\Lambda_{L}|}{|\Lambda_{L}|}=0$.\\
We now fix any antiunitary involution $\a_{\spin}$ on $\H_{\spin}$. For any $L\in \R_{0}^{+}\cup \{\infty \}$, we define an antiunitary involution $\a_{L}$ on $\H_{L}$ by
\begin{equation}
\left( \a_{L}\varphi \right)\left(x\right)\doteq\a_{\spin}\left( \varphi \left( x\right) \right),\qquad x\in \Lambda_{L},\ \varphi \in \H_{L}\ .  \label{involution}
\end{equation}
Then, $(\H_{L},\a_{L})$ is a \emph{local} self--dual Hilbert space for any $L\in \R_{0}^{+}\cup \{\infty \}$. Note that $\H_{\spin}$ and $\H_{L}$ are finite--dimensional, with even dimension, whenever $L<\infty$: Let
\begin{equation}\label{eq: X_L}
\mathbb{X}_{L}\mathbb{\doteq }\Lambda _{L}\times \spin\times \{+,-\},\qquad L\in \R_{0}^{+}\cup\{\infty\}.
\end{equation}
The canonical orthonormal basis $\left\{\E_{\x}\right\}_{\x\in \mathbb{X}_{L}}$ of $\H_{L}$, $L\in \R_{0}^{+}\cup \{\infty \}$, now is defined by
\begin{equation}
\E_{\x}(y)\doteq \delta_{x,y}\mathfrak{f}_{\mathfrak{s},v},\qquad \x=(x,\mathfrak{s},v)\in\mathbb{X}_{L},\quad y\in\Lambda_{L},  \label{canonical onb1}
\end{equation}
where $\mathfrak{f}_{\mathfrak{s},+}\doteq \a_{\spin}\mathfrak{f}_{\mathfrak{s},-}\in\H_{\spin}$ and $\mathfrak{f}_{\mathfrak{s},-}(\mathrm{t})\doteq \delta _{\mathfrak{s},\mathrm{t}}$ for any $\mathfrak{s},\mathrm{t}\in \spin$.\\
Within the self--dual formalism, a lattice fermion system in infinite volume is defined by a self--dual Hamiltonian $H_{\infty }\in \mathcal{B}(\H_{\infty })$ on $(\H_{\infty },\a_{\infty})$, that is, $H_{\infty}=H_{\infty }^{*}=
-\a_{\infty}H_{\infty}\a_{\infty}$. See Definition \ref{def one particle hamiltinian} which is here extended to the infinite--dimensional case. For a fixed basis projection $P_{\infty}$ diagonalizing $H_{\infty }$, the operator $P_{\infty}H_{\infty}P_{\infty}$ is the so--called one--particle Hamiltonian associated with the system. To obtain the corresponding self--dual Hamiltonians in finite volume we use the orthogonal projector $P_{\H_{L}}\in \mathcal{B}(\H_{\infty })$ on $\H_{L}$ and define
\begin{equation}
H_{L}\doteq P_{\H_{L}}H_{\infty}P_{\H_{L}},\qquad L\in\R_{0}^{+}.  \label{definition finite volume hamiltonian}
\end{equation}
By construction, if $H_{\infty}$ is a self--dual Hamiltonian on $(\H_{\infty },\a_{\infty })$, then, for any $L\in\R_{0}^{+}$, $H_{L}$ is a self--dual Hamiltonian on $(\H_{L},\a_{L})$. Note that $P_{\H_{L}}$ strongly converges to $\1_{\H_{\infty}}$ as $L\to\infty$.\\
For the self--dual Hilbert space $(\H_{\infty},\a_{\infty})$, the self--dual $\CAR$ algebra associated is denoted by $\A_{\infty}\doteq\mathrm{sCAR}(\H_{\infty},\a_{\infty})$, with generator elements $\mathfrak{1}$ and $\{\B(\E_{\x})\}_{\x\in\mathbb{X}_{\infty}}$ satisfying $\CAR$ Expressions of Definition \ref{def Self--dual CAR Algebras}. The subalgebra of even elements of $\A_{\infty}$ (see \eqref{eq:even odd}) will be denoted by $\A_{\infty}^{+}$ in the sequel. For $\Lambda\in\Pfz$ and the finite--dimensional (one--particle) Hilbert space $\H_{\Lambda}\doteq\ell^{2}\left(\Lambda;\H_{\spin}\right)$ with involution given by \eqref{involution}, we identify the finite dimensional $\CAR$ $\C$--algebra
\begin{equation}\label{eq: local}
\A_{\Lambda}\doteq\mathrm{sCAR}(\H_{\Lambda},\a_{\Lambda}),\qquad \Lambda \in \mathcal{P}_{\text{f}}(\Z^{d}),
\end{equation}
with the $\C$--subalgebra generated by the unit $\mathfrak{1}$ and $\{\B(\E_{\x})\}_{\x\in\mathbb{X}_{\Lambda}}$. Then, we define by
\begin{equation}\label{local elements}
\A_{\infty}^{(0)}\doteq \underset{\Lambda \in \mathcal{P}_{\text{f}}(\Z^{d})}{\bigcup }\A_{\Lambda }\subset\A_{\infty},
\end{equation}
the normed $^{*}$--algebra of local elements, which is dense in $\A_{\infty}$.\\
From Definition \ref{def: ground state} one notes that existence of ground states strongly relies on the existence of the dynamics in the thermodynamical limit. The latter means that the sequence $\left\{\Lambda_{L}\right\}_{L\in\R_{0}^{+}\cup\{\infty\}}$, defined by \eqref{eq:boxesl}, eventually will contain all the finite subsets, $\Pfz$ of $\Z^{d}$ as $L\to\infty$. In fact, for any $H_{L}=H_{L}^{*}\in\BL(\H_{L})$ one can associate a quasi--free dynamics \eqref{eq:autoSCAR} defining a continuous group $\{\tau_{t}^{(L)}\}_{t\in {\R},L\in\R_{0}^{+}}$ of \emph{finite volume} $^{*}$--automorphisms of $\A_{L}\equiv\A_{\Lambda_{L}}$ by
$$
\tau_{t}^{(L)}(A)\doteq\e^{-\ii t\Bil{H_{L}}}A\e^{\ii t\Bil{H_{L}}},\qquad A\in \A_{\infty},\ t\in\R.
$$
See \eqref{definition finite volume hamiltonian} and \eqref{local elements}. The associated \emph{finite volume} generator or \emph{finite symmetric derivation} is given by \eqref{eq:gener_sCAR}, namely,
\begin{equation}\label{eq:fin_gen}
\delta^{(L)}(A)=-\ii[\Bil{H_{L}},A],\qquad A\in\A_{\infty}^{(0)},
\end{equation}
while, the \emph{infinite volume} generator or \emph{symmetric derivation} is
\begin{equation}\label{eq:infin_gen}
\delta(A)=-\ii[\Bil{H_{\infty}},A],\qquad A\in\A_{\infty}^{(0)}.
\end{equation}
For $L\in\R_{0}^{+}$ and $\Lambda_{L}\in\Pfz$, denote by $\Lambda_{L}^{\text{c}}\equiv\Z^{d}\setminus\Lambda_{L}$ the complement of $\Lambda_{L}$. Then, $\A_{\Lambda^{\text{c}}}\doteq\mathrm{sCAR}(\H_{\Lambda^{\text{c}}},\a_{\Lambda^{\text{c}}})$, will be the $\C$--subalgebra generated by the unit $\mathfrak{1}$ and $\{\B(\E_{\x})\}_{\x\in\mathbb{X}_{\Lambda_{L}^{\text{c}}}}$. The bilinear elements associated to the (border) terms on $\Lambda_{L}$ and $\Lambda_{L}^{\text{c}}$ are (cf. Definition \ref{def trace state copy(1)}):
\begin{align*}
\Bil{\p H_{L}\,}=\sum\limits_{\x_{1},\x_{2}\in \mathbb{X}_{\infty}}\inner{\E_{\x_{2}},\p H_{L}\,\E_{\x_{1}}}_{\H_{\infty}}\B\left(\E_{\x_{1}}\right)\B\left(\E_{\x_{2}}\right)^{*},
\end{align*}
with $\H_{L}^{\text{c}}\equiv\H_{\Lambda_{L}^{\text{c}}}$ and
\begin{equation}\label{eq:border_terms}
\p H_{L}\doteq P_{\H_{L}}H_{\infty}P_{\H_{L}^{\text{c}}}+P_{\H_{L}^{\text{c}}}H_{\infty}P_{\H_{L}},
\end{equation}
where for any $\Lambda_{L}\in\Pfz$, $P_{\H_{L}}\in\BL(\H_{\infty})$ is the orthogonal projector on $\H_{L}$, see Expression \eqref{definition finite volume hamiltonian}.
\begin{theorem}[Infinite volume dynamics]\label{theo:inf_dyn}
Assume that the sequence $\{H_{L}\}_{L\in\R_{0}^{+}}$ of self--dual Hamiltonians $H_{L}\in\BL(\H_{L})$ strongly converges to $H_{\infty}\in\BL(\H_{\infty})$ so that
$$
\sum_{\x\in\mathbb{X}_{\infty}}\left|\inner{\E_{0},H_{\infty}\E_{\x}}_{\H_{\infty}}\right|\in\R_{0}^{+}.
$$
Then, for $L\in\R_{0}^{+}$, the continuous group $\{\tau_{t}^{(L)}\}_{t\in {\R}}$ with generator $\delta^{(L)}$ converges strongly to a continuous group $\{\tau_{t}\}_{t\in {\R}}$ with generator $\delta$ as $L\to\infty$. $\delta$ is a conservative closed symmetric derivation.
\end{theorem}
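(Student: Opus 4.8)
The plan is to reduce the whole statement to a one--particle convergence problem, exploiting that every $\tau_{t}^{(L)}$ is a quasi--free (Bogoliubov) dynamics. Since $P_{\H_{L}}$ commutes with $\a_{\infty}$, the finite--volume operator $H_{L}=P_{\H_{L}}H_{\infty}P_{\H_{L}}$ of \eqref{definition finite volume hamiltonian} is again a self--dual Hamiltonian on $(\H_{\infty},\a_{\infty})$, so by \eqref{eq:dyn_bogou} the dynamics acts on generators through the one--particle unitary group generated by $H_{L}$, that is, $\tau_{t}^{(L)}(\B(\varphi))=\B(\e^{\ii tH_{L}}\varphi)$ for all $\varphi\in\H_{\infty}$ (the precise constant being fixed by the normalization in \eqref{eq:dyn_bogou}). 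Because each $\tau_{t}^{(L)}$ is a $^{*}$--automorphism, hence isometric, and the local algebra $\A_{\infty}^{(0)}$ of \eqref{local elements} is dense in $\A_{\infty}$, it suffices to prove convergence on monomials $\B(\varphi_{1})\cdots\B(\varphi_{n})$. Writing $\tau_{t}^{(L)}(\B(\varphi_{1})\cdots\B(\varphi_{n}))=\B(\e^{\ii tH_{L}}\varphi_{1})\cdots\B(\e^{\ii tH_{L}}\varphi_{n})$, a telescoping argument together with the contractivity of $\varphi\mapsto\B(\varphi)$ and the unitarity of $\e^{\ii tH_{L}}$ bounds the difference between the finite-- and infinite--volume images by $\sum_{k=1}^{n}\Vert(\e^{\ii tH_{L}}-\e^{\ii tH_{\infty}})\varphi_{k}\Vert_{\H_{\infty}}\prod_{j\neq k}\Vert\varphi_{j}\Vert_{\H_{\infty}}$, so the entire problem collapses to the strong convergence $\e^{\ii tH_{L}}\to\e^{\ii tH_{\infty}}$.

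For the one--particle statement I would argue as follows. Since $P_{\H_{L}}$ is an orthogonal projection, $\Vert H_{L}\Vert\le\Vert H_{\infty}\Vert$ uniformly in $L$, and $P_{\H_{L}}\to\1_{\H_{\infty}}$ strongly yields $H_{L}\to H_{\infty}$ strongly. The Duhamel identity
\begin{equation*}
\e^{\ii tH_{L}}-\e^{\ii tH_{\infty}}=\ii\int_{0}^{t}\e^{\ii(t-u)H_{L}}\left(H_{L}-H_{\infty}\right)\e^{\ii uH_{\infty}}\,\d u
\end{equation*}
gives, for any $\varphi\in\H_{\infty}$, the estimate $\Vert(\e^{\ii tH_{L}}-\e^{\ii tH_{\infty}})\varphi\Vert_{\H_{\infty}}\le\int_{0}^{t}\Vert(H_{L}-H_{\infty})\e^{\ii uH_{\infty}}\varphi\Vert_{\H_{\infty}}\,\d u$. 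The integrand tends to $0$ pointwise in $u$ (strong convergence of $H_{L}$ applied to the vector $\e^{\ii uH_{\infty}}\varphi$) and is dominated by $2\Vert H_{\infty}\Vert\,\Vert\varphi\Vert_{\H_{\infty}}$; dominated convergence then delivers strong convergence, \emph{uniformly} for $t$ in compact intervals. Inserting this into the telescoping bound shows that $\tau_{t}^{(L)}(A)\to\tau_{t}(A)$ for every $A\in\A_{\infty}^{(0)}$, and by density and isometry for every $A\in\A_{\infty}$, uniformly on compact $t$--intervals.

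It then remains to identify the limit. The family $\{\tau_{t}\}_{t\in\R}$ inherits the group law and the $^{*}$--homomorphism property from the $\tau_{t}^{(L)}$ in the local uniform limit, and strong continuity in $t$ follows from the uniform norm bound together with the strong continuity of each $\tau_{t}^{(L)}$. To compute the generator I would differentiate on the local algebra: for $A\in\A_{\infty}^{(0)}$ one has $\delta^{(L)}(A)=-\ii[\Bil{H_{L}},A]$ by \eqref{eq:fin_gen}, and the goal is to show that this converges \emph{in norm} to $-\ii[\Bil{H_{\infty}},A]=\delta(A)$ of \eqref{eq:infin_gen}. This is precisely where the summability hypothesis $\sum_{\x\in\mathbb{X}_{\infty}}|\inner{\E_{0},H_{\infty}\E_{\x}}_{\H_{\infty}}|<\infty$ enters: although $\Bil{H_{\infty}}$ is \emph{not} itself a norm--convergent element of $\A_{\infty}$, for a \emph{local} $A$ only finitely many creation/annihilation operators fail to commute with it, and the $\ell^{1}$--summability of the hopping amplitudes makes the commutator a genuine, norm--convergent element of $\A_{\infty}$, with the boundary contribution $\Bil{\p H_{L}}$ of \eqref{eq:border_terms} vanishing as $L\to\infty$. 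The Leibniz rule and the symmetry $\delta(A^{*})=\delta(A)^{*}$ pass to the limit, so $\delta$ is a symmetric derivation on the dense $^{*}$--subalgebra $\A_{\infty}^{(0)}$. Closedness is then automatic, since the generator of a strongly continuous one--parameter group is closed, and conservativeness follows because $\overline{\delta}$ generates the group $\{\tau_{t}\}$ of $^{*}$--automorphisms, by the standard generation theorem for $\C$--algebras \cite[Sect.~3.2]{BratteliRobinson}.

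I expect the main obstacle to be exactly this last interchange of the thermodynamic limit with the time derivative used to identify $\delta$: one must control the boundary terms $\Bil{\p H_{L}}$ uniformly and confirm that the formal bilinear object $\Bil{H_{\infty}}$ induces a bona fide conservative derivation on $\A_{\infty}^{(0)}$. The strong (and, by Assumption~\ref{assump:hamil}, norm) convergence of the Hamiltonians alone controls the dynamics on generators effortlessly, but the $\ell^{1}$--summability is the ingredient that tames the commutator structure and guarantees that the limiting derivation is densely defined, closable, and conservative.
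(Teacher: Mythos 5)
Your proposal follows the same overall strategy as the paper's proof: reduce everything to one--particle convergence via the quasi--free structure \eqref{eq:dyn_bogou}, control the difference of unitary groups by a Duhamel estimate, and then use the summability hypothesis to show that $\delta^{(L)}(A)\to\delta(A)$ in norm for local $A$ --- exactly the paper's bound $\Vert\delta(A)\Vert_{\A_{\infty}}\leq2\Vert A\Vert\,|\Lambda|\sum_{\x}\left|\inner{\E_{0},H_{\infty}\E_{\x}}_{\H_{\infty}}\right|$. In fact your one--particle step is written more carefully than the paper's: the paper bounds its Duhamel integral by $|t|\,\Vert(H_{L_{2}}-H_{L_{1}})\varphi\Vert_{\H_{\infty}}$, silently commuting the difference of Hamiltonians past the unitary $U_{s}^{(L_{2})}$ inside the integral, whereas your dominated--convergence argument applied to $u\mapsto\Vert(H_{L}-H_{\infty})\e^{\ii uH_{\infty}}\varphi\Vert_{\H_{\infty}}$ is the correct way to exploit mere strong convergence.

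The one place where your finish is thinner than the paper's is the last step. You conclude closedness and conservativeness by asserting that $\overline{\delta}$ generates $\{\tau_{t}\}$ ``by the standard generation theorem,'' but the generation theorems in \cite[Sect.~3.2]{BratteliRobinson} take dissipativity \emph{together with} a range--density condition $\overline{\mathrm{ran}(\lambda\mathfrak{1}-\delta)}=\A_{\infty}$ as hypotheses; norm convergence of $\delta^{(L)}(A)$ on $\A_{\infty}^{(0)}$ alone does not show that $\A_{\infty}^{(0)}$ is a core for the generator of the limit group, and $\A_{\infty}^{(0)}$ is not invariant under $\tau_{t}$, so the usual invariant--core criterion is unavailable. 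The paper patches exactly this point by interposing strong resolvent convergence, via the Laplace--transform formula $R_{\zeta}\left(\delta^{(L)}\right)(A)=\int_{0}^{\infty}\e^{-\zeta t}\tau_{t}^{(L)}(A)\,\d t$, and then invoking the Trotter--Kato approximation theorems together with \cite[Theorem 4.8]{brupedraLR} to obtain the range density and conservativeness. Alternatively, you can close this with ingredients you already have: pass the identity $\tau_{t}^{(L)}(A)-A=\int_{0}^{t}\tau_{s}^{(L)}\left(\delta^{(L)}(A)\right)\d s$ to the limit (using your uniform--on--compacts strong convergence of $\tau^{(L)}$ and the norm convergence of $\delta^{(L)}(A)$) to get $\tau_{t}(A)-A=\int_{0}^{t}\tau_{s}\left(\delta(A)\right)\d s$ for all $A\in\A_{\infty}^{(0)}$; this shows every local element lies in the domain of the generator of $\{\tau_{t}\}$ and that the generator agrees with $\delta$ there, after which closedness and conservativeness of that generator are indeed automatic, since it generates a group of isometries.
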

\begin{proof} The proof of the statements is completely standard. We present it here for the sake of completeness. This is split in a set of parts:
\begin{enumerate}
\item We can combine Expressions \eqref{eq:autoSCAR} and \eqref{eq:dyn_bogou} such that for any self--dual Hamiltonian $H_{\infty}\in\BL(\H_{\infty})$ we have
$$
\tau_{t}^{(L)}\left(\B(\varphi)\right)=\B\left(\left(U_{t}^{(L)}\right)^{*}\varphi\right)\qquad\text{and}\qquad\tau_{t}\left(\B(\varphi)\right)=\B\left(U_{t}^{*}\varphi\right).
$$
Here, for $L\in\R_{0}^{+}$, $\tau_{t}^{(L)}\doteq\chi_{\e^{\ii t H_{L}}}$ and $\tau_{t}\doteq\chi_{\e^{\ii t H_{\infty}}}$ so that
$$\left\{U_{t}^{(L)}\doteq\e^{\ii tH_{L}}\right\}_{t\in\R}\quad\text{and}\quad\left\{U_{t}\equiv U_{t}^{(\infty)}\doteq\e^{\ii tH_{\infty}}\right\}_{t\in\R}
$$
are the \emph{one--parameter unitary groups} on $(\H_{\infty},\a_{\infty})$ associated to the \emph{finite} and \emph{infinite} dynamical systems, respectively. Note that for any $\varphi\in\H_{\infty}$, $\B(\varphi)$ is bounded (see Definition \ref{def Self--dual CAR Algebras}). Then, using that $\Vert\B(\varphi)\Vert_{\A_{\infty}}\leq\Vert\varphi\Vert_{\H_{\infty}}$, for any $L_{1},L_{2}\in\R_{0}^{+}$, with $L_{2}\geq L_{1}$, we have
$$
\left\Vert\tau_{t}^{(L_{2})}\left(\B(\varphi)\right)-\tau_{t}^{(L_{1})}\left(\B(\varphi)\right)\right\Vert_{\A_{\infty}}\leq\left\Vert\left(U_{t}^{(L_{2})}-U_{t}^{(L_{1})}\right)\varphi\right\Vert_{\BL(\H_{\infty})}.
$$
We can write
$$
U_{t}^{(L_{2})}-U_{t}^{(L_{1})}=\int_{0}^{t}\p_{s}\left(U_{t-s}^{(L_{1})}U_{s}^{(L_{2})}\right)\d s=\ii\int_{0}^{t}U_{t-s}^{(L_{1})}\left(H_{L_{2}}-H_{L_{1}}\right)U_{s}^{(L_{2})}\d s,
$$
so that
$$
\left\Vert\tau_{t}^{(L_{2})}\left(\B(\varphi)\right)-\tau_{t}^{(L_{1})}\left(\B(\varphi)\right)\right\Vert_{\A_{\infty}}\leq|t|\left\Vert\left(H_{L_{2}}-H_{L_{1}}\right)\varphi\right\Vert_{\BL(\H_{\infty})}.
$$
Since the sequence $\{H_{L}\}_{L\in\R_{0}^{+}}$ strongly converges to $H_{\infty}$ as $L\to\infty$, the last expression shows that it is a Cauchy sequence of self--adjoint operators. Therefore, the continuous group of $^{*}$--automorphisms $\left\{\tau_{t}^{(L)}\right\}_{t\in\R}$, $L\in\R_{0}^{+}$, strongly converges to $\left\{\tau_{t}\right\}_{t\in\R}$ for all $t\in\R$.
\item In order to show the existence of the generator, take $\Lambda\in\Pfz$ and $A\in\A_{\Lambda}$ in \eqref{eq:fin_gen}. By the estimate
\begin{align*}
\left\Vert\delta(A)\right\Vert_{\A_{\infty}}&\leq2\left\Vert A\right\Vert_{\A}\sum_{\x_{1}\in\mathbb{X}_{\infty},\x_{2}\in\Lambda}\left|\inner{\E_{\x_{2}},H_{\infty}\E_{\x_{1}}}_{\H_{\infty}}\right|\\
&\leq2\left\Vert A\right\Vert_{\A}\left|\Lambda\right|\sum_{\x_{1}\in\mathbb{X}_{\infty}}\left|\inner{\E_{0},H_{\infty}\E_{\x_{1}}}_{\H_{\infty}}\right|\in\R_{0}^{+},
\end{align*}
and the hypothesis of the Theorem note that the infinite volume generator, given by Equation \eqref{eq:infin_gen}, is absolutely convergent. By \eqref{eq:fin_gen}, it follows that
$$
\delta(A)=\lim_{L\to\infty}\delta^{(L)}(A), \qquad A\in\A_{\infty}^{(0)}.
$$
\item Moreover, for any fixed $\zeta\not\in\spec(\delta^{(L)})$, the resolvent $R_{\zeta}\left(\delta^{(L)}\right)\doteq\left(\zeta\mathfrak{1}-\delta^{(L)}\right)^{-1}$ of $ \delta^{(L)}$ converges strongly to $R_{\zeta}\left(\delta\right)\doteq\left(\zeta\mathfrak{1}-\delta\right)^{-1}$, the resolvent of $\delta$. In fact, take $L\in\R_{0}^{+}$ and let $\rho(\delta^{(L)})\doteq\CP\setminus\spec(\delta^{(L)})$ be the \emph{resolvent set} of $\delta^{(L)}$. We know that for any $\zeta\in\rho(\delta^{(L)})$ the resolvent $R_{\zeta}\left(\delta^{(L)}\right)\doteq\left(\zeta\mathfrak{1}-\delta^{(L)}\right)^{-1}$ satisfies the identity \cite{EngelNagel}
$$
R_{\zeta}\left(\delta^{(L)}\right)(A)=\int_{0}^{\infty}\e^{-\zeta t}\tau_{t}^{(L)}(A)\d t,\qquad A\in\A_{\infty}^{(0)}.
$$
Note that for any generators $\delta^{(L_{1})},\delta^{(L_{2})}\colon\A_{\infty}^{(0)}\to\A_{\infty}$ and $\zeta\in \rho(\delta^{(L_{1})})\cap\rho(\delta^{(L_{2})})$, with $L_{1},L_{2}\in\R_{0}^{+}\cup\{\infty\}$ and $L_{2}>L_{1}$,  we have
\begin{equation}\label{eq:resol}
R_{\zeta}\left(\delta^{(L_{2})}\right)-R_{\zeta}\left(\delta^{(L_{1})}\right)=\int_{0}^{\infty}\e^{-\zeta t}\left(\tau_{t}^{(L_{2})}(A)-\tau_{t}^{(L_{1})}(A)\right)\d t.
\end{equation}
By linearity, we can take $A\equiv\B(\varphi)$, with $\varphi\in\H_{\infty}$, use first part of the current Theorem and the \emph{Lebesgue's dominated convergence theorem} in order to conclude that $R_{\zeta}\left(\delta^{(L_{2})}\right)-R_{\zeta}\left(\delta^{(L_{1})}\right)$ is a Cauchy sequence. By above Expression it follows that
$$
\lim_{L\to\infty}\left\Vert\left(R_{\zeta}\left(\delta^{(\infty)}\right)-R_{\zeta}\left(\delta^{(L)}\right)\right)(A)\right\Vert_{\A_{\infty}}=0,
$$
as desired.
\item By taking into account proof of Theorem 4.8 in \cite{brupedraLR}, and the first and second Trotter--Kato approximation theorems \cite[Chap. III, Sect. 4.8 and 4.9]{EngelNagel} we claim that $\A_{\infty}^{(0)}\subset\mathrm{ran}\left(R_{\zeta}\left(\delta^{(L)}\right)\right)$ is dense in $\A_{\infty}$ and that $\delta$ is conservative.
\end{enumerate}
We finally remark, that the proof considered here for the operator $\delta$ on $\A$ with dense domain $\mathcal{D}(\delta)=\A_{\infty}^{(0)}$ also works if $\delta$ is \emph{unbounded}.
\end{proof}
In order to study quasi--free ground states at infinite volume we use:
\begin{proposition}\label{lemma:strongly_limits}
Let $\{H_{L}\}_{L\in\R_{0}^{+}}\in\BL(\H_{\infty})$ be a sequence of self--dual Hamiltonians on $(\H_{\infty},\a_{\infty})$ strongly convergent to $H_{\infty}\in\BL(\H_{\infty})$. For any $L\in\R_{0}^{+}\cup\{\infty\}$, $E_{+,\,L}$ will denote the spectral projection on $\R^{+}$ associated to the self--dual Hamiltonian $H_{L}$. If zero is not an eigenvalue of $H_{\infty}$, then $E_{+}$ will be
the strong limit of the sequence $\{E_{+,L}\}_{L\in\R_{0}^{+}}$, i.e., $\lim\limits_{L\to\infty}E_{+,L}=E_{+}$.
\end{proposition}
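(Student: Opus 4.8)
The plan is to deduce the strong convergence of the positive spectral projections from the strong resolvent convergence of the Hamiltonians, and then to invoke the standard result that spectral projections onto intervals whose endpoints are not eigenvalues of the limit operator converge strongly. First I would fix a uniform bound: since $\{H_{L}\}_{L}$ converges strongly, the Banach--Steinhaus (uniform boundedness) principle yields $M\doteq\sup_{L}\Vert H_{L}\Vert_{\BL(\H_{\infty})}<\infty$, whence $\spec(H_{L})\subset[-M,M]$ for every $L\in\R_{0}^{+}\cup\{\infty\}$, the strong limit $H_{\infty}$ included.

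Next I would pass to strong resolvent convergence. For any $z\in\CP$ with $\Im z\neq0$, the second resolvent equation \eqref{eq:sec_res} gives, for $\psi\in\H_{\infty}$,
\[
\left\Vert\left(R_{z}(H_{L})-R_{z}(H_{\infty})\right)\psi\right\Vert_{\H_{\infty}}=\left\Vert R_{z}(H_{L})(H_{\infty}-H_{L})R_{z}(H_{\infty})\psi\right\Vert_{\H_{\infty}}\leq\frac{1}{|\Im z|}\left\Vert(H_{\infty}-H_{L})\phi\right\Vert_{\H_{\infty}},
\]
where $\phi\doteq R_{z}(H_{\infty})\psi$ and we used $\Vert R_{z}(H_{L})\Vert_{\BL(\H_{\infty})}\leq|\Im z|^{-1}$. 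The right--hand side vanishes as $L\to\infty$ by the strong convergence $H_{L}\to H_{\infty}$; hence $R_{z}(H_{L})\to R_{z}(H_{\infty})$ strongly, i.e. $H_{L}\to H_{\infty}$ in the strong resolvent sense.

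Then I would conclude. Because all spectra lie in $[-M,M]$, the positive spectral projection coincides with a projection onto a bounded interval, $E_{+,L}=E_{(0,M+1)}(H_{L})$ for every $L$, and likewise $E_{+}=E_{(0,M+1)}(H_{\infty})$. The endpoint $M+1$ lies in the resolvent set of $H_{\infty}$ and so is not an eigenvalue, while the endpoint $0$ is not an eigenvalue of $H_{\infty}$ by hypothesis (equivalently $E_{\{0\}}(H_{\infty})=0$). The standard theorem on convergence of spectral projections under strong resolvent convergence (see, e.g., \cite{kato2013perturbation}) then gives $E_{(0,M+1)}(H_{L})\to E_{(0,M+1)}(H_{\infty})$ strongly, that is, $\lim\limits_{L\to\infty}E_{+,L}=E_{+}$.

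The main obstacle is the possible accumulation of spectrum at $0$: without a gap assumption, the contour--integral (Riesz) representation \eqref{eq: spec_pro_int} exploited in Lemma \ref{lemma:impor_res} is \emph{unavailable}, since no rectifiable curve can separate the positive part of $\spec(H_{\infty})$ from the rest, and the norm--continuous argument for gapped families breaks down. This is precisely why the delicate hypothesis that $0$ is not an eigenvalue is needed and why one must route the argument through strong resolvent convergence: should $E_{\{0\}}(H_{L})\neq0$, its mass could distribute uncontrollably between $E_{+,L}$ and $E_{-,L}$ in the limit, and strong convergence of $E_{+,L}$ would fail.
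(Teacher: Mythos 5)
Your proof is correct, but it takes a genuinely different route from the paper, for the simple reason that the paper does not prove this proposition at all: its entire proof is the citation \cite[Lemma 3.3]{araEva}, where Araki and Evans establish precisely this statement. You instead give a self--contained argument along the standard lines: a uniform norm bound, strong resolvent convergence via the second resolvent identity \eqref{eq:sec_res}, and the classical theorem (found in \cite{kato2013perturbation} or in \cite{reed1981functional}) that spectral projections onto intervals whose endpoints are not eigenvalues of the limit operator converge strongly. What your route buys is verifiability on the spot and a clear display of exactly where the hypothesis that $0$ is not an eigenvalue of $H_{\infty}$ enters---something the bare citation conceals; what the citation buys is brevity. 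Two small repairs are worth recording. First, Banach--Steinhaus requires pointwise boundedness, and strong convergence of a family indexed by the continuum $\R_{0}^{+}$ only gives, at each vector, boundedness for $L$ large; this is harmless, since it suffices to prove strong convergence of $E_{+,L_{n}}$ along an arbitrary sequence $L_{n}\to\infty$ (where pointwise boundedness is automatic because convergent sequences are bounded), and in the paper's actual application one has $H_{L}=P_{\H_{L}}H_{\infty}P_{\H_{L}}$, so that $\Vert H_{L}\Vert_{\BL(\H_{\infty})}\leq\Vert H_{\infty}\Vert_{\BL(\H_{\infty})}$ uniformly in any case. Second, your closing heuristic misplaces the obstruction: having $E_{\{0\}}(H_{L})\neq0$ for the \emph{approximants} is harmless (your own proof never uses anything about $\ker H_{L}$, and easy examples---e.g.\ $H_{L}$ diagonal with entries $1/n$ for $n\neq L$ and $0$ at $n=L$ on $\ell^{2}(\N)$---show convergence can hold even when every $H_{L}$ has a zero eigenvalue); what can destroy convergence is spectral mass of the \emph{limit} at zero, $E_{\{0\}}(H_{\infty})\neq0$, toward which spectral weight of the $H_{L}$ may accumulate from either side. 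Neither point affects the validity of your argument.
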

\begin{proof}
The proof is found in \cite[Lemma 3.3.]{araEva}
\end{proof}\par
For any $L\in\R_{0}^{+}$ let us define the set of \emph{local} quasi--free ground states by  $\q\states^{(L)}\subset\q\states^{(\infty)}$ on $\A_{L}\subset\A_{\infty}$. See Definition \ref{def:qfgs}. To be explicit, for any
self--dual Hamiltonian $H_{\infty}\in\BL(\H_{\infty})$ on $(\H_{\infty},\a_{\infty})$ and any orthogonal projection $P_{\H_{L}}\in\BL(\H_{\infty})$ on $\H_{L}$ the local Hamiltonian is given by \eqref{definition finite volume hamiltonian}, namely,
$$
H_{L}\doteq P_{\H_{L}}H_{\infty}P_{\H_{L}},
$$
which has an associated \emph{local} Gibbs state defined by
$$
\varrho_{\Lambda_{L}}\left(\B(\varphi_{1,\,\Lambda_{L}})\B(\varphi_{2,\,\Lambda_{L}})^{*}\right)\doteq\inner{\varphi_{1,\,\Lambda_{L}},E_{+,L}\varphi_{2,\,\Lambda_{L}}}_{\H_{L}},
$$
for $\varphi_{j,\Lambda_{L}}\in\H_{L}$, $j=\{1,2\}$, where $E_{+,L}$ denotes the sequence of spectral projections of Proposition \ref{lemma:strongly_limits}. Then, using Expressions \eqref{ass O0-00bis}--\eqref{Pfaffian} the local quasi--free ground state $\omega_{\Lambda_{L}}\in\q\states^{(L)}$ is found to be
\begin{multline}\label{eq:quasi_gibbs}
\omega_{\Lambda_{L}}\left(\B(\varphi_{1,\Lambda_{L}})\B(\varphi_{2,\Lambda_{L}})^{*}\B(\varphi_{3,\Lambda_{L}^{\text{c}}})\B(\varphi_{4,\Lambda_{L}^{\text{c}}})^{*}\right)=\\
\varrho_{\Lambda_{L}}\left(\B(\varphi_{1,\,\Lambda_{L}})\B(\varphi_{2,\,\Lambda_{L}})^{*}\right)\omega_{\Lambda_{L}}\left(\B(\varphi_{3,\Lambda_{L}^{\text{c}}})\B(\varphi_{4,\Lambda_{L}^{\text{c}}})^{*}\right)
\end{multline}
where $\varphi_{j,\Lambda_{L}^{\text{c}}}\in\H_{\Lambda_{L}^{\text{c}}}\equiv\H_{L}^{\text{c}}$, $j=\{3,4\}$, cf. \cite[Section 7.5]{Araki-Moriya}. By linearity, for any two \emph{even} elements $A\in\A_{L}^{+}$ and $B\in\A_{L}^{+,\,\text{c}}$, see \eqref{eq:even odd}, we get from \eqref{eq:quasi_gibbs}:
$$
\omega_{\Lambda_{L}}\left(AB\right)=\varrho_{\Lambda_{L}}\left(A\right)\omega_{\Lambda_{L}}\left(B\right),
$$
see again \cite[Section 7.5]{Araki-Moriya}. In particular, for $B=\mathfrak{1}\in\A_{\infty}$ we have
$$
\omega_{\Lambda_{L}}\left(A\right)=\varrho_{\Lambda_{L}}\left(A\right).
$$
We now state:
\begin{theorem}[Quasi--free ground states]\label{theo:quasi_free}
The local quasi--free ground state $\omega_{\Lambda_{L}}$ converges to
$$
\omega\left(\B(\varphi_{1})\B(\varphi_{2})^{*}\right)=\inner{\varphi_{1},E_{+}\varphi_{2}}_{\H_{\infty}},
$$
in the weak$^{*}$--topology, where $E_{+}\in\BL(\H_{\infty})$ is the spectral projection on $\R^{+}$ associated to the self--dual Hamiltonian $H_{\infty}\in\BL(\H_{\infty})$, and $\varphi_{1},\varphi_{2}\in\H_{\infty}$.
\end{theorem}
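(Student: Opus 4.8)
The plan is to exploit the fact that both $\omega_{\Lambda_L}$ and the candidate limit $\omega$ are \emph{quasi--free} states, and are therefore completely determined by their two--point functions through the Pfaffian formula \eqref{ass O0-00}--\eqref{Pfaffian}. Since the local algebra $\A_{\infty}^{(0)}$ is dense in $\A_{\infty}$ (see \eqref{local elements}) and every state is a contraction with $\omega(\mathfrak 1)=1$, weak$^{*}$--convergence will follow once I establish convergence on the dense set of monomials $\B(\varphi_{1})\cdots\B(\varphi_{n})$, $\varphi_{i}\in\H_{\infty}$; a routine $\varepsilon/3$ argument then extends the convergence to all of $\A_{\infty}$ using the uniform norm bound on the functionals.

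On monomials the problem collapses to the two--point functions. Odd monomials are annihilated by every quasi--free state by \eqref{ass O0-00}, while by \eqref{ass O0-00bis}--\eqref{Pfaffian} an even monomial equals a fixed polynomial (the Pfaffian) in the matrix with entries $\omega_{\Lambda_L}(\B(\varphi_{k})\B(\varphi_{l}))$. Using the relation $\B(\varphi)=\B(\a\varphi)^{*}$ from Definition \ref{def Self--dual CAR Algebras}, each such entry can be written in the form $\omega_{\Lambda_L}(\B(\psi_{1})\B(\psi_{2})^{*})$, and since the Pfaffian depends continuously on its entries it suffices to prove the single limit
\[
\lim_{L\to\infty}\omega_{\Lambda_L}\left(\B(\varphi_{1})\B(\varphi_{2})^{*}\right)=\inner{\varphi_{1},E_{+}\varphi_{2}}_{\H_{\infty}},\qquad \varphi_{1},\varphi_{2}\in\H_{\infty}.
\]

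To obtain this last limit I would decompose each test vector as $\varphi_{i}=P_{\H_{L}}\varphi_{i}+P_{\H_{L}^{\mathrm c}}\varphi_{i}$ and insert it into the factorized form \eqref{eq:quasi_gibbs} of the local ground state. The restriction to $\H_{L}$ produces exactly $\varrho_{\Lambda_L}(\B(P_{\H_{L}}\varphi_{1})\B(P_{\H_{L}}\varphi_{2})^{*})=\inner{P_{\H_{L}}\varphi_{1},E_{+,L}\,P_{\H_{L}}\varphi_{2}}_{\H_{L}}$; the remaining terms, each carrying at least one factor in the complement, are controlled by the contractivity $\Vert\B(\varphi)\Vert_{\A_{\infty}}\leq\Vert\varphi\Vert_{\H_{\infty}}$ (noted after Definition \ref{def Self--dual CAR Algebras}) and tend to zero because $\Vert P_{\H_{L}^{\mathrm c}}\varphi_{i}\Vert_{\H_{\infty}}\to0$, as $P_{\H_{L}}\to\1_{\H_{\infty}}$ strongly (see the remark after \eqref{definition finite volume hamiltonian}). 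Finally I invoke Proposition \ref{lemma:strongly_limits}: under the assumption that $0$ is not an eigenvalue of $H_{\infty}$ the spectral projections converge strongly, $E_{+,L}\to E_{+}$, so that, by the joint strong convergence $P_{\H_{L}}\to\1_{\H_{\infty}}$ and $E_{+,L}\to E_{+}$ together with the uniform bound $\Vert E_{+,L}\Vert\leq 1$, one gets $\inner{P_{\H_{L}}\varphi_{1},E_{+,L}\,P_{\H_{L}}\varphi_{2}}_{\H_{L}}\to\inner{\varphi_{1},E_{+}\varphi_{2}}_{\H_{\infty}}$.

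The main obstacle I anticipate lies in the third step: cleanly discarding the cross and complement contributions in \eqref{eq:quasi_gibbs}. One must verify that the mixed two--point terms with one argument in $\H_{L}$ and the other in $\H_{L}^{\mathrm c}$ do not survive — this relies on the product structure of $\omega_{\Lambda_L}$ across $\A_{L}$ and $\A_{\Lambda_L^{\mathrm c}}$, so that such terms either vanish outright or are of order $\Vert P_{\H_{L}^{\mathrm c}}\varphi_{i}\Vert_{\H_{\infty}}$ — and, crucially, that the limit is \emph{independent} of the arbitrary state imposed on the shrinking complement. Since products of strongly convergent bounded sequences converge only when evaluated on fixed vectors, the whole estimate must be carried out vector by vector, which is precisely what the weak$^{*}$ (rather than norm) formulation of the statement allows.
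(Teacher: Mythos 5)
Your proposal is correct, but it takes a genuinely different route from the paper's own proof. The paper argues by a Cauchy/telescoping scheme: for $L_{2}\geq L_{1}$ it canonically embeds $\H_{L_{1}}$ into $\H_{L_{2}}$, forms the difference $D_{\omega_{L_{2}},\,\omega_{L_{1}}}$ of the two two--point functions, asserts that the finite--volume spectral projections split as a direct sum $E_{+,\Lambda_{L_{2}}}=E_{+,\Lambda_{L_{1}}}\oplus E_{+,\Lambda_{L_{2}}\setminus\Lambda_{L_{1}}}$, and concludes via ``straightforward calculations'' that the iterated limit vanishes; Proposition \ref{lemma:strongly_limits} is never invoked, and nothing is said about correlations of order higher than two or about upgrading the two--point convergence to weak$^{*}$--convergence on all of $\A_{\infty}$. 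You instead identify the limit directly: strong convergence $E_{+,L}\to E_{+}$ from Proposition \ref{lemma:strongly_limits} (available because the paper assumes $E_{0}=0$ throughout), combined with $P_{\H_{L}}\to\1_{\H_{\infty}}$ strongly, the uniform bound $\Vert E_{+,L}\Vert\leq1$ and the contraction property $\Vert\B(\varphi)\Vert_{\A_{\infty}}\leq\Vert\varphi\Vert_{\H_{\infty}}$, gives the two--point limit vector by vector; then Pfaffian continuity via \eqref{ass O0-00}--\eqref{Pfaffian}, density of $\A_{\infty}^{(0)}$ and the uniform norm of states yield the weak$^{*}$--statement. Your route buys two things the paper's sketch does not: (i) it avoids the direct--sum claim for the spectral projections, which is actually delicate since $H_{L_{2}}=P_{\H_{L_{2}}}H_{\infty}P_{\H_{L_{2}}}$ contains hopping terms coupling $\Lambda_{L_{1}}$ to $\Lambda_{L_{2}}\setminus\Lambda_{L_{1}}$, so the projections do not factorize exactly — precisely the kind of boundary contribution your norm estimates absorb instead of discard; and (ii) it proves the full weak$^{*}$--convergence rather than only the two--point convergence. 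One small reorganization would make your argument airtight: rather than applying the Pfaffian formula to $\omega_{\Lambda_{L}}$ on all of $\A_{\infty}$ (its global quasi--freeness depends on what state sits on the complement in \eqref{eq:quasi_gibbs}), decompose the vectors in a general monomial first and invoke quasi--freeness only for $\varrho_{\Lambda_{L}}$ on $\A_{L}$, where it is guaranteed; this also settles, in one stroke, the independence from the arbitrary state on the complement that you correctly flag as the main obstacle.
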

\begin{proof}
For the sake of clarity, for any $L\in\R_{0}^{+}$ denote $\H_{\Lambda_{L}}\equiv\H_{L}$, and $E_{\Lambda_{L}}\equiv E_{L}$. See Expressions \eqref{eq:boxesl}--\eqref{canonical onb1} and comments around it. Take $L_{1},L_{2}\in\R_{0}^{+}$, with $L_{2}\geq L_{1}$ such that $\Lambda_{L_{2}}\supsetneq\Lambda_{L_{1}}$. Thus, we analyze the following difference:
$$
D_{\omega_{L_{2}},\,\omega_{L_{1}}}\doteq\omega_{L_{2}}\left(\B(\varphi_{1,\,L_{2}})\B(\varphi_{2,\,L_{2}})^{*}\right)-\omega_{L_{1}}\left(\B(\varphi_{1,\,L_{1}})\B(\varphi_{2,\,L_{1}})^{*}\right).
$$
Here, in the way that the set of boxes $\Lambda_{L}$ was defined \eqref{eq:boxesl}, for $j=\{1,2\}$ we canonically identify $\varphi_{j,\Lambda_{L_{1}}}\in\H_{\Lambda_{L_{1}}}$ with the element $\varphi_{j,\Lambda_{L_{1}}}\oplus0_{\Lambda_{L_{2}}\setminus\Lambda_{L_{1}}}\in\H_{\Lambda_{L_{2}}}$. The spectral projections on $\R^{+}$ are related by
$E_{+,\Lambda_{L_{2}}}=E_{+,\Lambda_{L_{1}}}\oplus E_{+,\Lambda_{L_{2}}\setminus\Lambda_{L_{1}}}\in\BL(\H_{\Lambda_{L_{2}}})$.
 Straightforward calculations yield us to note that $\lim\limits_{L_{1}\to\infty}\lim\limits_{L_{2}\to\infty}D_{\omega_{L_{2}},\,\omega_{L_{1}}}$ equals zero.
\end{proof}\par
We are now in a position to prove the properties of the family of automorphisms $\kappa_{s}\colon\BL(\H)\to\BL(\H)$, for any $s\in\cal$, given by Assumption \ref{assump:hamil} and Lemma \ref{lemma:impor_res}, which are associated to a differentiable family of self--dual Hamiltonians $\mathbf{H}\in\BL(\H_{\infty})$, see Definition \ref{def: phase of the matter}. Observe that the existence of such $\kappa_{s}$ is closely related to the existence of a differentiable unitary operator $V_{s}$ satisfying the \emph{non}--autonomous differential equation, Expression \eqref{eq:spectral proof2}:
$$
\p_{s}V_{s}=-\ii \mathfrak{D}_{\g,s}V_{s},\quad\text{with}\quad V_{0}=\pm\1_{\H},
$$
where $\{\mathfrak{D}_{\g,s}\}_{s\in\cal}\in\BL(\H_{\infty})$ is a family of self--adjoint operators that is found to be
$$
\mathfrak{D}_{\g,s}\doteq \int_{\R}\e^{\ii tH_{s}}\left(\p_{s}H_{s}\right)\e^{-\ii tH_{s}}\mathfrak{W}_{\g}(t)\d t,
$$
with $\mathfrak{W}_{\g}\colon\R\to\R$ an integrable odd function the properties of which are summarized in \cite{bachmann2012automorphic,michalakis2013stability} and references therein. In the sequel, for any $s\in\cal$, $V_{s},H_{s}$ and $\p_{s}H_{s}$ have to be understood as the strong limit of the sequences $\{V_{s}^{(L)}\}_{L\in\R_{0}^{+}},\{H_{s,L}\}_{L\in\R_{0}^{+}}$ and $\{\p_{s}H_{s,L}\}_{L\in\R_{0}^{+}}$ respectively. We formulate:
\begin{lemma}\label{theor:unit_oper}
Take $\cal\equiv[0,1]$, fix $s\in\cal$ and consider the family of operators satisfying Assumption \ref{assump:hamil}. Then, the sequence of automorphisms $\{\kappa_{s}^{(L)}\}_{L\in\R_{0}^{+}}\colon\BL(\H_{L})\to\BL(\H_{L})$ of Lemma \ref{lemma:impor_res} on the local self--dual Hilbert space $(\H_{L},\a_{L})$ strongly converges on $\cal$ to $\kappa_{s}\colon\BL(\H_{\infty})\to\BL(\H_{\infty})$. More precisely, for any $\Lambda\in\Pfz$, $B\in\BL(\H_{\Lambda})$ and $L\in\R_{0}^{+}$ such that $\Lambda\subset\Lambda_{L}$ we have
\begin{align*}
\lim_{L\to\infty}\left\Vert\kappa_{s}(B)-\kappa_{s}^{(L)}(B)\right\Vert_{\BL(\H_{\infty})}=0, \qquad\text{for any}\qquad s\in\cal.
\end{align*}
\end{lemma}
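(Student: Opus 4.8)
The plan is to reduce the convergence of the automorphisms $\kappa_s^{(L)}$ to the norm convergence of the implementing unitaries $V_s^{(L)}$, and then to obtain the latter from a Duhamel estimate driven by the convergence of the generators $\mathfrak{D}_{\g,s}^{(L)}$. First I would isolate the purely algebraic reduction. For $B\in\BL(\H_{\Lambda})$ with $\Lambda\subset\Lambda_{L}$, viewing $V_{s}^{(L)}$ as the unitary $V_{s}^{(L)}\oplus\1_{\H_{L}^{\text{c}}}$ on $\H_{\infty}$, one writes
\begin{equation*}
\kappa_{s}(B)-\kappa_{s}^{(L)}(B)=\left(V_{s}^{*}-\left(V_{s}^{(L)}\right)^{*}\right)BV_{s}+\left(V_{s}^{(L)}\right)^{*}B\left(V_{s}-V_{s}^{(L)}\right),
\end{equation*}
so that, by unitarity ($\Vert V_{s}\Vert=\Vert V_{s}^{(L)}\Vert=1$ and $\Vert A^{*}\Vert=\Vert A\Vert$),
\begin{equation*}
\left\Vert\kappa_{s}(B)-\kappa_{s}^{(L)}(B)\right\Vert_{\BL(\H_{\infty})}\leq 2\,\Vert B\Vert_{\BL(\H_{\Lambda})}\left\Vert V_{s}-V_{s}^{(L)}\right\Vert_{\BL(\H_{\infty})}.
\end{equation*}
Hence everything reduces to showing $\Vert V_{s}-V_{s}^{(L)}\Vert_{\BL(\H_{\infty})}\to 0$ as $L\to\infty$.

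Next I would control the generators. Both $V_{s}$ and $V_{s}^{(L)}$ solve the non--autonomous equation \eqref{eq:spectral proof2} with the same datum $V_{0}=V_{0}^{(L)}=\pm\1_{\H}$, driven respectively by $\mathfrak{D}_{\g,s}$ and by its finite--volume analogue $\mathfrak{D}_{\g,s}^{(L)}$, obtained from \eqref{eq: operator D} upon replacing $H_{s},\p_{s}H_{s}$ by $H_{s,L},\p_{s}H_{s,L}$. Using the integral representation and inserting and removing the finite--volume dynamics, the integrand of $\mathfrak{D}_{\g,s}-\mathfrak{D}_{\g,s}^{(L)}$ is bounded by a term proportional to $|t|\,\Vert H_{s,L}-H_{s,\infty}\Vert\,\Vert\p_{s}H_{s,\infty}\Vert$ (coming from the elementary estimate $\Vert\e^{\ii tH_{s,L}}-\e^{\ii tH_{s,\infty}}\Vert\leq|t|\,\Vert H_{s,L}-H_{s,\infty}\Vert$), plus a term proportional to $\Vert\p_{s}H_{s,L}-\p_{s}H_{s,\infty}\Vert$. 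Multiplying by $\mathfrak{W}_{\g}(t)$ and integrating, the rapid decay of $\mathfrak{W}_{\g}$ (so that $\int_{\R}(1+|t|)|\mathfrak{W}_{\g}(t)|\,\d t<\infty$) together with Assumption \ref{assump:hamil}(b) yields
\begin{equation*}
\sup_{s\in\cal}\left\Vert\mathfrak{D}_{\g,s}-\mathfrak{D}_{\g,s}^{(L)}\right\Vert_{\BL(\H_{\infty})}\xrightarrow[L\to\infty]{}0.
\end{equation*}

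Finally, the Duhamel step. Setting $W_{s}^{(L)}\doteq V_{s}-V_{s}^{(L)}$, one has $W_{0}^{(L)}=0$ and
\begin{equation*}
\p_{s}W_{s}^{(L)}=-\ii\,\mathfrak{D}_{\g,s}W_{s}^{(L)}-\ii\left(\mathfrak{D}_{\g,s}-\mathfrak{D}_{\g,s}^{(L)}\right)V_{s}^{(L)}.
\end{equation*}
Since the propagator generated by $-\ii\,\mathfrak{D}_{\g,s}$ is unitary, variation of parameters gives $\Vert W_{s}^{(L)}\Vert\leq\int_{0}^{s}\Vert\mathfrak{D}_{\g,r}-\mathfrak{D}_{\g,r}^{(L)}\Vert\,\d r$, which tends to $0$ by the previous step; combined with the reduction this proves the claim. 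I expect the second step to be the main obstacle: one must trade the merely strong convergence $\e^{\ii tH_{s,L}}\to\e^{\ii tH_{s,\infty}}$ against the weight $\mathfrak{W}_{\g}$, and it is precisely the rapid decay of $\mathfrak{W}_{\g}$ (from the Bachmann--Michalakis--Nachtergaele--Sims construction) together with the \emph{norm} convergence postulated in Assumption \ref{assump:hamil}(b) that makes the $t$--integral converge and lets one upgrade to norm convergence of $\mathfrak{D}_{\g,s}^{(L)}$, uniformly in $s\in\cal$.
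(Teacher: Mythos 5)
Your proposal is correct, and it rests on the same analytic core as the paper's proof: the difference of the generators is controlled by inserting and removing the Heisenberg dynamics, using $\Vert\e^{\ii tA}-\e^{\ii tB}\Vert_{\BL(\H_{\infty})}\leq|t|\,\Vert A-B\Vert_{\BL(\H_{\infty})}$ together with Assumption \ref{assump:hamil}(b) and the integrability of $(1+|t|)\mathfrak{W}_{\g}(t)$ --- this is precisely the paper's decomposition of $\mathfrak{D}_{\g,r}^{(L_{2})}-\mathfrak{D}_{\g,r}^{(L_{1})}$ via the automorphisms $\widetilde{\tau}_{r,\,t}^{(L)}$ and the estimate \eqref{eq:diff_auto}. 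Where you genuinely differ is in the organization. You compare finite volume directly against the infinite-volume objects: Duhamel gives $\Vert V_{s}-V_{s}^{(L)}\Vert\leq\int_{0}^{s}\Vert\mathfrak{D}_{\g,r}-\mathfrak{D}_{\g,r}^{(L)}\Vert\,\d r$, and the two-term telescoping identity then yields the automorphism convergence; in effect you prove the paper's Corollary \ref{corol:V_cauchy} \emph{first} and read off the Lemma as a corollary, inverting the paper's logical order. The paper never invokes the infinite-volume unitary: it works with pairs of finite volumes, writes $\kappa_{s}^{(L_{2})}(B)-\kappa_{s}^{(L_{1})}(B)$ as the integral of the $r$-derivative of conjugation by the cocycles $U_{L}(s,r)$ of \eqref{eq:unit_ele_sr}--\eqref{eq:unit_ele_sr2}, obtains the commutator \eqref{eq:diff_Ds} with $\mathfrak{D}_{\g,r}^{(L_{2})}-\mathfrak{D}_{\g,r}^{(L_{1})}$, and concludes that $\kappa_{s}^{(L)}(B)$ is Cauchy, so that the limit \emph{defines} $\kappa_{s}$; norm convergence of $V_{s}^{(L)}$ is only extracted afterwards. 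Your route is shorter and more modular (generators $\to$ propagators $\to$ automorphisms), but it presupposes that the infinite-volume equation \eqref{eq:spectral proof2}, with generator $\mathfrak{D}_{\g,s}$ built from $H_{s,\infty}$ and $\p_{s}H_{s,\infty}$, admits a unitary solution $V_{s}$; this is harmless but deserves an explicit line (Dyson series for a bounded, $s$-integrable self-adjoint generator), whereas the paper's Cauchy formulation constructs the limit rather than comparing with it, and its commutator form of the integrand is the one that survives quasi-local/interacting generalizations where crude norm bounds are unavailable. One small overstatement on your side: Assumption \ref{assump:hamil}(b) gives norm convergence pointwise in $s$, not uniformly on $\cal$, so the $\sup_{s\in\cal}$ in your second step should be dropped; the pointwise statement (plus dominated convergence in the $r$-integral of the Duhamel step) is all the Lemma requires.
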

\begin{proof}
Fix $\Lambda\in\Pfz$ and take $L_{1},L_{2}\in\R_{0}^{+}$, with $L_{2}\geq L_{1}$ such that $\Lambda_{L_{2}}\supsetneq\Lambda_{L_{1}}\supset\Lambda$. We proceed in a similar way as in \cite[Lemma 4.4]{brupedraLR}. Note that with a few modifications of the proof we can arrive at a result that works even in the \emph{interparticle} case \cite{AMR2}.\par
For any $L\in\R_{0}^{+}$, let $V_{s}^{(L)}$ be the unitary operator satisfying the differential equation \eqref{eq:spectral proof2}, with $V_{0}^{(L)}=\pm\1_{\H}$. For $s,r\in\cal$, one defines the unitary element
\begin{equation}\label{eq:unit_ele_sr}
U_{L}(s,r)\doteq V_{s}^{(L)}\left(V_{r}^{(L)}\right)^{*},
\end{equation}
which satisfies $U_{L}(s,s)=\1_{\H}$ for all $s\in\cal$ while
\begin{equation}\label{eq:unit_ele_sr2}
\p_{s}U_{L}(s,r)=-\ii \mathfrak{D}_{\g,s}^{(L)}U_{L}(s,r)\qquad\text{and}\qquad
\p_{r}U_{L}(s,r)=\ii U_{L}(s,r)\mathfrak{D}_{\g,r}^{(L)}.
\end{equation}
Note that for $B\in\BL(\H_{\Lambda})$ one can write
$$
\kappa_{s}^{(L_{2})}(B)-\kappa_{s}^{(L_{1})}(B)=\int_{0}^{s}\p_{r}(U_{L_{2}}(0,r)U_{L_{1}}(r,s)BU_{L_{1}}(s,r)U_{L_{2}}(r,0))\d r.
$$
Straightforward calculations show us that the derivative inside the integral is
\begin{equation}\label{eq:diff_Ds}
 \ii U_{L_{2}}(0,r)\left[\left(\mathfrak{D}_{\g,r}^{(L_{2})}-\mathfrak{D}_{\g,r}^{(L_{1})}\right),U_{L_{1}}(r,s)BU_{L_{1}}(s,r)\right]U_{L_{2}}(r,0)
\end{equation}
with $s,r\in\cal$, and for $L\in\R_{0}^{+}$, and $\Lambda\in\Pfz$, $\Lambda\subset\Lambda_{L}$.\\
On the other hand, for any $s\in\cal$, $t\in\R^{+}$, $\Lambda\in\Pfz$ and $L\in\R_{0}^{+}$ such that $\Lambda_{L}\supset\Lambda$, define the $s$--automorphism $\widetilde{\tau}_{s,\,t}^{(L)}\colon\BL(\H_{L})\to\BL(\H_{L})$ by
$$
\widetilde{\tau}_{s,\,t}^{(L)}(B)\doteq\e^{\ii tH_{s,\,L}}B\e^{-\ii tH_{s,\,L}},
$$
with $H_{s,\,L}$ a self--dual Hamiltonian on $(\H_{L},\a_{L})$. Then, for $L_{1},L_{2}\in\R_{0}^{+}$ one can write the following
\begin{eqnarray*}
\widetilde{\tau}_{s,\,t}^{(L_{2})}(B)-\widetilde{\tau}_{s,\,t}^{(L_{1})}(B)&=&\int_{0}^{t}\p_{u}\left(\widetilde{\tau}_{s,\,u}^{(L_{2})}\circ\widetilde{\tau}_{s,\,t-u}^{(L_{1})}(B)\right)\d u\\
&=&\ii\int_{0}^{t}\widetilde{\tau}_{u}^{(L_{2})}\left(\left[H_{s,\,L_{2}}-H_{s,\,L_{1}},\widetilde{\tau}_{s,\,t-u}^{(L_{1})}(B)\right]\right)\d u,
\end{eqnarray*}
where, for a fix $s\in\cal$, the difference $H_{s,\,L_{2}}-H_{s,\,L_{1}}\in\BL(\H_{\infty})$ is given by
$$
H_{s,\,L_{2}}-H_{s,\,L_{1}}=P_{\H_{L_{1}}}H_{s,\,\infty}P_{\H_{L_{2}}\setminus\H_{L_{1}}}+P_{\H_{L_{2}}\setminus\H_{L_{1}}}H_{s,\,\infty}P_{\H_{L_{1}}}+P_{\H_{L_{2}}\setminus\H_{L_{1}}}H_{s,\,\infty}P_{\H_{L_{2}}\setminus\H_{L_{1}}},
$$
where $P_{\H_{L_{2}}\setminus\H_{L_{1}}}\equiv P_{\H_{L_{2}}}-P_{\H_{L_{1}}}\in\BL(\H_{\infty})$ is the orthogonal projector on $\H_{L_{2}}\setminus\H_{L_{1}}$. Here, $H_{s,\,\infty}$ is the self--dual Hamiltonian on $(\H_{\infty},\a_{\infty})$ at infinite volume. It follows that,
\begin{eqnarray}\label{eq:diff_auto}
\left\Vert
\widetilde{\tau}_{s,\,t}^{(L_{2})}(B)-\widetilde{\tau}_{s,\,t}^{(L_{1})}(B)\right\Vert_{\BL(\H_{\infty})}&\leq&\int_{0}^{t}\left\Vert\left[H_{s,\,L_{2}}-H_{s,\,L_{1}},\widetilde{\tau}_{s,\,t-u}^{(L_{1})}(B)\right]\right\Vert\d u \nonumber\\
&\leq&2|t|\left\Vert B\right\Vert_{\BL(\H_{\infty})}\left\Vert H_{s,\,L_{2}}-H_{s,\,L_{1}}\right\Vert_{\BL(\H_{\infty})}.
\end{eqnarray}
By Assumption \ref{assump:hamil}, $\{H_{s,\,L}\}_{L\in\R_{0}^{+}}$ is a sequence of operators which converges in norm to $H_{s,\,\infty}$ as $L\to\infty$, then last expression is a Cauchy sequence of self--adjoint operators. Hence, for all $t\in\R$, $\widetilde{\tau}_{s,\,t}^{(L)}$ converges strongly on $\BL(\H_{L})$ to $\widetilde{\tau}_{s,\,t}$, as $L\to\infty$.\\
What is important to stress is that the difference $\mathfrak{D}_{\g,r}^{(L_{2})}-\mathfrak{D}_{\g,r}^{(L_{1})}$ in Expression \eqref{eq:diff_Ds} can be written as follows
\begin{align*}
\mathfrak{D}_{\g,r}^{(L_{2})}-\mathfrak{D}_{\g,r}^{(L_{1})}
&=\int_{\R}\left(\widetilde{\tau}_{r,\,t}^{(L_{2})}(\p_{r}\left\{H_{r,L_{2}}\right\})-\widetilde{\tau}_{r,\,t}^{(L_{2})}(\p_{r}\left\{H_{r,L_{1}}\right\})\right)\mathfrak{W}_{\g}(t)\d t+\\
&\int_{\R}\left(\widetilde{\tau}_{r,\,t}^{(L_{2})}(\p_{r}\left\{H_{r,L_{1}}\right\})-\widetilde{\tau}_{r,\,t}^{(L_{1})}(\p_{r}\left\{H_{r,L_{1}}\right\})\right)\mathfrak{W}_{\g}(t)\d t.
\end{align*}
From which one has
\begin{multline}\label{eq:kappas}
\left\Vert\kappa_{s}^{(L_{2})}(B)-\kappa_{s}^{(L_{1})}(B)\right\Vert_{\BL(\H_{\infty})}\leq 2\left\Vert B\right\Vert_{\BL(\H_{\infty})}|s|\\
\sup\limits_{r\in\cal}\left(\int_{\R}\left\Vert\p_{r}\left\{H_{r,L_{2}}\right\}-\p_{r}\left\{H_{r,L_{1}}\right\}\right\Vert_{\BL(\H_{\infty})}\left|\mathfrak{W}_{\g}(t)\right|\d t\right.\\
\left.+\int_{\R}\left\Vert\left(\widetilde{\tau}_{r,\,t}^{(L_{2})}-\widetilde{\tau}_{r,\,t}^{(L_{1})}\right)\p_{r}\left\{H_{r,L_{1}}\right\}\right\Vert_{\BL(\H_{\infty})}\left|\mathfrak{W}_{\g}(t)\right|\d t\right).
\end{multline}
Hence, for a fixed $s\in\cal$, by Assumption \ref{assump:hamil} and Inequality \eqref{eq:diff_auto} one notes that the right hand side of the last inequality vanishes as $L_{2}\to\infty$ and $L_{1}\to\infty$. Thus, $\kappa_{s}^{(L)}$ is a pointwise Cauchy sequence as $L\to\infty$ and hence the family of automorphism $\kappa_{s}^{(L)}$ converges strongly on $\BL(\H_{L})$ to $\kappa_{s}$ as $L\to\infty$.
\end{proof}
As a consequence we have:
\begin{corollary}\label{corol:V_cauchy}
Make the same assumptions as in Lemma \ref{theor:unit_oper}. Then for any $s\in\cal$, the sequence of unitary operators $V_{s}^{(L)}$ converges in norm, as $L\to\infty$, to some $V_{s}$.
\end{corollary}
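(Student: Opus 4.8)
The plan is to bypass the automorphisms and work directly with the first--order linear differential equation \eqref{eq:spectral proof2} that pins down each $V_{s}^{(L)}$, fixing once and for all the common initial condition $V_{0}^{(L)}=\1_{\H_{\infty}}$ for every $L\in\R_{0}^{+}\cup\{\infty\}$ (the sign freedom in $V_{0}^{(L)}=\pm\1_{\H_{\infty}}$ must be chosen uniformly in $L$, otherwise no limit can exist). Writing $W_{s}\doteq V_{s}^{(L_{2})}-V_{s}^{(L_{1})}$ and $\Delta_{r}\doteq\mathfrak{D}_{\g,r}^{(L_{2})}-\mathfrak{D}_{\g,r}^{(L_{1})}$ for $L_{2}\geq L_{1}$, subtracting the two copies of \eqref{eq:spectral proof2} yields the inhomogeneous equation
\begin{equation*}
\p_{s}W_{s}=-\ii\mathfrak{D}_{\g,s}^{(L_{2})}W_{s}-\ii\Delta_{s}V_{s}^{(L_{1})},\qquad W_{0}=0 .
\end{equation*}

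First I would solve this by Duhamel's formula, taking as propagator for the homogeneous part the unitary $U_{L_{2}}(s,r)=V_{s}^{(L_{2})}(V_{r}^{(L_{2})})^{*}$ of \eqref{eq:unit_ele_sr}, which by \eqref{eq:unit_ele_sr2} satisfies $\p_{s}U_{L_{2}}(s,r)=-\ii\mathfrak{D}_{\g,s}^{(L_{2})}U_{L_{2}}(s,r)$ and $U_{L_{2}}(s,s)=\1_{\H_{\infty}}$. A direct differentiation check (using $\p_{s}U_{L_{2}}(s,r)=-\ii\mathfrak{D}_{\g,s}^{(L_{2})}U_{L_{2}}(s,r)$) shows that
\begin{equation*}
V_{s}^{(L_{2})}-V_{s}^{(L_{1})}=-\ii\int_{0}^{s}U_{L_{2}}(s,r)\,\Delta_{r}\,V_{r}^{(L_{1})}\,\d r .
\end{equation*}
Since $U_{L_{2}}(s,r)$ and $V_{r}^{(L_{1})}$ are unitary, taking norms collapses the estimate to
\begin{equation*}
\left\Vert V_{s}^{(L_{2})}-V_{s}^{(L_{1})}\right\Vert_{\BL(\H_{\infty})}\leq\int_{0}^{s}\left\Vert\Delta_{r}\right\Vert_{\BL(\H_{\infty})}\d r\leq\sup_{r\in\cal}\left\Vert\mathfrak{D}_{\g,r}^{(L_{2})}-\mathfrak{D}_{\g,r}^{(L_{1})}\right\Vert_{\BL(\H_{\infty})} .
\end{equation*}

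It then remains to control the right--hand side uniformly in $r\in\cal$, and this is exactly the quantity already estimated in the proof of Lemma \ref{theor:unit_oper}: using the decomposition of $\mathfrak{D}_{\g,r}^{(L_{2})}-\mathfrak{D}_{\g,r}^{(L_{1})}$ into two integrals displayed just before \eqref{eq:kappas}, the first term is bounded by $\int_{\R}\Vert\p_{r}H_{r,L_{2}}-\p_{r}H_{r,L_{1}}\Vert\,|\mathfrak{W}_{\g}(t)|\,\d t$, which vanishes as $L_{1},L_{2}\to\infty$ by the norm convergence in Assumption \ref{assump:hamil}(b), while the second term is bounded via \eqref{eq:diff_auto} by $2\Vert\p_{r}H_{r,L_{1}}\Vert\,\Vert H_{r,L_{2}}-H_{r,L_{1}}\Vert\int_{\R}|t\,\mathfrak{W}_{\g}(t)|\,\d t$, which also vanishes. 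The $|t|$--growth in \eqref{eq:diff_auto} is precisely what forces the use of the decay property $t\,\mathfrak{W}_{\g}\in L^{1}(\R)$ of the weight. As these bounds are uniform in $r$, one gets $\sup_{r\in\cal}\Vert\Delta_{r}\Vert\to0$.

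Combining the two displays, $\{V_{s}^{(L)}\}_{L\in\R_{0}^{+}}$ is Cauchy in the operator norm of $\BL(\H_{\infty})$, uniformly in $s\in\cal$; by completeness it converges to some $V_{s}$, which is unitary as a norm--limit of unitaries. The hard part is conceptual rather than computational: convergence of the automorphisms $\kappa_{s}^{(L)}$ from Lemma \ref{theor:unit_oper} does \emph{not} by itself yield convergence of the implementers $V_{s}^{(L)}$, since an automorphism determines its Bogoliubov implementer only up to a phase. The resolution is that \eqref{eq:spectral proof2} with a fixed initial condition rigidly selects a representative, and it is the \emph{norm} convergence of the generators $\mathfrak{D}_{\g,s}^{(L)}$ --- not merely the strong convergence of the conjugations --- that propagates through Duhamel to the $V_{s}^{(L)}$ themselves.
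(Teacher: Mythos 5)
Your proof is correct and follows essentially the same route as the paper's: the paper also writes the difference (of adjoints) $\left(V_{s}^{(L_{2})}\right)^{*}-\left(V_{s}^{(L_{1})}\right)^{*}=\ii\int_{0}^{s}U_{L_{2}}(0,r)\left(\mathfrak{D}_{\g,r}^{(L_{2})}-\mathfrak{D}_{\g,r}^{(L_{1})}\right)U_{L_{1}}(r,s)\d r$, which is your Duhamel identity up to taking adjoints and swapping the roles of $L_{1},L_{2}$, and then bounds $\mathfrak{D}_{\g,r}^{(L_{2})}-\mathfrak{D}_{\g,r}^{(L_{1})}$ exactly as in \eqref{eq:kappas}. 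Your two added observations --- that the sign in $V_{0}^{(L)}=\pm\1$ must be chosen uniformly in $L$, and that the second term requires $t\,\mathfrak{W}_{\g}(t)\in L^{1}(\R)$ via \eqref{eq:diff_auto} --- are details the paper leaves implicit, and they are correct.
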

\begin{proof} As is usual, it is enough to show that the sequence $V_{s}^{(L)}$ is a Cauchy sequence. Note that for any $s\in\cal$ and $L_{1},L_{2}\in\R_{0}^{+}$ with $L_{2}\geq L_{1}$, we can write:
$$
\left(V_{s}^{(L_{2})}\right)^{*}-\left(V_{s}^{(L_{1})}\right)^{*}=\int_{0}^{s}\p_{r}\left(U_{L_{2}}(0,r)U_{L_{1}}(r,s)\right)\d r,
$$
where for any $s,r\in\cal$, $U_{L}(s,r)$ is the unitary element defined by  \eqref{eq:unit_ele_sr}--\eqref{eq:unit_ele_sr2}. Straightforward calculations yield to
$$
\left(V_{s}^{(L_{2})}\right)^{*}-\left(V_{s}^{(L_{1})}\right)^{*}=\ii\int_{0}^{s}U_{L_{2}}(0,r)\left(\mathfrak{D}_{\g,r}^{(L_{2})}-\mathfrak{D}_{\g,r}^{(L_{1})}\right)U_{L_{1}}(r,s)\d r.
$$
Proceeding as in \eqref{eq:kappas} we arrive at the desired result. We omit the details.
\end{proof}
\begin{lemma}[Uniformity of the determinant]\label{lemma:VI_cauchy}
Make the same assumptions as in Lemma \ref{theor:unit_oper} and suppose that for any $s\in\cal$ and $L\in\R_{0}^{+}\cup\{\infty\}$, $\1_{\H_{L}}-V_{s}^{(L)}$ and $\mathfrak{D}_{\g,s}^{(L)}$ ($\mathfrak{D}_{\g,s}^{(L)}$ given by \eqref{eq:spectral proof2}) are trace class on $\H_{L}$. Then, for any $s\in\cal$, the family of determinants $\sigma^{(L)}(s)\doteq\det\left(V_{s}^{(L)}\right)\in\{-1,1\}$ is uniform for $L\in\R_{0}^{+}\cup\{\infty\}$. Moreover, the sequence $\sigma^{(L)}\in\left\{-1,1\right\}$ converges uniformly on $\mathcal{C}$.
\end{lemma}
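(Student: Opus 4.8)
The plan is to reduce the statement, at each fixed volume, to the single--volume result already contained in Corollary~\ref{corol:sign}, and then to control the passage $L\to\infty$. Fix $L\in\R_{0}^{+}\cup\{\infty\}$ and regard $(\H_{L},\a_{L})$ as a self--dual Hilbert space carrying the family $\{H_{s,L}\}_{s\in\cal}$ of self--dual Hamiltonians. First I would verify that the hypotheses of Corollary~\ref{corol:sign} hold at this fixed $L$: by assumption $\1_{\H_{L}}-V_{s}^{(L)}$ and $\mathfrak{D}_{\g,s}^{(L)}$ are trace class (this being automatic when $L<\infty$, as $\H_{L}$ is finite--dimensional), so the Fredholm determinant $\sigma^{(L)}(s)=\det(V_{s}^{(L)})$ is well defined and, $V_{s}^{(L)}$ being the Bogoliubov transformation of Corollary~\ref{corol:Vs}, lies in $\{-1,1\}$ by \eqref{orientation}. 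Applying Corollary~\ref{corol:sign} on $(\H_{L},\a_{L})$ — the essential input beyond the trace--class hypothesis being $\tr_{\H_{L}}(\p_{s}H_{s,L})=0$, valid because $\p_{s}H_{s,L}$ is itself a self--dual operator (Definition~\ref{def one particle hamiltinian}) — yields $\sigma^{(L)}(s)=\sigma^{(L)}(0)=\det(V_{0}^{(L)})$ for \emph{every} $s\in\cal$. Hence, for each fixed $L$, the map $s\mapsto\sigma^{(L)}(s)$ is constant on $\cal$.

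It then remains to identify this constant and to check that it is independent of $L$. For this I would use $V_{0}^{(L)}=\pm\1_{\H_{L}}$ from \eqref{eq:spectral proof2}: when $L<\infty$ the space $\H_{L}$ is even--dimensional, so $\det(\pm\1_{\H_{L}})=(\pm1)^{\dim\H_{L}}=1$; when $L=\infty$ the trace--class requirement on $\1_{\H_{\infty}}-V_{0}^{(\infty)}$ excludes $V_{0}^{(\infty)}=-\1_{\H_{\infty}}$ (since $2\1_{\H_{\infty}}$ is not trace class on the infinite--dimensional $\H_{\infty}$), forcing $V_{0}^{(\infty)}=+\1_{\H_{\infty}}$ and again $\det(V_{0}^{(\infty)})=1$. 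Consequently $\sigma^{(L)}(s)=1$ for all $s\in\cal$ and all $L\in\R_{0}^{+}\cup\{\infty\}$, which is exactly the asserted uniformity in $L$. Since each $\sigma^{(L)}$ is the constant function $1$ on $\cal$, the sequence $\{\sigma^{(L)}\}_{L}$ converges to $\sigma^{(\infty)}\equiv1$ uniformly on $\cal$, giving the last assertion.

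For completeness — and as the genuinely delicate point were one to avoid the dimensional shortcut above — I would also establish the stability of $\sigma^{(L)}$ as $L\to\infty$ directly. From multiplicativity, $\det(V_{s}^{(L_{2})})=\det\!\big(V_{s}^{(L_{2})}(V_{s}^{(L_{1})})^{*}\big)\,\det(V_{s}^{(L_{1})})$, and by Corollary~\ref{corol:V_cauchy} the operators $V_{s}^{(L_{2})}(V_{s}^{(L_{1})})^{*}$ converge to $\1_{\H_{\infty}}$; continuity of the Fredholm determinant would then give $\det\!\big(V_{s}^{(L_{2})}(V_{s}^{(L_{1})})^{*}\big)\to1$, and since all these numbers lie in the discrete set $\{-1,1\}$ they must \emph{eventually coincide}, forcing $\sigma^{(L)}(s)$ to stabilise. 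The main obstacle along this route is that Corollary~\ref{corol:V_cauchy} supplies only \emph{operator--norm} convergence, while continuity of the determinant demands convergence in \emph{trace norm}; closing this gap — by propagating the uniform trace--class control of $\1_{\H_{L}}-V_{s}^{(L)}$ and $\mathfrak{D}_{\g,s}^{(L)}$ through an estimate of the type \eqref{eq:kappas} — is the technical heart of the argument, and is precisely what the trace--class hypotheses of the lemma are designed to furnish.
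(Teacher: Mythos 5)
Your proof is correct, but it reaches the conclusion by a genuinely different route than the paper. The paper's own proof compares two finite volumes directly: applying Jacobi's formula to the interpolating unitaries $U_{L}(s,r)=V_{s}^{(L)}\bigl(V_{r}^{(L)}\bigr)^{*}$, it expresses $\bigl|\det\bigl(V_{s}^{(L_{2})}\bigr)-\det\bigl(V_{s}^{(L_{1})}\bigr)\bigr|$ as an integral whose integrand contains $\tr_{\H_{L_{2}}}\bigl(\mathfrak{D}_{\g,r}^{(L_{2})}\bigr)-\tr_{\H_{L_{1}}}\bigl(\mathfrak{D}_{\g,r}^{(L_{1})}\bigr)$, and both traces vanish because $\p_{r}H_{r,L}$ is self--dual and hence traceless; uniform convergence on $\cal$ is then extracted via equicontinuity and the Ascoli--Arzel\`{a} theorem, with Corollary \ref{corol:sign} invoked only at the end for constancy in $s$. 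You instead apply Corollary \ref{corol:sign} first, volume by volume (including $L=\infty$, where its hypotheses are precisely the trace--class assumptions of the lemma), and then pin down the constant: for $L<\infty$ it equals $\det(\pm\1_{\H_{L}})=1$ because $\dim\H_{L}$ is even, while for $L=\infty$ the trace--class hypothesis on $\1_{\H_{\infty}}-V_{0}^{(\infty)}$ excludes the initial condition $-\1_{\H_{\infty}}$, forcing the value $1$. This yields the sharper conclusion $\sigma^{(L)}\equiv 1$, from which uniformity in $L$ and uniform convergence on $\cal$ are immediate, and it cleanly sidesteps the delicate point in the paper's limiting step, namely that Corollary \ref{corol:V_cauchy} provides only operator--norm convergence of $V_{s}^{(L)}$, which by itself does not give convergence of Fredholm determinants --- a gap you correctly flag in your supplementary third paragraph; since that paragraph is offered only for completeness, its admitted incompleteness does not affect your main argument. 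What the paper's comparison-of-volumes route buys in exchange is that it never needs the even--dimensionality of $\H_{L}$ nor an identification of the sign of $V_{0}^{(\infty)}$: it shows the difference of the determinants vanishes identically, whatever the common value may be.
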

\begin{proof}
For $L\in\R_{0}^{+}\cup\{\infty\}$ and any $s,r\in\cal$ consider $U_{L}(s,r)$, the unitary element defined by \eqref{eq:unit_ele_sr}--\eqref{eq:unit_ele_sr2}. By the Jacobi's formula of determinants for $U_{L}(s,r)$ we have for $L_{1},L_{2}\in\R_{0}^{+}$ with $L_{2}\geq L_{1}$ that
\begin{multline*}
\left|\det\left(V_{s}^{(L_{2})}\right)-\det\left(V_{s}^{(L_{1})}\right)\right|=\left|\int_{0}^{s}\p_{r}\left(\det\left(U_{L_{2}}(r,0)\right)\det\left(U_{L_{1}}(s,r)\right)\right)\d r\right|\\
=\left|\int_{0}^{s}\det\left(U_{L_{2}}(0,r)\right)\det\left(U_{L_{1}}(s,r)\right)
\left(\tr_{\H_{L_{2}}}\left(\mathfrak{D}_{\g,r}^{(L_{2})}\right)-\tr_{\H_{L_{1}}}\left(\mathfrak{D}_{\g,r}^{(L_{1})}\right)\right)\d r\right|
\end{multline*}
Now, similar to the proof of Corollary \ref{corol:sign}, since $H_{s,\,L}$ is self--adjoint for $s\in\cal$ and $L\in\R_{0}^{+}\cup\{\infty\}$, $\p_{s}H_{s,\,L}$ also is. Thus, by Expression \eqref{eq: operator D} and the cyclic property of the trace, it follows that $\tr_{\H_{L}}\left(\mathfrak{D}_{\g,s}^{(L)}\right)=0$, for $L\in\R_{0}^{+}$. Now, for any $s\in\cal$, note that the sequence of functions $\left\{\det\left(V_{s}^{(L)}\right)\right\}_{L\in\N}$ is equicontinuous and pointwise bounded. By the \emph{Ascoli--Arzel\`{a}} Theorem there exists a uniform convergent subsequence $\left\{\det\left(V_{s}^{(L^{(n)})}\right)\right\}_{n\in\N}$ such that the map $s\mapsto\det\left(V_{s}^{(L^{(n)})}\right)$ converges uniformly for $s\in\cal$. By Corollary \ref{corol:sign}, for any $s\in\cal$ and $n\in\N$, $\sigma_{s}^{(L^{(n)})}\doteq\det\left(V_{s}^{(L^{(n)})}\right)=\det\left(V_{0}^{(L^{(n)})}\right)=\pm1$.
\end{proof}
\subsection{Decay estimates of correlations and gapped quasi--free ground states}
Fix $\epsilon\in(0,1]$ and let $(\H_{\infty},\a_{\infty})$ be the self--dual Hilbert space as defined in subsection \ref{subsec: lattice fermion}. Moreover, consider the family of self--adjoint operators $\{A_{s}\}_{s\in\cal}\in\BL(\H_{\infty})$. Thus, for any $s\in\cal$ we define the constants
\begin{equation*}
\mathbf{S}(A_{s},\mu)\doteq\sup_{\x_{1}\in\mathbb{X}_{\infty}}\sum_{\x_{2}\in\mathbb{X}_{\infty}}\left(\e^{\mu|x_{1}-x_{2}|^{\epsilon}}-1\right)\left\vert\inner{\E_{\x_{1}},A_{s}\E_{\x_{2}}}_{\H_{\infty}}\right\vert\in\R_{0}^{+}\cup \left\{ \infty \right\},
\end{equation*}
for $\mu\in \R_{0}^{+}$ and
\begin{equation*}
\Delta (A_{s},z)\doteq \inf\left\{ \left\vert z-\lambda \right\vert\colon\lambda
\in\spec(A_{s})\right\},\qquad z\in \mathbb{C},
\end{equation*}
as the distance from the point $z$ to the spectrum of $A_{s}$. $\mathbb{X}_{\infty}$ is defined by \eqref{eq: X_L}. Here, $\mu$ is not necessarily the same for two different operators $A_{s_{1}},A_{s_{2}}\in\{A_{s}\}_{s\in\cal}$, but in the sequel w.l.o.g. we will assume this. Since the function $x\mapsto(\e^{xr}-1)/x$ is increasing on $\R^{+}$ for any fixed $r\geq0$, it follows that
\begin{equation}
\mathbf{S}(A_{s},\mu_{1})\leq\frac{\mu _{1}}{\mu _{2}}\mathbf{S}(A_{s},\mu _{2}),\qquad \mu_{2}\geq \mu _{1}\geq 0.\label{inequality combes easy}
\end{equation}
We have the following Combes--Thomas estimates:
\begin{proposition}[Combes--Thomas]\label{Combes-Thomas}
Let $\epsilon \in (0,1],\cal\doteq[0,1]$, the family of self--adjoint operators $\{A_{s}\}_{s\in\cal}\in\mathcal{B}(\H_{\infty})$, $\mu\in \R_{0}^{+}$ and $z\in \mathbb{C}$. If $\Delta (A_{s},z)>\mathbf{S}(A_{s},\mu )$ then for any $s\in\cal$ and $\x=(x,\mathfrak{s},v),\,\y=(y,\mathfrak{t},w)\in\mathbb{X}_{\infty}$
\begin{align*}
\left\vert\inner{\E_{\x},(z-A_{s})^{-1}\E_{\y}}_{\H_{\infty}}\right\vert\leq\sup_{s\in\cal}\left\{\frac{\e^{-\mu|x-y|^{\epsilon }}}{\Delta (A_{s},z)-\mathbf{S}(A_{s},\mu)}\right\}.
\end{align*}
\end{proposition}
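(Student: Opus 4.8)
The plan is to prove the estimate by the classical Combes--Thomas \emph{gauge transformation}: one conjugates the resolvent by an exponential weight centred at $\y$ and controls the resulting bounded perturbation of $A_{s}$ by a Schur test whose row and column sums are exactly $\mathbf{S}(A_{s},\mu)$. All the work can be carried out for a \emph{fixed} $s\in\cal$, the supremum on the right-hand side being then a trivial upper bound. First I would fix $\y=(y,\mathfrak{t},w)\in\mathbb{X}_{\infty}$ and, for $N\in\N$, introduce the truncated weight $f_{N}(\x)\doteq\min\{\mu|x-y|^{\epsilon},N\}$ for $\x=(x,\mathfrak{s},v)\in\mathbb{X}_{\infty}$, together with the associated multiplication operator $M_{N}\in\BL(\H_{\infty})$, $(M_{N}\varphi)(\x)\doteq\e^{f_{N}(\x)}\varphi(\x)$. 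The truncation makes $M_{N}$ bounded with bounded inverse, so $M_{N}A_{s}M_{N}^{-1}$ is a well-defined bounded operator and all domain issues are avoided; the unbounded weight is recovered at the very end by letting $N\to\infty$.

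The key estimate is on the perturbation $B_{s}^{(N)}\doteq M_{N}A_{s}M_{N}^{-1}-A_{s}$, whose matrix elements are $\inner{\E_{\x_{1}},B_{s}^{(N)}\E_{\x_{2}}}=(\e^{f_{N}(\x_{1})-f_{N}(\x_{2})}-1)\inner{\E_{\x_{1}},A_{s}\E_{\x_{2}}}$. Here I would invoke two elementary inequalities: that $t\mapsto t^{\epsilon}$ is subadditive on $\R_{0}^{+}$ for $\epsilon\in(0,1]$, whence $\big||a|^{\epsilon}-|b|^{\epsilon}\big|\leq|a-b|^{\epsilon}$ and therefore $|f_{N}(\x_{1})-f_{N}(\x_{2})|\leq\mu|x_{1}-x_{2}|^{\epsilon}$ (the truncation $t\mapsto\min\{t,N\}$ being $1$--Lipschitz); and $|\e^{t}-1|\leq\e^{|t|}-1$. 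Together these give the pointwise bound $|\inner{\E_{\x_{1}},B_{s}^{(N)}\E_{\x_{2}}}|\leq(\e^{\mu|x_{1}-x_{2}|^{\epsilon}}-1)\,|\inner{\E_{\x_{1}},A_{s}\E_{\x_{2}}}|$. Since $A_{s}$ is self-adjoint, this kernel is symmetric in $\x_{1}\leftrightarrow\x_{2}$ up to complex conjugation, so its row and column sums coincide and both equal $\mathbf{S}(A_{s},\mu)$; the Schur test then yields $\Vert B_{s}^{(N)}\Vert_{\BL(\H_{\infty})}\leq\mathbf{S}(A_{s},\mu)$, uniformly in $N$.

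The hypothesis $\Delta(A_{s},z)>\mathbf{S}(A_{s},\mu)\geq0$ guarantees $z\notin\spec(A_{s})$, with $\Vert(z-A_{s})^{-1}\Vert=\Delta(A_{s},z)^{-1}$ by self-adjointness. I would then factor $z-M_{N}A_{s}M_{N}^{-1}=(z-A_{s})(\1_{\H_{\infty}}-(z-A_{s})^{-1}B_{s}^{(N)})$; the operator norm of $(z-A_{s})^{-1}B_{s}^{(N)}$ is at most $\mathbf{S}(A_{s},\mu)/\Delta(A_{s},z)<1$, so a Neumann series shows the right factor is invertible and
$$
\Vert(z-M_{N}A_{s}M_{N}^{-1})^{-1}\Vert_{\BL(\H_{\infty})}\leq\frac{1}{\Delta(A_{s},z)-\mathbf{S}(A_{s},\mu)}.
$$
Since $M_{N}$ is a positive self-adjoint multiplication operator with $f_{N}(\y)=0$, the conjugation identity $\inner{\E_{\x},(z-M_{N}A_{s}M_{N}^{-1})^{-1}\E_{\y}}=\e^{f_{N}(\x)}\inner{\E_{\x},(z-A_{s})^{-1}\E_{\y}}$ lets me rearrange to $|\inner{\E_{\x},(z-A_{s})^{-1}\E_{\y}}|\leq\e^{-f_{N}(\x)}\Vert(z-M_{N}A_{s}M_{N}^{-1})^{-1}\Vert$. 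Letting $N\to\infty$, so that $f_{N}(\x)\to\mu|x-y|^{\epsilon}$ for the fixed $\x$, and finally taking the supremum over $s\in\cal$, yields the claim.

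I expect the main obstacle to be the bookkeeping around the unbounded weight in the infinite-volume space $\H_{\infty}$: the conjugation $M A_{s}M^{-1}$ is only literally meaningful once $M$ is bounded with bounded inverse, so the truncation $f_{N}$ and the limit $N\to\infty$ must be handled with care, and one must check that the Schur bound on $B_{s}^{(N)}$ is genuinely uniform in $N$ (which it is, thanks to the $1$--Lipschitz truncation). The elementary inequality $\big||a|^{\epsilon}-|b|^{\epsilon}\big|\leq|a-b|^{\epsilon}$ for $\epsilon\in(0,1]$, and the correct symmetrisation making both Schur sums equal to $\mathbf{S}(A_{s},\mu)$, are the remaining points requiring attention.
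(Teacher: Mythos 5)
Your proof is correct. The paper in fact gives no argument of its own for this proposition---it simply defers to \cite[Theorem 10.5]{AW16}---and your gauge-transformation argument (conjugation by the truncated exponential weight, the Schur-test bound $\Vert B_{s}^{(N)}\Vert\leq\mathbf{S}(A_{s},\mu)$ uniform in $N$, and the Neumann-series inversion of $z-M_{N}A_{s}M_{N}^{-1}$) is precisely the standard proof of that cited result, so your write-up is a correct, self-contained version of exactly the approach the paper invokes by reference.
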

For a proof see \cite[Theorem 10.5]{AW16}. Some immediate consequences are summarized as follows:
\begin{corollary}\label{Lemma AG98}
Let $\epsilon \in(0,1],\cal\doteq[0,1]$, the family of self--adjoint operators $\{A_{s}\}_{s\in\cal}\in\mathcal{B}(\H_{\infty})$, $\mu\in \R_{0}^{+}$ and all $\x=(x,\mathfrak{s},v),\y=(y,\mathfrak{t},w)\in \mathbb{X}_{\infty }$.
Then,
\begin{enumerate}
\item[(a)] Let $\eta\in\R^{+}$ such that $\sup\limits_{s\in\cal}\left\{\mathbf{S}(A_{s},\mu )\right\}\leq \eta/2, u\in\R$ and $s\in\cal$,
\begin{align*}
&\left\vert\left\langle\E_{\x},((A_{s}-u)^{2}+\eta^{2})^{-1}\E_{\y}\right\rangle_{\H_{\infty}} \right\vert \\
&\leq D_{\ref{Lemma AG98},(a)}\e^{-\mu |x-y|^{\epsilon }}\sup\limits_{s\in\cal}\left\{\left\langle \E_{\x},((A_{s}-u)^{2}+\eta^{2})^{-1}\E_{\x}\right\rangle_{\H_{\infty}}^{1/2}\left\langle \E_{\y},((A_{s}-u)^{2}+\eta ^{2})^{-1}\E_{\y}\right\rangle_{\H_{\infty}}^{1/2}\right\}.
\end{align*}
Moreover, for any function $G(z)\colon\CP\to\CP$ analytic on $|\Imm(z)|\leq\eta$ and uniformly bounded by $\Vert G\Vert_{\infty}$ we have
$$
\inner{\E_{\x},G(A_{s})\E_{\y}}_{\H_{\infty}}\leq D_{\ref{Lemma AG98},(b)}\Vert G\Vert_{\infty}\e^{-\mu\min\left\{1,\inf\limits_{s\in\cal}\left\{\frac{\eta}{4S(A_{s},\mu)}\right\}\right\}|x-y|^{\epsilon}}.
$$
\item[(b)] (Gapped Case) For $z\in \mathbb{C}$ such that $\inf\limits_{s\in\cal}\Delta (A_{s},z)\geq \g/2>0$, with $\g$ as in Definition \ref{def: phase of the matter}:
\begin{equation}\label{combes 0}
\left\vert \left\langle \E_{\x},(z-A_{s})^{-1}\E_{\y}\right\rangle_{\H_{\infty}}\right\vert \leq 4\g^{-1}\exp\left(-\mu \min \left\{ 1,\inf\limits_{s\in\cal}\left\{\frac{\g}{4S(A_{s},\mu)}\right\}\right\}
|x-y|^{\epsilon }\right).
\end{equation}
Moreover, for $\eta\in\left(0,\g/2\right]$, and any function $G(z)\colon\CP\to\CP$ analytic on $z\in \R_{0}^{+}+\eta +\ii\eta \left[-1,1\right]$ and uniformly bounded by $\Vert G\Vert_{\infty}$ we have
$$
\inner{\E_{\x},E_{+}G(A_{s})E_{+}\E_{\y}}_{\H_{\infty}}\leq D_{\ref{Lemma AG98},(c)}\Vert G\Vert_{\infty}\e^{-\mu\min\left\{1,\inf\limits_{s\in\cal}\left\{\frac{\g}{4S(A_{s},\mu)}\right\}\right\}|x-y|^{\epsilon}}.
$$
\end{enumerate}
In all inequalities, the numbers $D_{\ref{Lemma AG98},(a)},D_{\ref{Lemma AG98},(b)},D_{\ref{Lemma AG98},(c)}\in\R^{+}$ are suitable constants.
\end{corollary}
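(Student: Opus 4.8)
The plan is to obtain all four inequalities from the single Combes--Thomas bound of Proposition~\ref{Combes-Thomas}, using only two mechanisms: positivity combined with a Cauchy--Schwarz inequality, and the rescaling relation~\eqref{inequality combes easy}, which lets one trade a smaller decay rate for a smaller value of $\mathbf{S}(A_{s},\cdot)$. The suprema and infima over the compact set $\cal$ are harmless, since every bound produced is uniform in $s$.

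First I would dispatch the gapped resolvent estimate~\eqref{combes 0}. Set $\mu_{1}\doteq\mu\min\{1,\inf_{s\in\cal}\g/(4\mathbf{S}(A_{s},\mu))\}$. By~\eqref{inequality combes easy}, $\mathbf{S}(A_{s},\mu_{1})\le\frac{\mu_{1}}{\mu}\mathbf{S}(A_{s},\mu)\le\g/4$ for every $s$, so the hypothesis $\inf_{s}\Delta(A_{s},z)\ge\g/2$ gives $\Delta(A_{s},z)-\mathbf{S}(A_{s},\mu_{1})\ge\g/4>0$; Proposition~\ref{Combes-Thomas} applied with decay rate $\mu_{1}$ then yields~\eqref{combes 0} directly, with constant $4\g^{-1}$.

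Next, for the first inequality of part~(a) I would exploit that $T_{s}\doteq((A_{s}-u)^{2}+\eta^{2})^{-1}\ge0$. Positivity gives the diagonal bound $|\inner{\E_{\x},T_{s}\E_{\y}}_{\H_{\infty}}|\le\inner{\E_{\x},T_{s}\E_{\x}}_{\H_{\infty}}^{1/2}\inner{\E_{\y},T_{s}\E_{\y}}_{\H_{\infty}}^{1/2}$ via Cauchy--Schwarz in the semi--inner product $\inner{\cdot,T_{s}\cdot}_{\H_{\infty}}$. To produce the factor $\e^{-\mu|x-y|^{\epsilon}}$ I would use the factorization $T_{s}=\frac{1}{2\ii\eta}\big((A_{s}-u-\ii\eta)^{-1}-(A_{s}-u+\ii\eta)^{-1}\big)$ together with $\Delta(A_{s},u\pm\ii\eta)\ge\eta$ (valid since $\spec(A_{s})\subset\R$); the hypothesis $\sup_{s}\mathbf{S}(A_{s},\mu)\le\eta/2$ then gives $\Delta(A_{s},u\pm\ii\eta)-\mathbf{S}(A_{s},\mu)\ge\eta/2$, so Proposition~\ref{Combes-Thomas} controls the two resolvents; packaging this with the positivity bound is precisely the relative form of Combes--Thomas of \cite[Theorem 10.5]{AW16}, which I would invoke.

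Finally, the two ``moreover'' statements with analytic $G$ I would treat by the exponential--weight conjugation that underlies Proposition~\ref{Combes-Thomas}. Fix $\y$ and let $M$ be multiplication by the weight $\mathbf{p}\mapsto\mu\min\{1,c\}\,|p-y|^{\epsilon}$, where $p\in\Z^{d}$ is the lattice component of $\mathbf{p}\in\mathbb{X}_{\infty}$, $y$ that of $\y$, and $c\doteq\inf_{s}\eta/(4\mathbf{S}(A_{s},\mu))$ in case~(a), $c\doteq\inf_{s}\g/(4\mathbf{S}(A_{s},\mu))$ in case~(b); put $\widetilde{A}_{s}\doteq\e^{M}A_{s}\e^{-M}$. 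Since $\epsilon\in(0,1]$ makes $|p_{1}-y|^{\epsilon}-|p_{2}-y|^{\epsilon}\le|p_{1}-p_{2}|^{\epsilon}$, a Schur estimate for $\widetilde{A}_{s}-A_{s}$ together with~\eqref{inequality combes easy} and the choice of $c$ forces $\Vert\widetilde{A}_{s}-A_{s}\Vert_{\BL(\H_{\infty})}\le\eta/4$; as $A_{s}$ is self--adjoint, $\spec(\widetilde{A}_{s})$ lies within $\eta/4$ of $\R$, hence inside the strip $|\Imm z|\le\eta$ in case~(a), while in case~(b) the perturbation is small against the gap $\g\ge2\eta$, so the positive part of $\spec(\widetilde{A}_{s})$ (the only part seen by $E_{+}$, which commutes with $A_{s}$) stays inside $\R_{0}^{+}+\eta+\ii\eta[-1,1]$. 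On these regions $G$ is analytic and bounded, so a Riesz functional--calculus/Cauchy estimate gives $\Vert G(\widetilde{A}_{s})\Vert_{\BL(\H_{\infty})}\le D\Vert G\Vert_{\infty}$ uniformly in $s$. Since $G(\widetilde{A}_{s})=\e^{M}G(A_{s})\e^{-M}$ and the weight vanishes at $\y$, one gets $\inner{\E_{\x},G(A_{s})\E_{\y}}_{\H_{\infty}}=\e^{-\mu\min\{1,c\}|x-y|^{\epsilon}}\inner{\E_{\x},G(\widetilde{A}_{s})\E_{\y}}_{\H_{\infty}}$, and the stated bounds follow. I expect the main obstacle to be exactly this last step: justifying that the analytic functional calculus $G(\widetilde{A}_{s})$ is well defined and norm--bounded by $\Vert G\Vert_{\infty}$ for the \emph{non}--self--adjoint operator $\widetilde{A}_{s}$, which requires controlling, uniformly in $s\in\cal$, that its spectrum together with a surrounding contour remains within the region of analyticity of $G$.
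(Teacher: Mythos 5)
Your handling of the two resolvent bounds matches the paper: \eqref{combes 0} is obtained exactly as in the paper's proof (Proposition \ref{Combes-Thomas} plus the rescaling \eqref{inequality combes easy}, with the same constant $4\g^{-1}$), and for the first inequality of (a) you, like the paper, ultimately defer to the literature. Two corrections there, however. The bound with the diagonal factors is \emph{not} \cite[Theorem 10.5]{AW16} --- that result is precisely Proposition \ref{Combes-Thomas}, which has no diagonal factors; the correct reference is \cite[Theorem 3 and Lemma 3]{AG98}, which is what the paper cites. Moreover, your sketch that it follows by ``packaging'' Cauchy--Schwarz with Combes--Thomas is not a proof: multiplying or interpolating the two estimates degrades either the exponential factor or the diagonal factors, and the genuine Aizenman--Graf argument conjugates \emph{inside} the operator $((A_{s}-u)^{2}+\eta^{2})^{-1}$, which is subtler.

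For the two ``moreover'' statements you take a genuinely different route. The paper stays entirely within self-adjoint spectral calculus: it represents $\chi_{(\eta,\infty)}G$ by a Cauchy/Poisson integral formula, so that $E_{+}G(A_{s})E_{+}=(\chi_{(\eta,\infty)}G)(A_{s})$ becomes an integral over $u$ of matrix elements of $((A_{s}-u)^{2}+\eta^{2})^{-1}$, $((A_{s}-u)^{2}+4\eta^{2})^{-1}$ and of resolvents at distance $\geq\g/2$ from $\spec(A_{s})$, each controlled by part (a), by \eqref{combes 0} and by Cauchy--Schwarz (following \cite[Lemma 5.12]{LD1}); a bonus of that route is the identity $\int_{\R}\langle\E_{\x},((A_{s}-u)^{2}+\eta^{2})^{-1}\E_{\x}\rangle_{\H_{\infty}}\,\d u=\pi/\eta$, so the constants never involve $\sup_{s}\Vert A_{s}\Vert_{\BL(\H_{\infty})}$. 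Your conjugation route does work for the strip case (a): the Schur estimate gives $\Vert\widetilde{A}_{s}-A_{s}\Vert_{\BL(\H_{\infty})}\leq\mathbf{S}(A_{s},\mu\min\{1,c\})\leq\eta/4$, the spectrum stays in $|\Imm(z)|\leq\eta/4$, a rectangular Riesz contour at height $\pm\eta/2$ stays inside the strip of analyticity, and the weighted resolvent identity can be justified at the bounded-operator level (from $\e^{-M}\widetilde{A}_{s}=A_{s}\e^{-M}$), at the price of a constant depending on $\sup_{s}\Vert A_{s}\Vert_{\BL(\H_{\infty})}$ and $\eta$, which is admissible here.

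In the gapped case (b), however, your key assertion fails, and this is a genuine gap. With $c=\inf_{s}\g/(4\mathbf{S}(A_{s},\mu))$ you only obtain $\Vert\widetilde{A}_{s}-A_{s}\Vert_{\BL(\H_{\infty})}\leq\g/4$, so the ``positive'' part of $\spec(\widetilde{A}_{s})$ is only known to lie within distance $\g/4$ of $[\g,\infty)$: its imaginary parts can be as large as $\g/4$, whereas $G$ is only assumed analytic and bounded on $\R_{0}^{+}+\eta+\ii\eta\left[-1,1\right]$, i.e.\ up to $|\Imm(z)|\leq\eta$. Since $\eta\in(0,\g/2]$ may be far smaller than $\g/4$, neither the spectrum nor any enclosing Riesz contour need remain in the region of analyticity, so $\Vert G(\widetilde{A}_{s})\Vert_{\BL(\H_{\infty})}\leq D\Vert G\Vert_{\infty}$ cannot be asserted; the hypothesis $\g\geq2\eta$ works \emph{against} you, not for you, because a large gap licenses a large conjugation while the admissible non-self-adjoint perturbation must be measured against $\eta$. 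The natural patch --- shrinking the weight so that $\Vert\widetilde{A}_{s}-A_{s}\Vert_{\BL(\H_{\infty})}\leq\eta/4$ --- restores the functional calculus but only yields the decay rate $\mu\min\{1,\inf_{s}\eta/(4\mathbf{S}(A_{s},\mu))\}$, weaker than the stated $\g$-based rate whenever $\eta<\g/2$. (To be fair, the paper's own route also ties the rate of the Poisson-kernel terms to $\eta$ through part (a), the stated $\g$-rate corresponding to the extremal choice $\eta=\g/2$; but its reduction to \eqref{combes 0} and part (a) is well defined for every $\eta\in(0,\g/2]$, which your construction, as written, is not.)
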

\begin{proof}
(a) is proven as in \cite[Theorem 3 and Lemma 3]{AG98}. (b) The first part is a consequence of Proposition \ref{Combes-Thomas} together with Inequality \eqref{inequality combes easy}. On the other hand, we use Cauchy's integral formula to write, for all real $E\in \R\backslash \{\eta \}$,
\begin{equation*}
\chi_{(\eta,\infty)} G\left( E\right) =\frac{1}{2\pi\ii}\int_{\eta
}^{\infty}\left(\frac{G\left(u-\ii\eta\right) }{u-E-\ii\eta }-\frac{G\left(
u+\ii\eta\right)}{u-E+\ii\eta}\right) \d u-\frac{1}{2\pi }\int_{-\eta
}^{\eta}\frac{G\left(\eta+\ii u\right) }{\eta -E+\ii u}\d u\ ,
\end{equation*}
which yields
\begin{eqnarray*}
\chi_{(\eta,\infty)}G\left(E\right)&=&\frac{\eta}{\pi}\int_{\eta }^{\infty}\frac{G\left(u-\ii\eta\right)+G\left(u+\ii\eta \right)}{\left( u-E\right) ^{2}+\eta^{2}}\d u-\frac{2\eta }{\pi}\int_{\eta}^{\infty }\frac{G\left( u\right) }{\left(u-E\right) ^{2}+4\eta ^{2}}\d u\\
&&+\frac{1}{2\pi}\int_{0}^{\eta }\frac{G\left( \eta -\ii u\right) }{\eta
-\ii u-E+2\ii\eta}\d u+\frac{1}{2\pi}\int_{0}^{\eta}\frac{G\left( \eta
+\ii u\right)}{\eta+\ii u-E-2\ii\eta}\d u\\
&&-\frac{1}{2\pi }\int_{-\eta}^{\eta }\frac{G\left(\eta+\ii u\right)}{\eta
-E+\ii u}\d u.
\end{eqnarray*}
By spectral calculus, together the last equality, part (a) of this Corollary, Inequality \eqref{combes 0} and the Cauchy--Schwarz inequality, the result follows. For further details see \cite[Lemma 5.12]{LD1}.
\end{proof}
At this point it is useful to introduce the normalized \emph{trace per unit volume} as
$$
\Tr(\cdot)\doteq\lim\limits_{L\to\infty}\frac{1}{\dim(\H_{L})}\tr_{\H_{L}}(\cdot).
$$
We are able to state the following:
\begin{lemma}\label{lemma:unit_oper2}
Take $\cal\equiv[0,1]$ and consider the family of operators satisfying assumptions of Corollary \ref{Lemma AG98} for $\{\p_{s}H_{s}^{(L)}\}_{s\in\cal}\in\BL(\H_{L})$, $L\in\R_{0}^{+}\cup\{\infty\}$. Consider the pointwise sequence $V_{s}^{(L)}\colon\cal\to\BL(\H)$, $L\in\R_{0}^{+}\cup\{\infty\}$, of unitary operators satisfying \eqref{eq:spectral proof2}. Then, the sequence $\left\{\1_{\H_{L}}-V_{s}^{(L)}\right\}_{L\in\R_{0}^{+}\cup\{\infty\}}$ is trace--class per unit volume. Thus, for $L\in\R_{0}^{+}\cup\{\infty\}$, the family of one--parameter (Bogoliubov) group $\left\{\Upsilon_{s}^{(L)}\right\}_{s\in\cal\in\R}$ of $^{*}$--automorphisms on $\A_{L}$ (see \eqref{eq: local}), given by \eqref{eq:autom}, is inner\footnote{For $U\in\BL(\H)$, a Bogoliubov transformation, the Bogoliubov $^{*}$--automorphism $\chi_{U}$ on $\sCAR$ is inner if and only if $\1_{\H_{\infty}}-U$ is trace class and $\det(U)=\pm1$, see \cite[Theorem 4.1]{A87}.}.
\end{lemma}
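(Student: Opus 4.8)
The plan is to reduce the trace--class per unit volume bound on $\1_{\H_{L}}-V_{s}^{(L)}$ to a uniform estimate on the generator $\mathfrak{D}_{\g,r}^{(L)}$, and then to extract the latter from the gapped Combes--Thomas bounds of Corollary \ref{Lemma AG98}. First, integrating the differential equation \eqref{eq:spectral proof2} with initial datum $V_{0}^{(L)}=\pm\1_{\H_{L}}$, I would write (in the case $V_{0}^{(L)}=\1_{\H_{L}}$)
\begin{equation*}
\1_{\H_{L}}-V_{s}^{(L)}=\ii\int_{0}^{s}\mathfrak{D}_{\g,r}^{(L)}V_{r}^{(L)}\,\d r,
\end{equation*}
the case $V_{0}^{(L)}=-\1_{\H_{L}}$ differing only by the additive term $2\cdot\1_{\H_{L}}$, which is manifestly trace--class per unit volume. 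Since each $V_{r}^{(L)}$ is unitary and the trace norm is unitarily invariant, this yields $\tr_{\H_{L}}|\1_{\H_{L}}-V_{s}^{(L)}|\leq\int_{0}^{s}\tr_{\H_{L}}|\mathfrak{D}_{\g,r}^{(L)}|\,\d r$, so it suffices to bound $\frac{1}{|\Lambda_{L}|}\tr_{\H_{L}}|\mathfrak{D}_{\g,r}^{(L)}|$ uniformly in $L\in\R_{0}^{+}$ and $r\in\cal$.

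The core of the argument is to show that the kernel $\inner{\E_{\x},\mathfrak{D}_{\g,r}^{(L)}\E_{\y}}_{\H_{\infty}}$ decays like a stretched exponential $\e^{-\mu'|x-y|^{\epsilon}}$, uniformly in $L$ and $r$. Here I would exploit the gap hypothesis: because $H_{r,L}$ is gapped and $\mathfrak{D}_{\g,r}^{(L)}$ is, by the construction \eqref{eq: operator D}, off--diagonal with respect to the splitting $\1_{\H_{\infty}}=E_{+,r,L}+E_{-,r,L}$, the oscillatory time integral against $\mathfrak{W}_{\g}$ can be recast, via the spectral theorem and the resolution of the identity \eqref{eq:resol_iden}, as a pairing of analytic functions of $H_{r,L}$ around the short--range operator $\p_{r}H_{r,L}$. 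The decay of $\p_{r}H_{r,L}$ is controlled by the finiteness of $\mathbf{S}(\p_{r}H_{r},\mu)$ assumed in Corollary \ref{Lemma AG98}, and the decay of the spectral factors by the gapped estimate \eqref{combes 0} together with its $G$--functional form in Corollary \ref{Lemma AG98}(b). A discrete convolution of the two resulting stretched--exponential kernels then produces the claimed uniform decay, the uniformity in $r\in\cal$ being furnished by the $\inf_{s\in\cal}$ and $\sup_{s\in\cal}$ already built into Corollary \ref{Lemma AG98}.

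Given the kernel decay, I would pass to the trace norm per unit volume by a hopping decomposition: writing $\mathfrak{D}_{\g,r}^{(L)}=\sum_{\mathbf{z}\in\Z^{d}}D_{\mathbf{z}}^{(L)}$, where $D_{\mathbf{z}}^{(L)}$ retains only the matrix elements with spatial displacement $y-x=\mathbf{z}$, each $D_{\mathbf{z}}^{(L)}$ is a lattice sum of finite--rank blocks on orthonormal vectors, so that $\tr_{\H_{L}}|D_{\mathbf{z}}^{(L)}|\leq C\,|\Lambda_{L}|\,\e^{-\mu'|\mathbf{z}|^{\epsilon}}$, with $C$ absorbing the (fixed) spin dimension. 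Summing over $\mathbf{z}$ and using that $\sum_{\mathbf{z}\in\Z^{d}}\e^{-\mu'|\mathbf{z}|^{\epsilon}}<\infty$ for $\epsilon\in(0,1]$, I obtain $\frac{1}{|\Lambda_{L}|}\tr_{\H_{L}}|\mathfrak{D}_{\g,r}^{(L)}|\leq C'$ uniformly, whence $\1_{\H_{L}}-V_{s}^{(L)}$ is trace--class per unit volume; the bound persists at $L=\infty$ through the norm convergence of Corollary \ref{corol:V_cauchy}. Finally, for each finite $L$ the operator $\1_{\H_{L}}-V_{s}^{(L)}$ is trivially trace--class and $\det(V_{s}^{(L)})=\det(V_{0}^{(L)})\in\{-1,1\}$ by Corollary \ref{corol:sign} and Lemma \ref{lemma:VI_cauchy}; the criterion recorded in the footnote, i.e.\ \cite[Theorem 4.1]{A87}, then shows that the Bogoliubov $^{*}$--automorphism $\Upsilon_{s}^{(L)}$ of \eqref{eq:autom} is inner, while the uniform per--unit--volume control guarantees that this structure is stable in the thermodynamic limit.

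The step I expect to be the main obstacle is the uniform kernel decay of $\mathfrak{D}_{\g,r}^{(L)}$. The difficulty is that the defining integral \eqref{eq: operator D} involves the full unitary evolution $\e^{\ii tH_{r,L}}$, which does \emph{not} decay in $t$, so one cannot estimate naively under the integral sign; the decay must be manufactured entirely from the gap, by trading the oscillatory time integral for analytic functions of $H_{r,L}$ to which the gapped Combes--Thomas bound of Proposition \ref{Combes-Thomas} applies, all the while keeping every constant uniform in both $L\in\R_{0}^{+}$ and $r\in\cal$.
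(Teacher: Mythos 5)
Your proposal is correct and takes essentially the same route as the paper: both integrate \eqref{eq:spectral proof2} to express $\1_{\H_{L}}-V_{s}^{(L)}$ as an integral of $\mathfrak{D}_{\g,r}^{(L)}V_{r}^{(L)}$ over $[0,s]$, extract stretched--exponential kernel decay of $\mathfrak{D}_{\g,r}^{(L)}$ (uniformly in $L$ and $r$) from Corollary \ref{Lemma AG98}, sum over the lattice to obtain the trace--class per unit volume bound, and conclude innerness from $\det(V_{s}^{(L)})=\pm1$ (Corollary \ref{corol:sign}, Lemma \ref{lemma:VI_cauchy}) together with \cite[Theorem 4.1]{A87}. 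The only difference is bookkeeping: the paper bounds $\tr_{\H_{L}}\left|\1_{\H_{L}}-V_{s}^{(L)}\right|$ via the polar decomposition and Cauchy--Schwarz on diagonal matrix elements, whereas you move the trace norm inside the $r$--integral by unitary invariance and then use a displacement (hopping) decomposition of $\mathfrak{D}_{\g,r}^{(L)}$; both land on the same type of bound, $D\,|s|\,|\mathfrak{S}|\int_{\R}|\mathfrak{W}_{\g}(t)|\,\d t\,\sum_{x\in\Z^{d}}\e^{-\mu\min\{\cdots\}|x|^{\epsilon}}$.
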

\begin{proof}
For $s\in\cal$ and $L\in\R_{0}^{+}$, let $W_{s}^{(L)}\in\BL(\H_{L})$ be the partial isometry arising from the polar decomposition of $\1_{\H_{L}}-V_{s}^{(L)}$
$$
\1_{\H_{L}}-V_{s}^{(L)}=W_{s}^{(L)}\left|\1_{\H_{L}}-V_{s}^{(L)}\right|.
$$
From this one can calculate the trace of $\left|\1_{\H_{L}}-V_{s}^{(L)}\right|$ as follows
$$
\tr_{\H_{L}}\left|\1_{\H_{L}}-V_{s}^{(L)}\right|=\sum_{\x\in\mathbb{X}_{L}}\inner{\E_{\x},\left(W_{s}^{(L)}\right)^{*}\left(\1_{\H_{L}}-V_{s}^{(L)}\right)\E_{\x}}_{\H_{L}}.
$$
Note that for the unitary bounded operator $V_{s}^{(L)}$ on $\H_{L}$ we can write $\1_{\H_{L}}-V_{s}^{(L)}=-\ii\int_{0}^{s}\mathfrak{D}_{\g,r}^{(L)}V_{r}^{(L)}\d r$. Then, by combining the explicit form of $\mathfrak{D}_{\g,r}^{(L)}$ given by \eqref{eq: operator D}, Cauchy--Schwarz inequality, Corollary \ref{Lemma AG98}, and other simple arguments we arrive at
\begin{multline*}
\frac{1}{\left|\Lambda_{L}\right|}\left|\tr_{\H_{L}}\left|\1_{\H_{L}}-V_{s}^{(L)}\right|\right|\leq D_{\text{Lem. } \ref{lemma:unit_oper2}}|s|\left|\mathfrak{S}\right|
\int_{\R}\left|\mathfrak{W}_{\g}(t)\right|\d t
\\
\sum_{x\in\Z^{d}}\e^{-\mu\min\left\{1,\inf\limits_{r\in\cal}\left\{\frac{\eta}{4S(\p_{r}H_{r},\mu)}\right\},\inf\limits_{r\in\cal}\left\{\frac{\g}{4S(H_{r},\mu)}\right\}\right\}|x|^{\epsilon}}.
\end{multline*}
Then, $\1_{\H_{L}}-V_{s}^{(L)}$ is trace class per unit volume, and $V_{s}^{(L)}\in\BL(\H_{L})$ is a Bogoliubov transformation such that $\det\left(V_{s}^{(L)}\right)=\pm1$, for $L\in\R_{0}^{+}\cup\{\infty\}$. See also Lemma \ref{lemma:VI_cauchy}. It follows from \cite[Theorem 4.1]{A87} that the $^{*}$--automorphism $\Upsilon_{s}^{(L)}$ on $\A_{L}$ is inner.
\end{proof}
A combination of Corollary \ref{corol:V_cauchy} and Lemma \ref{lemma:unit_oper2} yields to:
\begin{corollary}\label{corol: automorphismY}
Take same assumptions of Lemma \ref{theor:unit_oper}. Then, the one--parameter (Bogoliubov) group $\Upsilon_{s}^{(L)}$ on $\A_{\infty}^{(0)}$ converges uniformly for $s\in\cal$ as $L\to\infty$ to the one--parameter (Bogoliubov) group $\Upsilon_{s}$ on $\A_{\infty}$, thus defining a strongly continuous group on $\A_{\infty}$\footnote{Recall that $\A_{\infty}$ is the completeness of the normed $^{*}$--algebra $\mathfrak{A}_{\infty}^{(0)}$ given by \eqref{local elements}.}. Moreover, $\left(\Upsilon_{s}^{(L)}\right)^{-1}$ exists and strongly converges to $\Upsilon_{s}^{-1}$.
\end{corollary}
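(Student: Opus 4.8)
The plan is to reduce the whole statement to the norm convergence $V_{s}^{(L)}\to V_{s}$ furnished by Corollary \ref{corol:V_cauchy}, using that a Bogoliubov $^{*}$--automorphism is entirely determined by its action on the generators through \eqref{eq:autom}, and that the map $\varphi\mapsto\B(\varphi)$ is contractive, as noted after Definition \ref{def Self--dual CAR Algebras}. First I would record that the convergence in Corollary \ref{corol:V_cauchy} is in fact \emph{uniform} in $s$: the integral representation of $\left(V_{s}^{(L_{2})}\right)^{*}-\left(V_{s}^{(L_{1})}\right)^{*}$ appearing in its proof is bounded in norm by $|s|\sup_{r\in\cal}\Vert\mathfrak{D}_{\g,r}^{(L_{2})}-\mathfrak{D}_{\g,r}^{(L_{1})}\Vert$, and since $|s|\leq1$ on $\cal$, we obtain $\sup_{s\in\cal}\Vert V_{s}^{(L)}-V_{s}\Vert_{\BL(\H_{\infty})}\to0$ as $L\to\infty$.

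Next I would establish the convergence on generators and propagate it. For $\varphi\in\H_{\infty}$, formula \eqref{eq:autom} and contractivity give
$$
\Vert\Upsilon_{s}^{(L)}(\B(\varphi))-\Upsilon_{s}(\B(\varphi))\Vert_{\A_{\infty}}=\Vert\B(((V_{s}^{(L)})^{*}-V_{s}^{*})\varphi)\Vert_{\A_{\infty}}\leq\Vert V_{s}^{(L)}-V_{s}\Vert_{\BL(\H_{\infty})}\Vert\varphi\Vert_{\H_{\infty}},
$$
which vanishes uniformly in $s$ by the previous step. To reach the dense $^{*}$--subalgebra $\A_{\infty}^{(0)}$ of \eqref{local elements}, I would use that $\Upsilon_{s}^{(L)}$ and $\Upsilon_{s}$ are $^{*}$--homomorphisms, hence isometric, and telescope a monomial:
$$
\Upsilon_{s}^{(L)}(\B(\varphi_{1})\cdots\B(\varphi_{n}))-\Upsilon_{s}(\B(\varphi_{1})\cdots\B(\varphi_{n}))=\sum_{k=1}^{n}\Upsilon_{s}^{(L)}(\B(\varphi_{1}))\cdots(\Upsilon_{s}^{(L)}-\Upsilon_{s})(\B(\varphi_{k}))\cdots\Upsilon_{s}(\B(\varphi_{n})),
$$
whence the norm of the difference is at most $\sum_{k}(\prod_{j\neq k}\Vert\varphi_{j}\Vert_{\H_{\infty}})\Vert(V_{s}^{(L)}-V_{s})\varphi_{k}\Vert_{\H_{\infty}}$, again uniform in $s$. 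By linearity this yields uniform-in-$s$ convergence on $\A_{\infty}^{(0)}$, and a standard $\varepsilon/3$ argument based on the density of $\A_{\infty}^{(0)}$ in $\A_{\infty}$ together with the isometry of the automorphisms promotes it to strong convergence on all of $\A_{\infty}$. Since $s\mapsto V_{s}$ is continuous (being a uniform limit of the continuous maps $s\mapsto V_{s}^{(L)}$ solving \eqref{eq:spectral proof2}) and $V_{s}$ is a Bogoliubov transformation by Corollary \ref{corol:Vs} and Lemma \ref{lemma:unit_oper2}, the limit $\Upsilon_{s}$ is a genuine family of $^{*}$--automorphisms of $\A_{\infty}$ that is strongly continuous in $s$.

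For the last assertion I would observe that the inverse of $\Upsilon_{s}^{(L)}=\chi_{(V_{s}^{(L)})^{*}}$ is the Bogoliubov automorphism $\chi_{V_{s}^{(L)}}$, acting by $\B(\varphi)\mapsto\B(V_{s}^{(L)}\varphi)$, and likewise $\Upsilon_{s}^{-1}=\chi_{V_{s}}$. Repeating the estimates above verbatim, with $V_{s}^{(L)}-V_{s}$ in place of $(V_{s}^{(L)})^{*}-V_{s}^{*}$, then yields the strong convergence $(\Upsilon_{s}^{(L)})^{-1}\to\Upsilon_{s}^{-1}$. The genuinely routine part here is the chain of estimates; the points demanding care are maintaining \emph{uniformity in $s$} at every stage (so that the $\sup_{r\in\cal}$ bounds really control the differences simultaneously for all $s$) and legitimizing the limiting object $\Upsilon_{s}$ as an honest $^{*}$--automorphism of $\A_{\infty}$ --- equivalently, that the norm limit $V_{s}$ inherits unitarity and commutation with $\a_{\infty}$, which is exactly what Corollaries \ref{corol:Vs} and \ref{corol:V_cauchy} secure.
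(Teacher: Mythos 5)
Your proposal is correct and takes essentially the same route as the paper: the paper's own (sketched) proof reduces everything to the norm convergence of the implementing unitaries $V_{s}^{(L)}$ from Corollary \ref{corol:V_cauchy} and asserts that $\Upsilon_{s}^{(L)}$ is Cauchy on each $B\in\A_{\infty}^{(0)}$, "omitting the details." Your write-up simply supplies those omitted details --- the uniformity in $s$ inherited from the bound $|s|\sup_{r\in\cal}\Vert\mathfrak{D}_{\g,r}^{(L_{2})}-\mathfrak{D}_{\g,r}^{(L_{1})}\Vert_{\BL(\H_{\infty})}$, the telescoping over monomials, the $\varepsilon/3$ density argument, and the identification $\left(\Upsilon_{s}^{(L)}\right)^{-1}=\chi_{V_{s}^{(L)}}$ for the convergence of inverses --- in agreement with the paper's intended argument.
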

\begin{proof}
Note that the sequence of one--parameter (Bogoliubov) group $\Upsilon_{s}^{(L)}$ on $\A_{\infty}^{(0)}$ is Cauchy for any $B\in\A_{\infty}^{(0)}$. We omit the details. Existence of $\left(\Upsilon_{s}^{(L)}\right)^{-1}$ is a straight conclusion from Corollary \ref{theor:unit_oper}, its convergence is immediate. We also omit the details.
\end{proof}
In regard to the unitary operator $U_{L}(s,r)$ defined by \eqref{eq:unit_ele_sr}--\eqref{eq:unit_ele_sr2} for any $r,s\in\cal$ and $L\in\R_{0}^{+}\cup\{\infty\}$ we have the following:
\begin{lemma}\label{lemma:unit_oper3}
Take same assumptions of Lemma \ref{theor:unit_oper} and consider the unitary operator $U_{L}(s,r)$ defined by \eqref{eq:unit_ele_sr}--\eqref{eq:unit_ele_sr2}. For fixed $r,s\in\cal$ we have: (a) The sequence $\left\{\1_{\H_{L}}-U_{L}(s,r))\right\}_{L\in\R_{0}^{+}\cup\{\infty\}}$ is trace--class per unit volume. (b) $U_{L}(s,r)$ commutes with the involution $\a_{L}$ for any $L\in\R_{0}^{+}\cup\{\infty\}$. (c) $\det\left(U_{L}(s,r)\right)=1$.
\end{lemma}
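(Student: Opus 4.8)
The plan is to prove the three claims separately, deriving each from the structural properties of the family $\{V_s^{(L)}\}$ already established in Corollaries \ref{corol:Vs} and \ref{corol:sign} and in Lemmata \ref{lemma:VI_cauchy} and \ref{lemma:unit_oper2}, rather than reworking the Combes--Thomas estimates from scratch.

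For part (a), I would reduce directly to Lemma \ref{lemma:unit_oper2} via the algebraic decomposition
\begin{equation*}
\1_{\H_L}-U_L(s,r)=\left(\1_{\H_L}-V_s^{(L)}\right)+V_s^{(L)}\left(\1_{\H_L}-V_r^{(L)}\right)^{*},
\end{equation*}
which uses only $\left(\1_{\H_L}-V_r^{(L)}\right)^{*}=\1_{\H_L}-\left(V_r^{(L)}\right)^{*}$. By Lemma \ref{lemma:unit_oper2}, both $\1_{\H_L}-V_s^{(L)}$ and $\1_{\H_L}-V_r^{(L)}$ are trace--class per unit volume; since these operators form a two--sided ideal that is stable under taking adjoints and under left multiplication by the bounded (unitary) operator $V_s^{(L)}$, the right--hand side is trace--class per unit volume for every $L\in\R_{0}^{+}\cup\{\infty\}$.

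For part (b), I would invoke Corollary \ref{corol:Vs}, which yields $\a_L V_s^{(L)}=V_s^{(L)}\a_L$ for each parameter value. The only point requiring care is transferring this commutation to the adjoint: from $\a_L V_r^{(L)}=V_r^{(L)}\a_L$ one obtains $V_r^{(L)}=\a_L V_r^{(L)}\a_L$ (using $\a_L^{2}=\1_{\H_L}$), and taking adjoints together with the conjugation identity $(\a_L X\a_L)^{*}=\a_L X^{*}\a_L$ gives $\a_L\left(V_r^{(L)}\right)^{*}=\left(V_r^{(L)}\right)^{*}\a_L$. Hence both factors of $U_L(s,r)=V_s^{(L)}\left(V_r^{(L)}\right)^{*}$ commute with $\a_L$, and therefore so does their product, for all $L\in\R_{0}^{+}\cup\{\infty\}$.

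For part (c), I would compute $\det\left(U_L(s,r)\right)=\det\left(V_s^{(L)}\right)\,\overline{\det\left(V_r^{(L)}\right)}$. By Corollary \ref{corol:sign} each determinant equals $\det\left(V_0^{(L)}\right)=\pm1$, independently of the parameter, so for finite $L$ the product is $\bigl|\det\left(V_0^{(L)}\right)\bigr|^{2}=1$. The main obstacle is the case $L=\infty$, where the ordinary determinant is not directly defined on the infinite--dimensional space; here I would interpret $\det\left(U_\infty(s,r)\right)$ as the limit of the finite--volume determinants, which is legitimate because Lemma \ref{lemma:VI_cauchy} guarantees that $s\mapsto\det\left(V_s^{(L)}\right)$ converges uniformly on $\cal$ as $L\to\infty$. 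This gives $\det\left(U_\infty(s,r)\right)=\lim_{L\to\infty}\det\left(U_L(s,r)\right)=1$, completing the argument.
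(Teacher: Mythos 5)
Your proof is correct, and parts (b) and (c) follow the paper's own route: the paper likewise disposes of (b) by invoking Corollary \ref{corol:Vs} (you merely spell out the adjoint step $(\a_L X\a_L)^{*}=\a_L X^{*}\a_L$ that the paper calls ``straightforward''), and of (c) by combining Corollary \ref{corol:sign} with the uniformity of determinants from Lemma \ref{lemma:VI_cauchy}, exactly as you do, including the reading of $\det(U_\infty(s,r))$ through the finite--volume limit. Where you genuinely diverge is part (a): the paper does \emph{not} reduce to Lemma \ref{lemma:unit_oper2} as a black box, but repeats its mechanism --- it takes the partial isometry $W_L(r,s)$ from the polar decomposition of $\1_{\H_L}-U_L(r,s)$, writes $\1_{\H_L}-U_L(r,s)=\bigl(V_r^{(L)}-V_s^{(L)}\bigr)\bigl(V_r^{(L)}\bigr)^{*}$ with $V_r^{(L)}-V_s^{(L)}=-\ii\int_s^r\mathfrak{D}_{\g,q}^{(L)}V_q^{(L)}\,\d q$, and then runs the Combes--Thomas estimates of Corollary \ref{Lemma AG98} to obtain an explicit bound on $\frac{1}{|\Lambda_L|}\tr_{\H_L}\bigl|\1_{\H_L}-U_L(s,r)\bigr|$ proportional to $|r-s|$. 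Your splitting
\begin{equation*}
\1_{\H_L}-U_L(s,r)=\bigl(\1_{\H_L}-V_s^{(L)}\bigr)+V_s^{(L)}\bigl(\1_{\H_L}-V_r^{(L)}\bigr)^{*},
\end{equation*}
together with the triangle inequality and the ideal properties of the trace norm, is a cleaner and shorter argument that avoids duplicating those estimates; since Lemma \ref{lemma:unit_oper2} itself is proved via Combes--Thomas, both routes rest on the same analytic input and require the same hypotheses. What the paper's longer computation buys is the quantitative decay in $|r-s|$ (the bound vanishes as $s\to r$), whereas your triangle--inequality bound is only of order $|s|+|r|$; this sharper dependence is not needed for the statement of the lemma, so your reduction is perfectly adequate here.
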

\begin{proof}
(a) Similar to proof of Lemma \ref{lemma:unit_oper2} for $r,s\in\cal$ and $L\in\R_{0}^{+}$, let $W_{L}(r,s)\in\BL(\H_{L})$ be the partial isometry arising from the polar decomposition of $\1_{\H_{L}}-U_{L}(r,s)$, in such a way that we write
\begin{eqnarray*}
\tr_{\H_{L}}\left|\1_{\H_{L}}-U_{L}(r,s)\right|&=&\sum_{\x\in\mathbb{X}_{L}}\inner{\E_{\x},\left(W_{L}(r,s)\right)^{*}\left(\1_{\H_{L}}-U_{L}(r,s)\right)\E_{\x}}_{\H_{L}}\\
&=&\sum_{\x\in\mathbb{X}_{L}}\inner{\E_{\x},\left(W_{L}(r,s)\right)^{*}\left(V_{r}^{(L)}-V_{s}^{(L)}\right)\left(V_{r}^{(L)}\right)^{*}\E_{\x}}_{\H_{L}},
\end{eqnarray*}
where we have used \eqref{eq:unit_ele_sr}. Note that we can write $V_{r}^{(L)}-V_{s}^{(L)}=-\ii\int_{s}^{r}\mathfrak{D}_{\g,q}^{(L)}V_{q}^{(L)}\d q$. Then, by combining the explicit form of $\mathfrak{D}_{\g,r}^{(L)}$ given by \eqref{eq: operator D}, Cauchy--Schwarz inequality, Corollary \ref{Lemma AG98}, and other simple arguments we arrive at
\begin{multline*}
\frac{1}{\left|\Lambda_{L}\right|}\left|\tr_{\H_{L}}\left|\1_{\H_{L}}-U_{L}(s,r)\right|\right|\leq D_{\text{Lem. } \ref{lemma:unit_oper3}}|r-s|\left|\mathfrak{S}\right|
\int_{\R}\left|\mathfrak{W}_{\g}(t)\right|\d t
\\
\sum_{x\in\Z^{d}}\e^{-\mu\min\left\{1,\inf\limits_{r\in\cal}\left\{\frac{\eta}{4S(\p_{r}H_{r},\mu)}\right\},\inf\limits_{r\in\cal}\left\{\frac{\g}{4S(H_{r},\mu)}\right\}\right\}|x|^{\epsilon}}.
\end{multline*}
Then, $\1_{\H_{L}}-U_{L}(s,r)$ is trace--class per unit volume. Part (b) is straightforward from Corollary \ref{corol:Vs} applied for $r,s\in\cal$ and $L\in\R_{0}^{+}\cup\{\infty\}$. Part (c) follows from parts (a) and (b) and taking into account Corollary \ref{corol:sign} and the uniformity of the determinants of Lemma \ref{lemma:VI_cauchy}: $\det\left(V_{s}^{(L)}\right)=\det\left(V_{r}^{(L)}\right)=\pm1$.
\end{proof}
For any $L\in\R_{0}^{+}$, $\q\states^{(L,\infty)}\subset\q\states^{(\infty)}$ denotes the \emph{local} quasi--free ground states on $\A_{L}\subset\A_{\infty}$. We postulate:
\begin{theorem}[Gapped quasi--free ground states]\label{lemma: supporting states2}
Take $\cal\equiv[0,1]$ and consider the family of self--dual Hamiltonians satisfying Assumption \ref{assump:hamil} (b). Fix $L\in\R_{0}^{+}$, and let $\left\{\omega_{s}^{(L)}\right\}_{s\in\cal}\subset\q\states^{(L,\infty)}$ be the family of gapped quasi--free ground states associated to the family of Hamiltonians $\left\{H_{s}^{(L)}\right\}_{s\in\cal}\in\BL(\H_{L})$. Then,
\begin{enumerate}
 \item[(1)] $\omega_{s}^{(L)}=\omega_{0}^{(L)}\circ\Upsilon_{s}^{(L)}$, for all $s\in\cal,$ where $\Upsilon_{s}^{(L)}$ is the finite--volume Bogoliubov $^{*}$--automorphism on  $\A_{L}$ of Corollary \ref{corol: automorphismY}.
\item[(2)] Let $\omega_{s}\in\q\states^{(\infty)}$ be the weak$^{*}$--limit of $\omega_{s}^{(L)}\in\q\states^{(L,\infty)}$ and $\Upsilon_{s}$ the infinite--volume Bogoliubov $^{*}$--automorphism on $\A_{\infty}$ associated to the sequence $\Upsilon_{s}^{(L)}$ of Corollary \ref{corol: automorphismY}. With respect to the weak$^{*}$--topology, the following three statements are equivalent: (a) $\lim\limits_{L\to\infty}\omega_{s}^{(L)}=\omega_{s}$. (b) $\lim\limits_{L\to\infty}\omega_{s}^{(L)}\circ\Upsilon_{s}=\omega_{s}\circ\Upsilon_{s}$. (c) $\lim\limits_{L\to\infty}\omega_{s}^{(L)}\circ\Upsilon_{s}^{(L)}=\omega_{s}\circ\Upsilon_{s}$.
\end{enumerate}
\end{theorem}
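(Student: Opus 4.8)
The plan is to establish (1) by comparing two--point correlation functions, exploiting that both $\omega_{s}^{(L)}$ and $\omega_{0}^{(L)}\circ\Upsilon_{s}^{(L)}$ are quasi--free states and are therefore uniquely determined by their symbols (cf. \eqref{symbol}--\eqref{symbolbis}). First I would recall from Lemma \ref{lemma:impor_res} that the spectral--flow automorphism $\kappa_{s}^{(L)}$, implemented by the Bogoliubov transformation $V_{s}^{(L)}$, intertwines the positive spectral projections, namely $\left(V_{s}^{(L)}\right)^{*}E_{+,s}^{(L)}V_{s}^{(L)}=E_{+,0}^{(L)}$, equivalently $E_{+,s}^{(L)}=V_{s}^{(L)}E_{+,0}^{(L)}\left(V_{s}^{(L)}\right)^{*}$. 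Using the definition \eqref{eq:autom} of $\Upsilon_{s}^{(L)}$ together with the symbol characterization \eqref{eq:symbolbis2}, I would then compute
\begin{align*}
\omega_{0}^{(L)}\circ\Upsilon_{s}^{(L)}\left(\B(\varphi_{1})\B(\varphi_{2})^{*}\right) &= \omega_{0}^{(L)}\left(\B\left(\left(V_{s}^{(L)}\right)^{*}\varphi_{1}\right)\B\left(\left(V_{s}^{(L)}\right)^{*}\varphi_{2}\right)^{*}\right)\\
&= \inner{\left(V_{s}^{(L)}\right)^{*}\varphi_{1},E_{+,0}^{(L)}\left(V_{s}^{(L)}\right)^{*}\varphi_{2}}_{\H_{L}}\\
&= \inner{\varphi_{1},E_{+,s}^{(L)}\varphi_{2}}_{\H_{L}}=\omega_{s}^{(L)}\left(\B(\varphi_{1})\B(\varphi_{2})^{*}\right).
\end{align*}
Since a quasi--free state is fixed by its two--point function, the agreement of the two sides on all $\varphi_{1},\varphi_{2}\in\H_{L}$ yields $\omega_{s}^{(L)}=\omega_{0}^{(L)}\circ\Upsilon_{s}^{(L)}$, which is (1).

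For (2), the equivalence (a)$\Leftrightarrow$(b) is immediate from the fact that $\Upsilon_{s}$ is a fixed $^{*}$--automorphism of $\A_{\infty}$: for every $A\in\A_{\infty}$ the element $\Upsilon_{s}(A)$ again lies in $\A_{\infty}$, so $\omega_{s}^{(L)}(A)\to\omega_{s}(A)$ for all $A$ forces $\omega_{s}^{(L)}(\Upsilon_{s}(A))\to\omega_{s}(\Upsilon_{s}(A))$ for all $A$, and applying the same argument to $\Upsilon_{s}^{-1}$ (which exists by Corollary \ref{corol: automorphismY}) gives the converse. The substantive step is (b)$\Leftrightarrow$(c), for which I would use the uniform convergence $\Upsilon_{s}^{(L)}\to\Upsilon_{s}$ from Corollary \ref{corol: automorphismY} together with the uniform bound $\Vert\omega_{s}^{(L)}\Vert=1$ valid for states. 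Concretely, for local $A\in\A_{\infty}^{(0)}$,
$$
\left|\omega_{s}^{(L)}\circ\Upsilon_{s}^{(L)}(A)-\omega_{s}^{(L)}\circ\Upsilon_{s}(A)\right|\leq\left\Vert\Upsilon_{s}^{(L)}(A)-\Upsilon_{s}(A)\right\Vert_{\A_{\infty}}\longrightarrow 0\quad\text{as}\quad L\to\infty,
$$
so the two sequences $\{\omega_{s}^{(L)}\circ\Upsilon_{s}^{(L)}\}_{L}$ and $\{\omega_{s}^{(L)}\circ\Upsilon_{s}\}_{L}$ differ by a null sequence pointwise on the dense subalgebra $\A_{\infty}^{(0)}$ and therefore converge to the same weak$^{*}$--limit whenever either does. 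A standard $3\varepsilon$--argument, using $\sup_{L}\Vert\omega_{s}^{(L)}\Vert=1$ and the density of $\A_{\infty}^{(0)}$ in $\A_{\infty}$, upgrades these pointwise statements from local elements to all of $\A_{\infty}$, which closes (b)$\Leftrightarrow$(c) and hence the equivalence of (a), (b) and (c).

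I expect the main obstacle to be the (b)$\Leftrightarrow$(c) step, where one must carefully justify interchanging the limit $L\to\infty$ with evaluation against a weak$^{*}$--convergent net: this requires that the convergence $\Upsilon_{s}^{(L)}\to\Upsilon_{s}$ furnished by Corollary \ref{corol: automorphismY} is strong (in norm on each fixed $A\in\A_{\infty}^{(0)}$) and, crucially, uniform in $s\in\cal$, which is precisely where the estimates of Lemma \ref{theor:unit_oper} and Corollary \ref{corol:V_cauchy} enter. The remaining care is bookkeeping: extending weak$^{*}$ statements from the local algebra $\A_{\infty}^{(0)}$ to $\A_{\infty}$ via the uniform norm bound on states, and ensuring that the finite--volume identity (1) passes to the limit consistently with the definition of $\omega_{s}$ as the weak$^{*}$--limit of $\omega_{s}^{(L)}$.
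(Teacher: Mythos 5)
Your proposal is correct and takes essentially the same route as the paper: part (1) is the symbol/two--point--function computation built on the intertwining relation $E_{+,s}^{(L)}=V_{s}^{(L)}E_{+,0}^{(L)}\bigl(V_{s}^{(L)}\bigr)^{*}$ (which the paper packages into Corollary \ref{theorem:states_algebras} together with Lemma \ref{lemma:unit_oper2}), and part (2) rests, exactly as in the paper, on the strong convergence $\Upsilon_{s}^{(L)}\to\Upsilon_{s}$ of Corollary \ref{corol: automorphismY} combined with $\Vert\omega_{s}^{(L)}\Vert=1$ and density of $\A_{\infty}^{(0)}$. The only organizational difference is that you establish (a)$\Leftrightarrow$(b) via bijectivity of the fixed automorphism $\Upsilon_{s}$ and (b)$\Leftrightarrow$(c) by a null--sequence argument, whereas the paper runs the cycle (a)$\Rightarrow$(b)$\Rightarrow$(c)$\Rightarrow$(a) and in the last implication invokes the convergence $\bigl(\Upsilon_{s}^{(L)}\bigr)^{-1}\to\Upsilon_{s}^{-1}$, an ingredient your arrangement does not need.
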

\begin{proof}
(1) follows from Corollary \ref{theorem:states_algebras} and Lemma \ref{lemma:unit_oper2}. (2) Fix $s\in\cal$. Note that the existence of the weak$^{*}$--limit $\omega_{s}$ is consequence of Theorem \ref{theo:quasi_free} while the existence of the Bogoliubov $^{*}$--automorphism $\Upsilon_{s}$ is a consequence of Corollary \ref{corol: automorphismY}. Now, take any $A\in\A_{\infty}$ and note that (a) $\Rightarrow$ (b) because
$$
\left|\omega_{s}^{(L)}\circ\Upsilon_{s}(A)-\omega_{s}\circ\Upsilon_{s}(A)\right|\leq\left|\omega_{s}^{(L)}-\omega_{s}\right|\Vert A\Vert_{\A_{\infty}}.
$$
(b) $\Rightarrow$ (c) follows by recognizing $\omega_{s}^{(L)}$ and $\omega_{s}$ as states and writing
$$
\left|\omega_{s}^{(L)}\circ\Upsilon_{s}^{(L)}(A)-\omega_{s}\circ\Upsilon_{s}(A)\right|\leq\left|\omega_{s}^{(L)}-\omega_{s}\right|\Vert A\Vert_{\A_{\infty}}+\left|\omega_{s}^{(L)}\right|\left\Vert\Upsilon_{s}^{(L)}(A)-\Upsilon_{s}(A)\right\Vert_{\A_{\infty}},
$$
and we have that the left hand side of last inequality is zero. Finally, we note that
$$
\left|\omega_{s}^{(L)}(A)-\omega_{s}(A)\right|\leq
\left|\omega_{s}^{(L)}\circ\Upsilon_{s}^{(L)}-\omega_{s}\circ\Upsilon_{s}\right|\Vert A\Vert+\left|\omega_{s}\circ\Upsilon_{s}\right|\left\Vert\left(\Upsilon_{s}^{(L)}\right)^{-1}(A)-\Upsilon_{s}^{-1}(A)\right\Vert_{\A_{\infty}},
$$
and from Corollary \ref{corol: automorphismY}, the right hand side of last inequality is zero, thus (c) $\Rightarrow$ (a).
\end{proof}
\appendix
\section{Disordered models on general graphs}\label{appendix: GAM}
Consider the \emph{graph} $\mathfrak{G}\doteq\mathfrak{V}\times\mathfrak{E}$, where $\mathfrak{V}$ is the so--called set of \emph{vertices} and $\mathfrak{E}$ is called set of \emph{edges}. A graph has the following basic properties:
\begin{enumerate}
 \item For any, $v,w\in\mathfrak{V}$, and  $(v,w)\in\mathfrak{V}\times\mathfrak{V}$, $v$ and $w$ are called the \emph{endpoints} of $(v,w)\in\mathfrak{E}$.
 \item For $v,w\in\mathfrak{V}$, the vertices $\mathfrak{E}$ set does not contain element of the form $(v,v)$.
 \item Unless otherwise indicated, the edges set is \emph{not--oriented}: $(v,w)\in\mathfrak{E}$ iff $(w,v)\in\mathfrak{E}$.
 \item For simplicity, the element $g\in\mathfrak{G}$ is written as $g\equiv(v,e)$ for some $v\in\mathfrak{V}$ and $e\in\mathfrak{E}$.
 \item For $\epsilon\in(0,1]$ and any $v,w\in\mathfrak{V}$, one can endow with $\mathfrak{G}$ of a pseudometric $\mathfrak{d}_{\epsilon}\colon\mathfrak{G}\times\mathfrak{G}\to\R_{0}^{+}\cup\{\infty\}$, that is, an equivalence relation satisfying the metric properties on $\mathfrak{G}$, except that $\mathfrak{d}_{\epsilon}(v,w)=0$ does not implies that $v=w$. $\mathfrak{d}_{\epsilon}$ is closely related to the size of the \emph{path} with the minimum number of edges \emph{joining} the vertices $v$ and $w$.
 \item For $\mathfrak{G}$, $\mathcal{P}_{\text{f}}(\mathfrak{G})\subset2^{\mathfrak{G}}$ will denote the set of all finite subsets of $\mathfrak{G}$.
\end{enumerate}
We refer the reader to \cite{lyons2017probability} for a complete discussion about graphs.\\
Take $d\in\N$. Among the graphs that physicists consider, the \emph{$d$--dimensional cubic lattice} or \emph{crystal} $\Z^{d}$ is taken as a subset of $\R^{d}$ in the following way\footnote{Because of its spatial symmetric properties: translations, rotations.}:
\begin{equation}\label{eq: Zd}
\Z^{d}\doteq\{(x_{1},\ldots,x_{d})\in\R^{d}\colon x_{j}\in\Z\text{ for any }1\leq j\leq d\}.
\end{equation}
However, one can take more general assumptions by considering \emph{Cayley graphs}, which are defined via the group $\mathfrak{V}\equiv(\mathfrak{V},\cdot)$ generated by the subset $\mathfrak{v}\equiv(\mathfrak{v},\cdot)$. Then, we associate to any element of $\mathfrak{V}$ a vertex of the Cayley graph $\mathfrak{G}$ and the set of edges is defined by
$$
\mathfrak{E}\doteq\left\{(v,w)\in\mathfrak{V}^{2}\colon v^{-1}w\in\mathfrak{v}\right\}.
$$
In the $\Z^{d}$ case, the group $\mathfrak{V}\equiv(\mathfrak{V},+)$ is the so--called \emph{translation} group.\\
From the physical point of view, mobility or confinement of particles \emph{embedded} in a graph $\mathfrak{G}\doteq\mathfrak{V}\times\mathfrak{E}$ will rely on the \emph{impurities} of the material, crystal lattice defects (as in the $\Z^{d}$ case), etc., which usually are modeled (in the simplest case) by random (one--site) external potentials on the set of vertices $\mathfrak{V}$ as follows: We take the probability space $(\Omega,\mathfrak{A}_{\Omega},\mathfrak{a}_{\Omega})$, where $\Omega\doteq[-1,1]^{\mathfrak{V}}$. For any $v\in\mathfrak{V}$, $\Omega_{v}$ is an arbitrary element of the Borel $\sigma$--algebra $\mathfrak{A}_{v}$ of the Borel set $[-1,1]$ w.r.t. the usual metric topology. Then, $\mathfrak{A}_{\mathfrak{V}}$ is the $\sigma$--algebra generated by the cylinder sets $\prod\limits_{v\in\mathfrak{V}}\Omega_{v}$, where $\Omega_{v}=[-1,1]$ for all but finitely many $v\in\mathfrak{V}$. Additionally, we assume that the distribution $\mathfrak{a}_{\Omega}$ is an arbitrary \emph{ergodic} probability measure on the measurable space $(\Omega ,\mathfrak{A}_{\Omega})$. I.e., it is invariant under the action
$$
\rho\longmapsto\chi_{v}^{(\Omega)}(\rho)\doteq\chi_{v}^{(\mathfrak{V})}(\rho),\qquad v\in\mathfrak{V},
$$
of the group $\mathfrak{V}\equiv(\mathfrak{V},\cdot)$ on $\Omega$ and $\mathfrak{a}_{\Omega}(\mathcal{O})\in \{0,1\}$ whenever $\mathcal{O}\in\mathfrak{A}_{\Omega }$ satisfies $\chi _{e}^{(\Omega )}\left(\mathcal{O}\right)=\mathcal{O}$ for all $v\in\mathfrak{V}$. Here, for any $\rho\in \Omega$, $v\in\mathfrak{V}$ and $w\in\mathfrak{V}$
$$
\chi _{v}^{(\mathfrak{E})}\left(\rho\right)\left(w\right)\doteq\rho\left(v^{-1}w\right).
$$
As is usual, $\mathbb{E}\left[\,\cdot\,\right]$ denotes the expectation value associated with $\mathfrak{a}_{\Omega }$.\\
For the Cayley graph $\mathfrak{G}\doteq\mathfrak{V}\times\mathfrak{E}$, $\h\doteq\ell^{2}(\mathfrak{G},\CP)$ will denote a separable Hilbert space associated to $\mathfrak{G}$ with scalar product $\inner{\cdot,\cdot}_{\h}$ and canonical orthonormal basis denoted by $\{\E_{v}\}_{v\in\mathfrak{V}}$, which is defined by $\E_{v}(w)=\delta_{v^{-1}w,\1_{\mathfrak{V}}}$ for all $v, w \in\mathfrak{v}$, with $\mathfrak{v}$ the generator set of $\mathfrak{V}$ and $\1_{\mathfrak{V}}$ the unit on $\mathfrak{V}$. For any, $\rho\in\Omega$, one introduces the external potential $V_{\rho}\in\BL(\h)$ as the self--adjoint multiplication operator operator $V_{\rho}\colon\mathfrak{V}\to[-1,1]$. On the other hand, one defines for the compact set $\cal\doteq[0,1]$, the family of graph Laplacians $\{\Delta_{\mathfrak{G},s}\}_{s\in\cal}$ defined for any $s\in\cal$ by
\begin{eqnarray}
\lbrack\Delta_{\mathfrak{V},s}(\psi)](v)\doteq\mathrm{deg}_{\mathfrak{V}}(v)\psi(v)-s\sum_{p\in\mathfrak{V}\colon d_{\epsilon}(v,w)=1}\psi(v^{-1}w),\quad v\in\mathfrak{V},\,\psi\in\h\label{equation sup}
\end{eqnarray}
where for any $\epsilon\in(0,1]$, $d_{\epsilon}\colon\mathfrak{V}\times\mathfrak{V}\to\R_{0}^{+}\cup\{\infty\}$ is a pseudometric on $\mathfrak{G}$. In \eqref{equation sup}, on the right hand side, $\mathrm{deg}_{\mathfrak{V}}(v)$ is the number of nearest neighbors to vertex $v$, or \emph{degree} of $v$. If $\left\{\mathrm{deg}_{\mathfrak{V}}(v)\right\}_{v\in\mathfrak{V}}\in\N$ is the same for all $v\in\mathfrak{V}$, we say that the graph is regular.\\
The random tight--binding (Anderson) model is the one--particle Hamiltonian defined by
\begin{equation}
h_{\mathfrak{V},s}^{(\rho)}\doteq\Delta_{\mathfrak{V},s}+\lambda V_{\rho},\qquad \rho\in\Omega,\,\lambda\in\R_{0}^{+}.  \label{eq:Ham_lap_pot}
\end{equation}
See \cite{AW16} for further details. In \cite{ABdS3}, we consider a more general setting such that hopping disorder is present, i.e., we associate to particles a hopping probability on the non--oriented edges $\mathfrak{E}$. In this case, one deals with \emph{hopping amplitudes} and the probability space $(\Omega,\mathfrak{A}_{\Omega},\mathfrak{a}_{\Omega})$ is properly implemented.
\section{Fermionic Fock space and parity of the vacuum vector}\label{appendix: fock}
Let $(\H,\a)$ be a self--dual Hilbert space as defined in Section \ref{Self--dual CAR Algebras} and take $P\in\fp$, a basis projection, with range $\mathrm{ran}(P)=\h_{P}$. For $n\in\N_{0}$, let $\h_{P}^{0}\doteq\CP$ and for $n\in\N$ define
$$
\h_{P}^{n}\doteq\mathrm{lin}\{\varphi_{1}\otimes\cdots\otimes \varphi_{n}\colon \varphi_{1},\ldots,\varphi_{n}\in\h_{P}\}.
$$
The set $\varphi_{1},\ldots,\varphi_{n}\in\h_{P}$ denotes $n$ \emph{state vectors} of a single particle. Thus, the element $\varphi_{1}\otimes\cdots\otimes \varphi_{n}\in\h_{P}^{n}$ associate the state of the particle 1 in the state $\varphi_{1}$, the particle 2 in the state $\varphi_{2}$, and so on \cite{attal2006open}. Then, the Fock (Hilbert) space is nothing but
$$
 \Fs(\h_{P})\doteq\bigoplus_{n\geq0}\h_{P}^{n},
$$
where, as always, this infinite direct sum of Hilbert spaces is the subspace of the product space $\prod\limits_{n=0}^{\infty }\h_{P}^{n}$, with elements zero for all but for a finite number of these. An element $\Upsilon\in\Fs(\h_{P})$ is the sequence of functions $\{Y_{n}\}_{n\geq0}$ such that $Y_{0}\in\CP$ and $Y_{n}\in\h_{P}^{n}$ for $n\in\N$ \cite{reed1981functional}:
\begin{equation}\label{eq:Psi_seq}
\Upsilon\doteq\{Y_{0},Y_{1}(\phi_{1}^{*}),Y_{2}(\phi_{1}^{*},\phi_{2}^{*}),\ldots\},
\end{equation}
with $Y_{n}\doteq \varphi_{1}\otimes\cdots\otimes \varphi_{n}\in\h_{P}^{n}$ and
\begin{equation}\label{eq:f_otimes_f}
(\varphi_{1}\otimes\cdots\otimes \varphi_{n})(\phi_{1}^{*},\ldots,\phi_{n}^{*})\doteq \phi_{1}^{*}(\varphi_{1})\cdots \phi_{n}^{*}(\varphi_{n}),\qquad \phi_{1}^{*},\ldots,\phi_{n}^{*}\in\h_{P}^{*}.
\end{equation}
Naturally, the inner product on $\Fs(\h_{P})$ is given by
$$
\inner{\Upsilon,\Phi}_{\Fs(\h_{P})}\doteq\sum_{n\geq0}\inner{\varphi_{n},\phi_{n}}_{\h_{P}^{n}}.
$$
for $\Upsilon,\Phi\in\Fs(\h_{P})$. We then define the completely antisymmetric $n$--linear form $\varphi_{1}\wedge\cdots\wedge \varphi_{n}\in\wedge^{n}\h_{P}$ as
\begin{equation}\label{eq:linearform}
\varphi_{1}\wedge\cdots\wedge \varphi_{n}\doteq\sum_{\pi\in\mathcal{S}_{n}}\varepsilon_{\pi}\varphi_{\pi(1)}\otimes\cdots\otimes \varphi_{\pi(n)},
\end{equation}
where $\mathcal{S}_{n}$ denotes the set of all permutations of $n\in\N$ elements, with $\varepsilon_{\pi}$ equals $+1$ or $-1$ if the permutation is even or odd respectively. Note that, for any permutation $\varepsilon_{\pi}$ of $n\in\N$ elements we have
$$
\varphi_{1}\wedge\cdots\wedge \varphi_{n}=\varepsilon_{\pi}\varphi_{\pi(1)}\wedge\cdots\wedge \varphi_{\pi(n)},\qquad \varphi_{1},\ldots,\varphi_{n}\in\h_{P}.
$$
For $n\in\N_{0}$ we use that $\wedge^{0}\h_{P}\doteq\CP$ and for $n\in\N$ we define
$$
\wedge^{n}\h_{P}\doteq\mathrm{lin}\{\varphi_{1}\wedge\cdots\wedge \varphi_{n}\colon \varphi_{1},\ldots,\varphi_{n}\in\h_{P}\}.
$$
Note that by \eqref{eq:f_otimes_f}, \eqref{eq:linearform} and the \emph{Leibniz formula} for determinants we are able to write
$$
(\varphi_{1}\wedge\cdots\wedge \varphi_{n})(\phi_{1}^{*},\ldots,\phi_{n}^{*})=\det\left(\left(\phi_{i}^{*}(\varphi_{j})\right)_{i,j}^{n}\right),\qquad \phi_{1}^{*},\ldots,\phi_{n}^{*}\in\h_{P}^{*}.
$$
Then, for the Hilbert space $\h_{P}$, here we define the fermionic Fock space by
$$
\wedge\h_{P}\doteq\bigoplus_{n\geq0}\wedge^{n}\h_{P}.
$$
Note that the subspace of $\wedge\h_{P}$ generated by monomials $\varphi_{1},\ldots,\varphi_{n}$ of even order $n\in2\N_{0}$ forms a commutative subalgebra, the even subalgebra of $\wedge\h_{P}$, and it is denoted by $\wedge_{+}\h_{P}$. Then, for $H\in\BL(\H)$ be a self--dual element, $H^{*}=-\a H\a$,  and an orthonormal basis $\{\psi _{j}\}_{j\in J}$ of $\h_{P}$, the \emph{bilinear} element $\inner{\h_{P},H\h_{P}}$ on the fermionic Fock space $\wedge\h_{P}$ is defined by
$$
\inner{\h_{P},H\h_{P}}\doteq \sum\limits_{i,j\in J}\inner{\psi_{i},H\psi _{j}}_{\H}\left(\a\psi _{j}\right)\wedge \psi _{i}\in\wedge_{+}\h_{P},
$$
and hence we use the exponent function in $\wedge\h_{P}$
$$
\e^{\zeta}\doteq\1+\sum_{k=1}^{2\dim\h_{P}}\frac{\zeta^{n}}{k!}, \qquad \zeta\in\wedge\h_{P},
$$
in order to define the \emph{Gaussian element} $\e^{\inner{\h_{P},H\h_{P}}}\in\wedge_{+}\h_{P}$.\\
The \emph{vacuum} vector denoted by $\Omega\in\wedge\h_{P}$ is such that $[\Omega]_{0}\doteq1\in\h_{P}^{0}$ and $[\Omega]_{n}\doteq0\in\h_{P}^{n}$ for $n\geq1$, thus, physically $\Omega$ is associated to the state $(1,0,0,\ldots)$ without fermions. See \eqref{eq:Psi_seq}. The maps $a\colon\wedge^{n}\h_{P}\to\wedge^{n-1}\h_{P}$ and $a^{*}\colon\wedge^{n}\h_{P}\to\wedge^{n+1}\h_{P}$ are the so--called  ``annihilation'' and ``creation'' operators, respectively. For $\varphi,\varphi_{1},\ldots,\varphi_{n}\in\h_{P}$ they are defined by
\begin{align}\label{eq:anni_crea}
a(\varphi)(\varphi_{1}\wedge\cdots\wedge \varphi_{n})&=\sum_{k=1}^{n}(-1)^{k-1}\inner{\varphi,\varphi_{k}}_{\h_{P}}\varphi_{1}\wedge\cdots\wedge\breve{\varphi}_{k}\wedge\cdots\wedge \varphi_{n},\\
a^{*}(\varphi)(\varphi_{1}\wedge\cdots\wedge \varphi_{n})&=\varphi\wedge \varphi_{1}\wedge\cdots\wedge \varphi_{n}\nonumber
\end{align}
where the symbol $\breve{\phantom{\cdot}}$ means that the corresponding coordinate $\varphi_{k}$ was omitted. Hence, $a(\varphi)\Omega=0$ and $a^{*}(\varphi)\Omega=\varphi$ for all $\varphi\in\h_{P}$. Here, for $\varphi\in\h_{P}$, the involution of $a(\varphi)\in\BL(\wedge\h_{P})$, namely $a(\varphi)^{*}\in\BL(\wedge\h_{P})$, is canonically identified with $a^{*}(\varphi)$, i.e., $a^{*}(\varphi)\equiv a(\varphi)^{*}$. Then for $n\in\N$, $P\in\fp$ and $\varphi_{1},\ldots,\varphi_{n}\in\h_{P}$ we can define the element of size $n$ in $\wedge\h_{P}$ by
$$
\varphi_{1}\wedge\cdots\wedge\varphi_{n}=a^{*}(\varphi_{1})\cdots a^{*}(\varphi_{n})\Omega \in\wedge\h_{P}.
$$
Additionally, we can show that the canonical anticommutation relations hold
$$
a(\varphi_{1})a^{*}(\varphi_{2})+a^{*}(\varphi_{2})a(\varphi_{1})=\inner{\varphi_{1},\varphi_{2}}_{\h_{P}}\1_{\wedge\h_{P}},\quad a(\varphi_{1})a(\varphi_{2})+a(\varphi_{2})a(\varphi_{1})=0.
$$
Hence, the family of operators $\{a(\varphi)\}_{\varphi\in\h_{P}}$ and $\1_{\wedge\h_{P}}$ generate a $\CAR$ $\C$--algebra. By \cite[Theorem 5.2.5]{BratteliRobinson} there is an injective homomorphism between the self--dual $\CAR$ algebra $\sCAR$ and the space of bounded operators acting on the fermionic Fock space $\pi_{P}\colon\sCAR\to\BL(\wedge\h_{P})$, which is \emph{Fock representation} of the $\CAR$ algebra. In the finite dimension situation, this homomorphism is even a $^{*}$--isomorphism of $\C$--algebras. Explicitly, for any $P\in\fp$ and $\varphi\in\H$ we have
\begin{equation}\label{eq:fock_rep}
\pi_{P}(\B(\varphi))\doteq a(P\varphi)+a^{*}(\a P^{\perp}\varphi),\qquad \varphi\in\H,
\end{equation}
c.f., Expression \eqref{map iodiote}. The \emph{Fock state} is canonically defined by
\begin{equation}\label{eq:focK_state}
\omega_{P}\left(A\right)=\inner{\Omega,\pi_{P}(A)\Omega}_{\wedge\h_{P}},\qquad A\in\sCAR.
\end{equation}
Let now, $\{\psi_{j}\}_{j\in J}$ be an orthogonal basis of $\h_{P}$, we define the element of size $|J|$ in $\wedge\h_{P}$ as
\begin{equation}\label{eq:vacuum_P}
\Omega_{P}\doteq\psi_{1}\wedge\cdots\wedge\psi_{|J|}=a^{*}(\psi_{1})\cdots a^{*}(\psi_{|J|})\Omega,
\end{equation}
where $\Omega$ is the vacuum vector in $\wedge\h_{P}$ while $a^{*}(\cdot)$ is the associated creation operator on $\wedge\h_{P}$, see Expressions \eqref{eq:anni_crea}. The GNS construction associated to $\Omega_{P}$ in \eqref{eq:vacuum_P} is written by
$$
(\H_{\omega_{P}},\pi_{\omega_{P}},\Omega_{\omega_{P}})\equiv
(\H_{\omega_{P}},\pi_{\omega_{P}},\Omega_{P}),
$$
with Fock state given by (c.f. \eqref{eq:focK_state})
$$
\omega_{P}\left(A\right)=\inner{\Omega_{P},\pi_{\omega_{P}}(A)\Omega_{P}}_{\wedge\h_{P}},\qquad A\in\A,
$$
with (see Equation \eqref{eq:GNS_state} and comments around it) \cite[Exer. 6.10]{EK98}. Now, take the even and odd parts $\sCAR^{\pm}\subset\sCAR$ of the self--dual $\CAR$ $\C$--algebra associated to the self--dual Hilbert space $(\H,\a)$, see Expressions \eqref{eq:even odd}, and let $\pi_{P}$ be the fermionic Fock representation associated to $P$ given by \eqref{eq:fock_rep}. Note that $\pi_{P}$ can be decomposed as two disjoint irreducible representations \cite{A87}: $\pi_{P}=\pi_{P}^{+}\oplus\pi_{P}^{-}$, where $\pi_{P}^{\pm}$ is the restriction of $\omega_{P}$ to $\sCAR^{\pm}$, and coincides with the restriction of $\pi_{P}(\A_{\pm})$ to the clousure $\wedge_{\pm}\h_{P}$ of $\pi_{\omega_{P}}(\sCAR^{\pm})\Omega$. In this way, $\left(\pi_{\omega_{P}}\right)_{\pm}$ is identified with $\pi_{P}^{\pm}$ or $\pi_{P}^{\mp}$ depending if $|J|$ in Expression \eqref{eq:vacuum_P} is even or odd \cite{EK98}.\par
Observe that we can study the parity of the vacuum vector $\Omega$ by using \emph{Clifford algebra} tools. In this framework, instead we need to consider \emph{orthogonal complex structures} $\mathcal{J}\in\BL(\H)$, that is, a linear endomorphism on $(\H,\a)$ satisfying $\mathcal{J}^{2}=-\mathbf{1}_{\H}$ and $\mathcal{J}^{*}=-\mathcal{J}$, as well as use suitable isomorphisms between self--dual $\CAR$--algebra $\sCAR$ and the Clifford algebra $\mathbb{C}\ell(\Ree(\H))$ generated by $\Ree(\H)$ \cite{Varilly}. Here, $\Ree(\H)\doteq\{\varphi\in\H\colon\a\varphi=\varphi\}$, and $\mathcal{J}\doteq\ii(2P-\mathbf{1}_{\H})_{\Ree(\H)}$, for any $P\in\fp$. We also endow to $\mathbb{C}\ell(\Ree(\H))$ with the inner product $\inner{\cdot,\cdot}_{\Ree(\H)}$, which is defined by a symmetric non--degenerated bilinear form $\mathcal{S}\colon\Ree(\H)\times\Ree(\H)\to\R$, so that $\mathcal{S}(\varphi_{1},\varphi_{2})\doteq\inner{\varphi_{1},\varphi_{2}}_{\Ree(\H)}$ for any $\varphi_{1},\varphi_{2}\in\Ree(\H)$. In \cite{CalReyes}  the parity of $\Omega$ was studied via orthogonal complex structures, and it is completely equivalent to that presented in this paper. In fact, one defines a $\Z_{2}$ topological index via two orthogonal complex structures $\mathcal{J}_{1},\mathcal{J}_{2}$ as follows
$$
\Sigma(\mathcal{J}_{1},\mathcal{J}_{2})\doteq(-1)^{\frac{1}{2}\dim\ker(\mathcal{J}_{1}+\mathcal{J}_{2})},
$$
which coincides with the one of Expression \eqref{eq:top_index} \cite{EK98}.
\section{\eqt{$\CAR$ $\C$}--Algebra}\label{appendix: CAR}
Let $\h$ be a separable Hilbert space with $\dim\h\in\N_{0}$, and consider the direct sum
$$
\H\doteq \h\oplus\h^{*}.
$$
Compare with Equation \eqref{definition H bar}. The scalar product on $\H$ is
% [Andrés, 03.03.2022]: el uso de \mathrm en la ecuación de abajo generó un problema que indicó Ling en las proofs...voy a quitar esos \mathrm
$$
\left\langle \varphi ,\tilde{\varphi}\right\rangle _{\H}\doteq
\left\langle \varphi_{1},\tilde{\varphi}_{1}\right\rangle
_{\h}+\left\langle \tilde{\varphi}_{2},\varphi_{2}\right\rangle_{\h},\qquad \varphi\doteq(\varphi_{1},%
\varphi_{2}^{*}),\, \tilde{\varphi}\doteq(\tilde{\varphi}_{1},
\tilde{\varphi}_{2}^{*})\in \H.
$$
% Esta era la versión problemática (con \mathrm):
%$$
%\left\langle \varphi ,\tilde{\varphi}\right\rangle _{\H}\doteq
%\left\langle \mathrm{\varphi }_{1},\mathrm{\tilde{\varphi}}_{1}\right\rangle
%_{\h}+\left\langle \mathrm{\tilde{\varphi}}_{2},\mathrm{\varphi }%
%_{2}\right\rangle_{\h},\qquad \varphi\doteq(\mathrm{\varphi }_{1},%
%\mathrm{\varphi }_{2}^{*}),\, \tilde{\varphi}\doteq(\mathrm{\tilde{\varphi}}%
%_{1},\mathrm{\tilde{\varphi}}_{2}^{*})\in \H.
%$$
Here, $\varphi^{*}$ denotes the element of the dual $\h^{*}$ of the Hilbert space $\h$ which is related to $\varphi $ via the Riesz representation. We define the canonical antiunitary involution $\Gamma$ of $\H$ by
$$
\Gamma\left(\varphi_{1},\varphi_{2}^{*}\right) \doteq \left( \varphi_{2},\varphi_{1}^{*}\right),\qquad \varphi \doteq(\varphi_{1},\varphi_{2}^{*})\in\H.  \label{antiunitary simple}
$$
Note that $\varphi ^{* }=\Gamma\varphi $ for any $\varphi\in\h\subset \mathcal{\H}$. Then, the $\CAR$ algebra $\CAR(\h)$ and the self--dual $\CAR$ algebra $\sCAR$ are the same $\C$--algebra, by defining
\begin{equation}
\mathrm{B}\left(\varphi \right) \equiv \mathrm{B}_{P_{\h}}\left(
\varphi \right) \doteq a(\varphi_{1})+a(\varphi_{2})^{* },\qquad \varphi =
(\varphi_{1},\varphi_{2}^{* })\in\H,  \label{map iodiote2}
\end{equation}%
with $P_{\h}\in \mathcal{B}(\mathcal{\H})$ being the basis projection of $(\H,\Gamma)$ with range $\h$. See \eqref{CAR Grassmann III} and \eqref{map iodiote}.\\
We now consider the $\CAR$ $\C$--algebra generated by the identity $\mathfrak{1}$ and the elements $\{a(\psi )\}_{\psi \in \mathfrak{h}}$ satisfying the canonical anticommutation relations ($\CAR$): For all $\psi ,\varphi \in \mathfrak{h}$,
$$
a(\psi )a(\varphi )=-a(\varphi )a(\psi ),\quad a(\psi )a(\varphi )^{\ast
}+a(\varphi )^{\ast }a(\psi )=\left\langle \psi ,\varphi \right\rangle _{%
\mathfrak{h}}\mathfrak{1}.
$$
As is usual, $a(\psi )$ and $a(\psi )^{\ast }$ are called, respectively, annihilation and creation operators of a fermion in the state $\varphi \in \mathfrak{h}$. One usually consider fermionic Hamiltonians of the form
\begin{equation}
\mathbf{H}=\mathrm{d}\Phi(h)+\mathrm{d}\Upsilon (g)+W,\qquad W=W^{\ast}\in \mathrm{CAR}(\mathfrak{h}),  \label{form}
\end{equation}
where, for any $h=h^{\ast}\in \mathcal{B}(\mathfrak{h})$ and \emph{antilinear} operator $g=-g^{\ast }$ on $\mathfrak{h}$,
\begin{align*}
\mathrm{d}\Phi (h) &\doteq \sum\limits_{i,j\in J}\left\langle \psi
_{i},h\psi _{j}\right\rangle _{\mathfrak{h}}a\left( \psi _{i}\right) ^{\ast
}a\left( \psi _{j}\right), \\
\mathrm{d}\Upsilon (g) &\doteq \frac{1}{2}\sum\limits_{i,j\in J}\left(
\left\langle \psi _{i},g\psi _{j}\right\rangle _{\mathfrak{h}}a\left( \psi
_{i}\right) ^{\ast }a\left( \psi _{j}\right) ^{\ast }+\overline{\left\langle
\psi _{i},g\psi _{j}\right\rangle _{\mathfrak{h}}}a\left( \psi _{j}\right)
a\left( \psi _{i}\right) \right).
\end{align*}
Here, $\{\psi _{j}\}_{j\in J}$ is any orthonormal basis of $\mathfrak{h}$. In \eqref{form}, $\mathrm{d}\Phi(h)=\mathrm{d}\Phi(h)^{*}\in \mathrm{CAR}\left(\mathfrak{h}\right) $ is the \emph{second quantization} of the one--particle Hamiltonian $h\in \mathcal{B}(\mathfrak{h})$. It represents a \emph{gauge} invariant model of free fermions. On the other hand, $\mathrm{d}\Upsilon (g)=\mathrm{d}\Upsilon (g)^{*}\in \mathrm{CAR}\left( \mathfrak{h}\right)$ represents the \emph{non--gauge} invariant quadratic part of $\mathbf{H}$. Finally, $W$ encodes the interparticle interaction of the fermion system.\par
Let now $P\in\fp$ be a basis projection with $\text{ran}(P)=\h_{P}$. For any $h=h^{*}\in \mathcal{B}(\h_{P})$ and antilinear operator $g=-g^{*}$ on $\h_{P}$ are defined the maps $\varkappa$ and $\tilde{\varkappa}$ by:
\begin{equation*}
\varkappa\left(h\right)\doteq\frac{1}{2}\left(PhP-\a PhP\a\right)\in\BL(\h_{P})
\qquad\text{and}\qquad
\tilde{\varkappa}\left(g\right) \doteq \frac{1}{2}\left( PgP\Gamma-\Gamma PgP\right) \in \mathcal{B}(\H).
\end{equation*}
Observe that
\begin{equation*}
\varkappa\left(h\right)^{*}=\varkappa\left(h\right)=-\Gamma \varkappa\left(h\right)\Gamma
\qquad\text{and}\qquad
\tilde{\varkappa}\left(g\right)^{*}=\tilde{\varkappa}\left(g\right)=-\Gamma\tilde{\varkappa}\left(g\right)\Gamma,
\end{equation*}
thus $\varkappa$ and $\tilde{\varkappa}$ provide self--dual Hamiltonians, see Definition \ref{def one particle hamiltinian}. By \eqref{map iodiote2} we can write
$$
\d\Phi (h)+\d\Upsilon (g)=-\inner{\B,\left[\varkappa \left( h\right) +\tilde{\varkappa}\left( g\right) \right]\B} +\frac{1}{2}\tr_{\h_{P}}\left( h\right) \mathfrak{1}\in\mathrm{CAR}(\h_{P}),
$$
that is, a bilinear element of a self--dual Hamiltonian plus a constant term. For further details see \cite[Sects. 1 and 6.]{LD1}. For $W=0$ in \eqref{form}, the dynamics is provided by the continuous group $\{\tau_{t}\}_{t\in\R}$ of $^{*}$--automorphisms of $\CAR(\h_{P})$ defined by
$$
\tau_{t}(A)\doteq\e^{\ii t\mathbf{H}}A\e^{-\ii t\mathbf{H}},\qquad A\in\CAR(\h_{P}),
$$
which, by above comments, can be written as
\begin{align*}
\tau_{t}(A)&=\e^{\ii t\left(\d\Phi (h)+\d\Upsilon (g)\right)}A\e^{-\ii t\left(\d\Phi (h)+\d\Upsilon (g)\right)}\\
&=\e^{-\ii t\inner{\B,\left[\varkappa \left( h\right) +\tilde{\varkappa}\left( g\right) \right]\B}}A\e^{\ii t\inner{\B,\left[\varkappa \left( h\right) +\tilde{\varkappa}\left( g\right) \right]\B}}.
\end{align*}
The latter equation provides exactly the quasi--free dynamics given by Equation \eqref{eq:autoSCAR}.\\
Now, instead of considering $\mathbf{H}$, observe that for any basis projection $P\in\fp$ equivalently
\begin{equation}\label{eq: free HG1}
-\inner{\B,\left[F+G\right]\B}+\tr_{\h_{P}}\left(PFP\right)\mathfrak{1}\in\mathrm{CAR}(\h_{P}),
\end{equation}
gives us the description of \emph{all} free--fermion systems, where $F\in\BL(\h_{P})$ and $G\in\mathcal{L}(\h_{P})$ are self--dual Hamiltonians on $\H$. As mentioned in \eqref{kappabisbiskappabisbis}, $F_{P}\doteq2PFP$ is the so--called one--particle Hamiltonian, then, w.l.o.g. we can \emph{remove} the term $\tr_{\h_{P}}\left(PFP\right)\mathfrak{1}$, by writing \eqref{eq: free HG1} as
$$
-\inner{\B,\left[\widetilde{F}+G\right]\B}\in\mathrm{CAR}(\h_{P}),
$$
for $\widetilde{F}\doteq F-\frac{1}{|I|}\tr_{\h_{P}}(PFP)\varkappa(\1_{\h_{P}})$, with $|I|$ the cardinality of the Hilbert space $\H$. For $H\doteq\widetilde{F}+G$, a self--dual Hamiltonian, define $h\doteq2P_{\h}HP_{\h}$ and $g\doteq2P_{\h}H\a P_{\h}$, in order to describe \emph{any} quadratic Fermionic Hamiltonian. In fact, given $P\in\fp$ with $\text{ran}(P)=\h_{P}$ and the self--dual Hamiltonian $H\in\BL(\H)$ the operators
$$
h\doteq 2PHP\in\BL(\h_{P})\qquad\text{and}\qquad g\doteq2PHP\a\in\mathcal{L}(\H),
$$
provide \emph{all} the possible free--fermion models.

%\bigskip
%\noindent \textbf{Data availability statement:} Data sharing not applicable to this article as no datasets were generated or analysed during the current study.

\bigskip
\noindent \textit{Acknowledgments:} We are very grateful to the anonymous Referee, whose feedback and constructive criticism helped us to significantly improve our work. We are also very grateful to Y. Ogata for hints and references. Financial support from the Faculty of Sciences of \emph{Universidad de los Andes} through project INV-2019-84-1833 is gratefully acknowledged.
%\bibliography{books}

\begin{thebibliography}{TKNdN82}

\bibitem[ABPM21]{LD1}
N.~J.~B. {Aza}, J.-B. {Bru}, de~Siqueira~W. {Pedra}, and L.~C. P. A.~M.
  {Müssnich}, \emph{{Large Deviations in Weakly Interacting Fermions:
  Generating Functions as Gaussian Berezin Integrals and Bounds on Large
  Pfaffians.}}, Reviews in Mathematical Physics (2021).

\bibitem[ABPR19]{ABdS3}
N.~J.~B. {Aza}, {J–B} {Bru}, de~Siqueira~W. {Pedra}, and
  A~{Ratsimanetrimanana}, \emph{{Accuracy of Classical Conductivity Theory at
  Atomic Scales for Free Fermions in Disordered Media}}, Journal de
  Mathématiques Pures et Appliquées (2019).

\bibitem[AE83]{araEva}
H.~Araki and D.~E. Evans, \emph{{On a C*–algebra approach to phase transition
  in the two–dimensional Ising model}}, Communications in Mathematical
  Physics \textbf{91} (1983), no.~4, 489–503.

\bibitem[AG98]{AG98}
M.~Aizenman and G.~M. Graf, \emph{{Localization bounds for an electron gas}},
  Journal of Physics A: Mathematical and General \textbf{31} (1998), no.~32,
  6783.

\bibitem[AJP06]{attal2006open}
S.~Attal, A.~Joye, and C.A. Pillet, \emph{{Open Quantum Systems I: The
  Hamiltonian Approach}}, {Lecture Notes in Mathematics}, Springer, 2006.

\bibitem[AM03]{Araki-Moriya}
H.~Araki and H.~Moriya, \emph{{Equilibrum Statistical Mechanics of Fermion
  Lattice Systems}}, Reviews in Mathematical Physics \textbf{15} (2003),
  no.~02, 93–198.

\bibitem[AMR]{AMR2}
N.~J.~B. {Aza}, L.~C. P. A.~M. {Müssnich}, and A.~F. {Reyes-Lega}, \emph{{A
  $\mathbb{Z}_{2}$ Topological Index for Interacting Fermions Systems}}, To
  appear.

\bibitem[Ara68]{A68}
H.~Araki, \emph{{On the diagonalization of a bilinear Hamiltonian by a
  Bogoliubov transformation}}, Publications of the Research Institute for
  Mathematical Sciences, Kyoto University. Ser. A \textbf{4} (1968), no.~2,
  387–412.

\bibitem[Ara71]{A70}
\bysame, \emph{{On quasifree states of CAR and Bogoliubov automorphisms}},
  Publications of the Research Institute for Mathematical Sciences \textbf{6}
  (1971), no.~3, 385–442.

\bibitem[Ara87]{A87}
\bysame, \emph{{Bogoliubov Automorphisms and Fock Representations of Canonical
  Anticommutation Relations}}, Contemp. Math \textbf{62} (1987), 23–141.

\bibitem[Ara88]{A88}
\bysame, \emph{{Schwinger terms and cyclic cohomology}}, {Quantum Theories and
  Geometry}, Springer, 1988, p.~1–22.

\bibitem[AT85]{araki1985ground}
H.~Araki and Matsui T., \emph{{Ground states of the XY–model}},
  Communications in Mathematical Physics \textbf{101} (1985), no.~2, 213–245.

\bibitem[AW15]{AW16}
M.~Aizenman and S.~Warzel, \emph{{Random operators: Disorder effects on quantum
  spectra and dynamics, volume 168 of}}, Graduate Studies in Mathematics
  (2015).

\bibitem[AZ97]{Altland}
A.~Altland and M.~R. Zirnbauer, \emph{{Nonstandard symmetry classes in
  mesoscopic normal-superconducting hybrid structures}}, Physical Review B
  \textbf{55} (1997), no.~2, 1142–1161.

\bibitem[BCR16]{Bourne}
C.~Bourne, A.~L. Carey, and A.~Rennie, \emph{{A non–commutative framework for
  topological insulators}}, Reviews in Mathematical Physics \textbf{28} (2016),
  no.~02, 1650004.

\bibitem[BDF18]{bachRoeckFrass}
S.~Bachmann, W.~{De Roeck}, and M.~Fraas, \emph{{The adiabatic theorem and
  linear response theory for extended quantum systems}}, Communications in
  Mathematical Physics (2018), no.~3, 997–1027.

\bibitem[BMNS12]{bachmann2012automorphic}
S.~Bachmann, S.~Michalakis, B.~Nachtergaele, and R.~Sims, \emph{{Automorphic
  equivalence within gapped phases of quantum lattice systems}}, Communications
  in Mathematical Physics \textbf{309} (2012), no.~3, 835–871.

\bibitem[BP13]{BruPedra2}
{J–B} {Bru} and de~Siqueira~W {Pedra}, \emph{{Non–cooperative equilibria of
  Fermi systems with long range interactions}}, vol. 224, American Mathematical
  Soc., 2013.

\bibitem[BP16a]{brupedraLR}
\bysame, \emph{{Lieb–Robinson Bounds for Multi–Commutators and Applications
  to Response Theory}}, Springer \textbf{13} (2016).

\bibitem[BP16b]{universal}
\bysame, \emph{{Universal Bounds for Large Determinants from Non Commutative
  Hölder Inequalities in Fermionic Constructive Quantum Field Theory}},
  Preprint: mp arc 16-16 (2016).

\bibitem[BPH14]{bru2014heat}
{J–B} {Bru}, W~de~Siqueira {Pedra}, and C.~{Hertling}, \emph{{Heat production
  of Noninteracting fermions subjected to electric fields}}, Communications on
  Pure and Applied Mathematics (2014).

\bibitem[BR03a]{BratteliRobinsonI}
O.~Bratteli and D.W. Robinson, \emph{{Operator Algebras and Quantum Statistical
  Mechanics 1: C*– and W*–Algebras. Symmetry Groups. Decomposition of
  States}}, 2 ed., {Operator Algebras and Quantum Statistical Mechanics},
  Springer, 2003.

\bibitem[BR03b]{BratteliRobinson}
\bysame, \emph{{Operator Algebras and Quantum Statistical Mechanics:
  Equilibrium States. Models in Quantum Statistical Mechanics}}, 2 ed.,
  {Operator Algebras and Quantum Statistical Mechanics 2: Equilibrium States.
  Models in Quantum Statistical Mechanics}, Springer, 2003.

\bibitem[BSB20]{BouSBal}
C.~Bourne and H.~Schulz-Baldes, \emph{{On $\mathbf{Z}_{2}$-indices for ground
  states of fermionic chains}}, Reviews in Mathematical Physics (2020), no.~0,
  2050028.

\bibitem[BO21]{ogata2021}
C.~Bourne and Y.~Ogata, \emph{{The classification of symmetry protected topological phases of one-dimensional
fermion systems}}, Forum of Mathematics, Sigma, vol 9, Cambridge University Press, p. e25. (2021).
 
 
\bibitem[BvES94]{Bellisard}
J.~Bellissard, A.~van Elst, and H.~Schulz–Baldes, \emph{{The noncommutative
  geometry of the quantum Hall effect}}, Journal of Mathematical Physics
  \textbf{35} (1994), no.~10, 5373–5451.

\bibitem[CGRL18]{CalReyes}
J.~S. Calderón-García and A.~F. Reyes-Lega, \emph{{Majorana fermions and
  orthogonal complex structures}}, Modern Physics Letters A (2018), no.~14,
  1840001.

\bibitem[CHM{\etalchar{+}}06]{Carey1}
A.~L. Carey, K.~C. Hannabuss, V.~Mathai, et~al., \emph{{Quantum Hall effect and
  noncommutative geometry}}, Journal of Geometry and Symmetry in Physics
  \textbf{6} (2006), 16–37.

\bibitem[Cho]{Choquet1966}
G.~Choquet, Academic Press.

\bibitem[CNN18]{cha2018complete}
M.~Cha, P.~Naaijkens, and B.~Nachtergaele, \emph{{The complete set of infinite
  volume ground states for Kitaev’s abelian quantum double models}},
  Communications in Mathematical Physics \textbf{357} (2018), no.~1, 125–157.

\bibitem[DS19]{deRoecK_Salm}
W.~{De Roeck} and M.~Salmhofer, \emph{{Persistence of exponential decay and
  spectral gaps for interacting fermions}}, Communications in Mathematical
  Physics \textbf{365} (2019), no.~2, 773–796.

\bibitem[Dys62]{Dyson}
F.~J. Dyson, \emph{{The Threefold Way. Algebraic Structure of Symmetry Groups
  and Ensembles in Quantum Mechanics}}, Journal of Mathematical Physics
  \textbf{3} (1962), no.~6, 1199–1215.

\bibitem[EBN{\etalchar{+}}06]{EngelNagel}
K.J. Engel, S.~Brendle, R.~Nagel, T.~Hahn, G.~Metafune, G.~Nickel, D.~Pallara,
  C.~Perazzoli, A.~Rhandi, et~al., \emph{{One–Parameter Semigroups for Linear
  Evolution Equations}}, {Graduate Texts in Mathematics}, Springer New York,
  2006.

\bibitem[EK98]{EK98}
D.E. Evans and Y.~Kawahigashi, \emph{{Quantum symmetries on operator
  algebras}}, {Oxford Mathematical Monographs, ISSN 0964-9174}, Clarendon
  Press, 1998.

\bibitem[FMP16]{Fiorenza2016}
D.~{Fiorenza}, D.~{Monaco}, and G.~{Panati}, \emph{{Z(2) Invariants of
  Topological Insulators as Geometric Obstructions}}, Communications in
  Mathematical Physics (2016), no.~3, 1115–1157.

\bibitem[GBVF01]{Varilly}
J.~G. Gracia-Bondía, J.~C. Várilly, and H.~Figueroa, \emph{{Elements of
  Noncommutative Geometry}}, {Birkhäuser advanced texts. Basler Lehrbücher},
  Birkhäuser, 2001.

\bibitem[GJ16]{giuliani2016ground}
A.~{Giuliani} and I.~{Jauslin}, \emph{{The ground state construction of bilayer
  graphene}}, Reviews in Mathematical Physics \textbf{28} (2016), no.~08,
  1650018.

\bibitem[GMP16]{GMP}
A.~Giuliani, V.~Mastropietro, and M.~Porta, \emph{{Universality of the Hall
  Conductivity in Interacting Electron Systems}}, Communications in
  Mathematical Physics (2016), 1–55.

\bibitem[Has19]{hastings2019stability}
M.~B. Hastings, \emph{{The stability of free Fermi Hamiltonians}}, Journal of
  Mathematical Physics \textbf{60} (2019), no.~4, 042201.

\bibitem[HL11]{Hastings2011}
M.~B. {Hastings} and T.~A. {Loring}, \emph{{Topological insulators and
  $C^{*}$-algebras: Theory and numerical practice}}, Annals of Physics (2011),
  no.~7, 1699–1759, July 2011 Special Issue.

\bibitem[Kat13]{kato2013perturbation}
T.~Kato, \emph{{Perturbation Theory for Linear Operators}}, vol. 132, Springer
  Science \& Business Media, 2013.

\bibitem[Kit01]{kitaev2001}
A.~Y. Kitaev, \emph{{Unpaired Majorana fermions in quantum wires}},
  Physics-Uspekhi (2001), no.~10S, 131.

\bibitem[Kit09]{Kitaev}
A.~Kitaev, \emph{{Periodic table for topological insulators and
  superconductors}}, American Institute of Physics Conference Series
  \textbf{1134} (2009), 22–30.

\bibitem[KK18]{Katsura2018}
H.~Katsura and T.~Koma, \emph{{The noncommutative index theorem and the
  periodic table for disordered topological insulators and superconductors}},
  Journal of Mathematical Physics (2018), no.~3, 031903.

\bibitem[LH10]{Loring2010}
T.~A. {Loring} and M.~B. {Hastings}, \emph{{Disordered topological insulators
  via $C^{*}$–algebras}}, {EPL} (Europhysics Letters) (2010), no.~6, 67004.

\bibitem[LP17]{lyons2017probability}
R.~Lyons and Y.~Peres, \emph{{Probability on Trees and Networks}}, {Cambridge
  Series in Statistical and Probabilistic Mathematics}, Cambridge University
  Press, 2017.

\bibitem[{Mat}13]{matsui2013}
T.~{Matsui}, \emph{{Boundedness of entanglement entropy and split property of
  quantum spin chains}}, Reviews in Mathematical Physics \textbf{25} (2013),
  no.~09, 1350017.

\bibitem[Mat20]{matsui20}
T.~Matsui, \emph{{Split Property and Fermionic String Order}}, arXiv preprint
  arXiv:2003.13778 (2020).

\bibitem[MZ13]{michalakis2013stability}
S.~Michalakis and J.~P. Zwolak, \emph{{Stability of Frustration-Free
  Hamiltonians}}, Communications in Mathematical Physics \textbf{322} (2013),
  no.~2, 277–302.

\bibitem[NSY18a]{nachtergaele2018lieb}
B.~Nachtergaele, R.~Sims, and A.~Young, \emph{{Lieb-Robinson bounds, the
  spectral flow, and stability of the spectral gap for lattice fermion
  systems}}, Mathematical Problems in Quantum Physics \textbf{117} (2018), 93.

\bibitem[NSY18b]{nachtergaele2018quasi}
\bysame, \emph{{Quasi–Locality Bounds for Quantum Lattice Systems. Part I.
  Lieb–Robinson Bounds, Quasi–Local Maps, and Spectral Flow
  Automorphisms}}, arXiv preprint arXiv:1810.02428 (2018).

\bibitem[Oga20]{ogata2020}
Y.~Ogata, \emph{{A Z(2)-Index of Symmetry Protected Topological Phases with
  Time Reversal Symmetry for Quantum Spin Chains}}, Communications in
  Mathematical Physics \textbf{374} (2020), no.~2, 705–734.

\bibitem[PS16]{Prodan}
E.~{Prodan} and H.~{Schulz-Baldes}, \emph{{Bulk and Boundary Invariants for
  Complex Topological Insulators}}, Springer, 2016.

\bibitem[RS81]{reed1981functional}
M.~Reed and B.~Simon, \emph{{I: Functional Analysis}}, {Methods of Modern
  Mathematical Physics}, Elsevier Science, 1981.

\bibitem[RSFL10]{Ryu}
Shinsei Ryu, Andreas~P Schnyder, Akira Furusaki, and Andreas W~W Ludwig,
  \emph{{Topological insulators and superconductors: tenfold way and
  dimensional hierarchy}}, New Journal of Physics \textbf{12} (2010), no.~6,
  065010.

\bibitem[Rud91]{rudin}
W.~Rudin, \emph{{Functional Analysis}}, {International series in pure and
  applied mathematics}, McGraw–Hill, 1991.

\bibitem[TKNdN82]{TKNN}
D.~J. Thouless, M.~Kohmoto, M.~P. Nightingale, and Md. den Nijs,
  \emph{{Quantized Hall conductance in a two–dimensional periodic
  potential}}, Physical Review Letters \textbf{49} (1982), no.~6, 405.

\end{thebibliography}
%\bibliographystyle{amsalpha}

\newcommand{\etalchar}[1]{$^{#1}$}
\providecommand{\bysame}{\leavevmode\hbox to3em{\hrulefill}\thinspace}
\providecommand{\MR}{\relax\ifhmode\unskip\space\fi MR }
% \MRhref is called by the amsart/book/proc definition of \MR.
\providecommand{\MRhref}[2]{%
  \href{http://www.ams.org/mathscinet-getitem?mr=#1}{#2}
}
\providecommand{\href}[2]{#2}

\end{document}